\newtheorem{theorem}{Theorem}
\theoremstyle{definition}
\newtheorem{definition}{Definition}
\newcommand{\iid}{\stackrel{\rm iid}{\sim}}
\begin{document}

\title{Repulsion, Chaos and Equilibrium in Mixture Models}

\date{}

\author[1,2]{Andrea Cremaschi}
\author[3]{Timothy M. Wertz}
\author[1,2,4]{Maria De Iorio}

\affil[1]{Singapore Institute for Clinical Sciences (SICS), Agency for Science, Technology and Research (A*STAR),  Singapore, Republic of Singapore}
\affil[2]{Department of Paediatrics, Yong Loo Lin School of Medicine, National University of Singapore (NUS), Singapore, Republic of Singapore}
\affil[3]{Department of Mathematics, National University of Singapore (NUS), Singapore, Republic of Singapore}
\affil[4]{Department of Statistical Science, University College London (UCL), London, UK}

\maketitle

\section*{Abstract}
Mixture models are commonly used in applications with heterogeneity and overdispersion in the population, as they allow the identification of subpopulations. In the Bayesian framework, this entails the specification of suitable prior distributions for the weights and location parameters of the mixture. Widely used are Bayesian semi-parametric models based on mixtures with infinite or random number of components, such as Dirichlet process mixtures (Lo, 1984) or mixtures with random number of components (Miller and Harrison, 2018). Key in this context is the choice of the kernel for cluster identification. Despite their popularity, the flexibility of these models and prior distributions often does not translate into interpretability of the identified clusters. To overcome this issue, clustering methods based on repulsive mixtures have been recently proposed (Quinlan et al., 2021). The basic idea is to include a repulsive term in the prior distribution of the atoms of the mixture, which favours mixture locations far apart. This approach is increasingly popular and allows one to produce well-separated clusters, thus facilitating the interpretation of the results. However, the resulting models are usually not easy to handle due to the introduction of unknown normalising constants. Exploiting results from statistical mechanics, we propose in this work a novel class of repulsive prior distributions based on Gibbs measures. Specifically, we use Gibbs measures associated to joint distributions of eigenvalues of random matrices, which naturally possess a repulsive property. The proposed framework greatly simplifies the computations needed for the use of repulsive mixtures due to the availability of the normalising constant in closed form. We investigate theoretical properties of such class of prior distributions, and illustrate the novel class of priors and their properties, as well as their clustering performance, on benchmark datasets.

\section{Mixture models with repulsive component}
Mixture models are a very powerful and natural statistical tool to model data from heterogeneous populations. In a mixture model, observations are assumed to have arisen from one of $M$ (finite or infinite) groups, each group being suitably modelled by a density, typically from a  parametric family. The density of each group is referred to as a component of the mixture, and is weighed by the relative frequency (weight) of the group in the population. This model offers a conceptually simple way of relaxing distributional assumptions and a convenient and flexible method to approximate distributions that cannot be modelled satisfactorily by a standard parametric family. Moreover, it provides a framework by which observations may be clustered together into groups for discrimination or classification. For a comprehensive review of mixture models and their applications see \cite{mclachlan2000finite, fruhwirth2006finite} and \cite{fruhwirth2019handbook}. A mixture model for a vector of $d$-dimensional observations $\bm y_1, \dots, \bm y_n$ is usually defined as:
\begin{align}\label{eq:mixture1}
	\bm y_i \mid \bm w, \bm \theta, M \sim \sum_{j = 1}^M w_j f(\bm y_i \mid \theta_j) \quad i = 1, \dots, n
\end{align}
where the function $f(\bm y \mid \bm \theta)$, referred to as the \textit{kernel}, represents the chosen sampling model for the observations (often a parametric distribution such as the Gaussian distribution), $\bm w = (w_1, \dots, w_M)$ is a vector of normalised weights and $\bm \theta = (\theta_1, \dots, \theta_M)$ is an array of kernel-specific parameters. The number of components (sub-populations) in the mixture is equal to $M$, which can be either fixed or random. In this work, we consider the latter case.

An important feature of mixture models is their ability to identify sub-populations by allowing for clustering of the subjects. Conditionally on cluster allocation and model parameters, observations are independent and identically  distributed within the groups and independent between groups. Indeed, model \eqref{eq:mixture1} can be re-written introducing a vector of latent allocation variables $\bm c = \left(c_1, \dots, c_n\right)$ indicating the allocation of observations to a mixture component:
\begin{align}\label{eq:mixture2}
	&\bm y_i \mid c_i, \bm \theta, M \sim f(\bm y_i \mid \theta_{c_i}) \quad i = 1, \dots, n \nonumber \\
	&c_1, \dots, c_n \mid \bm w, M \iid \text{Multinomial}(1, \bm w) 
\end{align}
where $\text{Multinomial}(1, \bm w)$ denotes the multinomial distribution of size 1 and probability vector $\bm w$. The model is completed by specifying  prior distributions on the remaining parameters: 
\begin{align}\label{eq:mixprior}
	&\theta_1, \dots, \theta_M \mid M \sim P_0(\bm \theta) \nonumber\\
	&\bm w \mid M \sim \pi_{\bm w}\\
	&M \sim \pi_M \nonumber 
\end{align}
Thus, observations are partitioned into clusters such that $\theta_{c_i} = \theta_{c_j}$ iff $i$ and $j$ belong to the same component. The number $K$ of unique values in the vector $\theta_{c_1}, \dots, \theta_{c_n}$ represents the number of clusters. It is important to highlight the distinction between $M$ and $K$: $M$ refers to the data-generation process and denotes the number of components in a mixture, i.e. of possible clusters/sub-populations, while the number of clusters, $K$, represents number of allocated components, i.e. components to which at least one observation has been assigned \citep[see][]{argiento2022infinity}. In general, both $M$ and $K$ are unknown and object of posterior inference in the study of finite mixtures (i.e., when $M < +\infty$). Still, even when $M$ is fixed in a finite mixture model, i.e. the number of components in the population is fixed, we need to estimate $K$, the actual number of clusters in the sample (allocated components) \citep[see][]{rousseau2011asymptotic}.
On the other hand, in different settings such as Bayesian nonparametrics, $M = + \infty$ and the object of interest is only $K$. Clustering is of importance in many applications where a more parsimonious representation of the data is desired, or where the identification of subpopulations is relevant (e.g., patients risk groups). 
The specific choice of the kernel, as well as the prior distribution on $M$, the parameters $\bm \theta$ and the weights $\bm w$ plays a crucial role in defining the clustering output. Various features of mixture models have been carefully investigated in the literature, together with associated computational schemes \citep{mclachlan2000finite,fruhwirth2019handbook, fruhwirth2019here, argiento2022infinity}.

In this paper, we focus on the specification of a prior distribution $P_0$ for the location parameters $\bm \theta$. Typically, the location parameters $\theta_j$ are assumed i.i.d. from $P_0$. However, such assumption can be too restrictive in several applications where it is desirable to introduce dependence among the locations to improve interpretability of the results. A popular example of this approach is the specification of \textit{repulsive mixtures}, which have recently attracted increasing interest in the literature on model-based clustering \citep[see, for example,][]{petralia2012repulsive, bianchini2020determinantal, xie2020bayesian, quinlan2021class}. The rationale behind this approach is purely empirical and based on the notion of distance between clusters, reflecting the requirement of more separated clusters to improve interpretability, a property referred to as \textit{repulsion}. As pointed out in \cite{hennig2013find}, the properties of the clustering method used should be a reflection of the definition of clusters given by the user, rather than based on the assumption that an underlying true partition of the data exists. Following this idea, the specification of repulsive mixtures reflects the need to improve the interpretation of the resulting partition by enhancing separation between observations in different clusters. Indeed, \cite{quinlan2017parsimonious} argue that repulsion promotes the reduction of redundant mixture components (or singletons) without substantially sacrificing goodness-of-fit, favouring a-priori subjects to be allocated to a few well-separated clusters. This strategy offers a compromise between the desire to remove redundant (or singleton) clusters that often appear when modelling location parameters of mixture components independently, and the forced parsimony induced by hard types of repulsion. An alternative definition of repulsive mixture is provided by \cite{malsiner2017identifying}, whose approach encourages nearby components to merge into groups at a first 
hierarchical level and then to enforce between-group separation at the second level. A similar idea has been employed in \cite{natarajan2021cohesion} in the context of distance-based clustering, where the repulsive term appears at the likelihood level. Finally, we note that \cite{fuquene2019choosing} propose the use on Non-Local priors (NLP) to select the number of components, characterised by improved parsimony obtained through the inclusion of a penalty term, and leading  to well-separated components with non-negligible weight, interpretable as distinct subpopulations. Despite similarities between NLPs and repulsive over-fitted mixtures, the former approach requires not only a repulsive force between the locations, but also penalising low weight components, which leads to better model performance \citep{fuquene2019choosing}. Still, they fit their model for different  numbers of components and compare them  through model choice criteria based on estimates of the marginal likelihood, without performing full posterior inference on $M$.

Two popular approaches to introduce repulsion among the elements of $\bm \theta$ are determinantal point processes (DPP) and the inclusion of a repulsive term in the specification of $P_0$, borrowing ideas from Gibbs point processes (GPP). There is an interesting connection between GPPs and DPPs \citep{georgii2005conditional,lavancier2015determinantal} as DPPs can be considered as a subclass of GPPs, at least when they are defined on a bounded region. Since this link is of limited interested in what follows, we will not discuss it further.
	
DPPs were firstly introduced in \cite{macchi1975coincidence} as \textit{fermion processes}, since they are used to describe the behaviour of systems of fermions, subatomic particles exhibiting an ``antibunching'' effect, i.e. their configuration tends to be well separated. A DPP is defined as a point process where the joint distribution of the points (i.e., the particle configuration) is expressed in terms the determinant of a positive semidefinite matrix. The repulsion property of the DPPs derives from the characterisation of the determinant as the hyper-volume of the parallelepiped spanned by the columns of the corresponding matrix. As such, the probability of a configuration grows as the columns are farther apart from each other in $\mathbb{R}^d$ equipped with the Euclidean norm. See \cite{macchi1975coincidence, borodin2005eynard, hough2009zeros, kulesza2012determinantal, lavancier2015determinantal} and references therein for theoretical and computational details on DPPs. To the best of our knowledge, \cite{kwok2012priors} are the first to employ a DPP as a prior distribution for the locations in mixture models, highlighting the repulsive property of the DPP. In their work, inference is performed via a maximum a-posteriori estimation. \cite{affandi2013approximate} adopt a DPP in a mixture model with a fixed number of components (the $K$-DPP), and propose two sampling schemes for posterior inference. More recently, \cite{xu2016bayesian, bianchini2020determinantal} and \cite{beraha2022mcmc} propose the use of DPPs in Bayesian mixture models with a random number of components $M$. Posterior inference in \cite{xu2016bayesian} and \cite{bianchini2020determinantal} is performed through the labour-intensive reversible jump algorithm \citep{green1995reversible}, while \cite{beraha2022mcmc} propose an algorithm which exploits the construction by \cite{argiento2022infinity}, and implement a sampling scheme based on the Metropolis-Hastings birth-and-death algorithm by \cite{geyer1994simulation}. In general, these methods do not scale well with the dimension of the parameters of interest due to the inherent double-intractability of the posterior \citep{murray2006mcmc}, and require the implementation of tailored algorithms due to the presence of a prior on the number of components $M$.

Another common strategy for repulsive mixtures is to directly specify a repulsion term in the prior density function corresponding to $P_0$. The main idea behind this approach is to start with the usual independent prior on the locations, e.g. a product of Gaussian distributions, and include, in a fairly \textit{ad-hoc} manner, a multiplicative factor which represents a \textit{penalty} term often defined on the basis of the pairwise distances between the parameters of interest \citep{petralia2012repulsive, xie2020bayesian, quinlan2021class}. Borrowing ideas from statistical mechanics, in analogy  with the behaviour of gas particles interacting with each other, the penalty term describes the repulsion among the location parameters of the mixture. This approach, arguably the most common in practice, presents computational and theoretical drawbacks. It is discussed in detail in Section \ref{sec:gpp} as it is one of the main motivations of this work. In Figure~\ref{fig:PointProcesses_examples} we show two-dimensional realisations from an independent Gaussian prior, a DPP and the repulsive prior of \cite{quinlan2021class}. 
 
In this work, we take a different approach and specify tractable joint distributions, still within the class of GPPs, presenting connections with statistical mechanics and the mathematical theory of gases. We specify a novel class of repulsive distributions for the location parameters of mixture models with random number of components, based on GPPs and specifically on the joint distribution of the eigenvalues of random matrices, once again providing an  interpretation of the concept of repulsion in terms of interacting particles. These distributions are linked to the  joint Gibbs canonical distributions used to model \textit{Coulomb gases}, also called log-gases \citep{dyson1962statistical, forrester2010log}. 

The paper is structured as follows. In Section~\ref{sec:statmec} we recall basic concepts of statistical mechanics, while in Section~\ref{sec:gpp} we describe the link between Gibbs Point Processes and repulsive mixtures. In Section~\ref{sec:repulsive_priors} we introduce the novel class of repulsive prior distributions, based on the joint law of eigenvalues of random matrices, whose theoretical properties are investigated in Section~\ref{sec:gibbs-measures-and-chaos}. In Section~\ref{sec:LDP} we prove that the proposed class of prior distributions satisfies the large deviation principle. In Section~\ref{sec:model-and-algorithm} we specify the full mixture model with random number of components and a repulsive prior on the locations. We demonstrate the approach in simulations and on a real data application in Section~\ref{sec:examples}. Finally, we conclude the paper in Section~\ref{sec:concl}.

\begin{figure}[ht]
	\centering
	\subfloat[]{\includegraphics[width=0.35\linewidth]{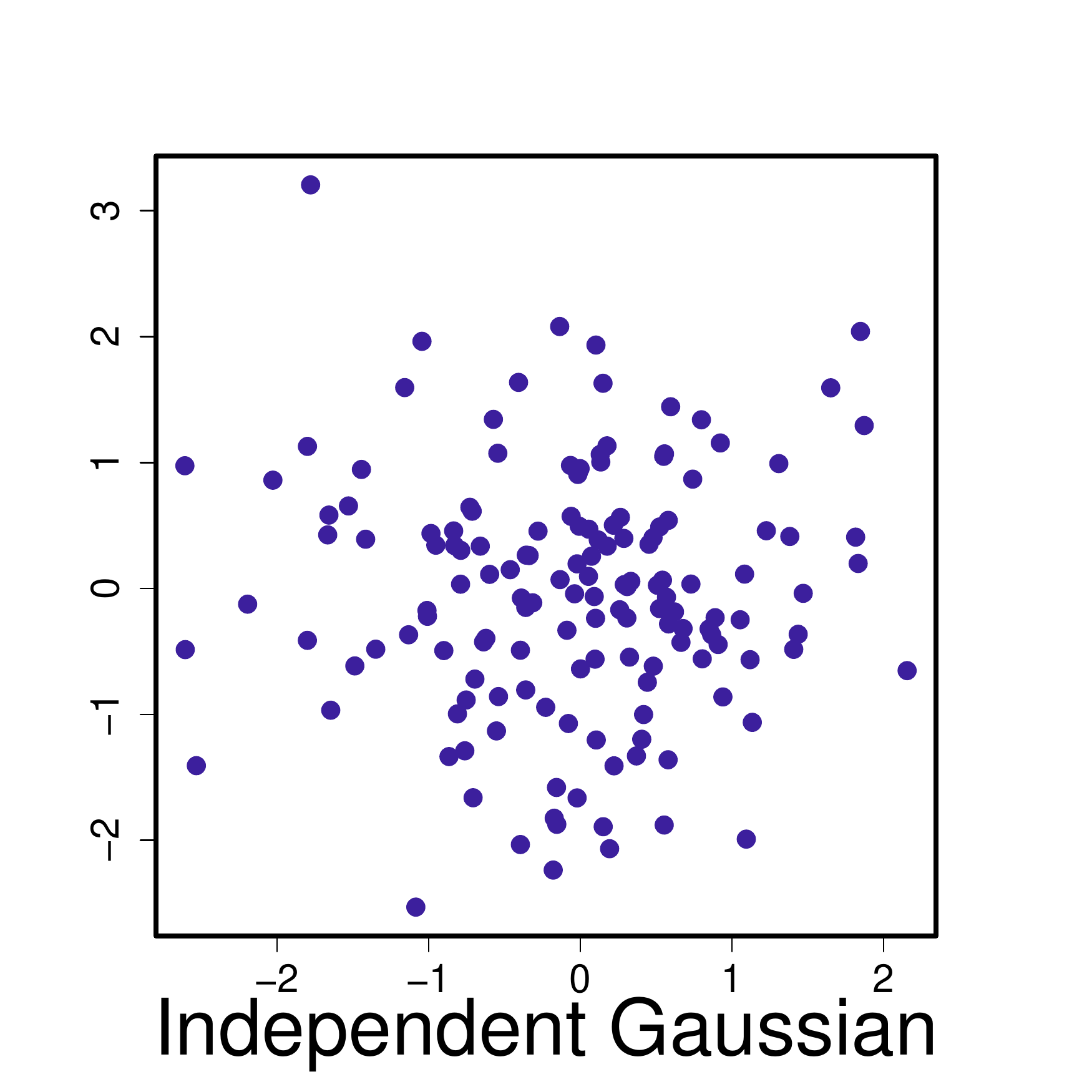}}
	\subfloat[]{\includegraphics[width=0.35\linewidth]{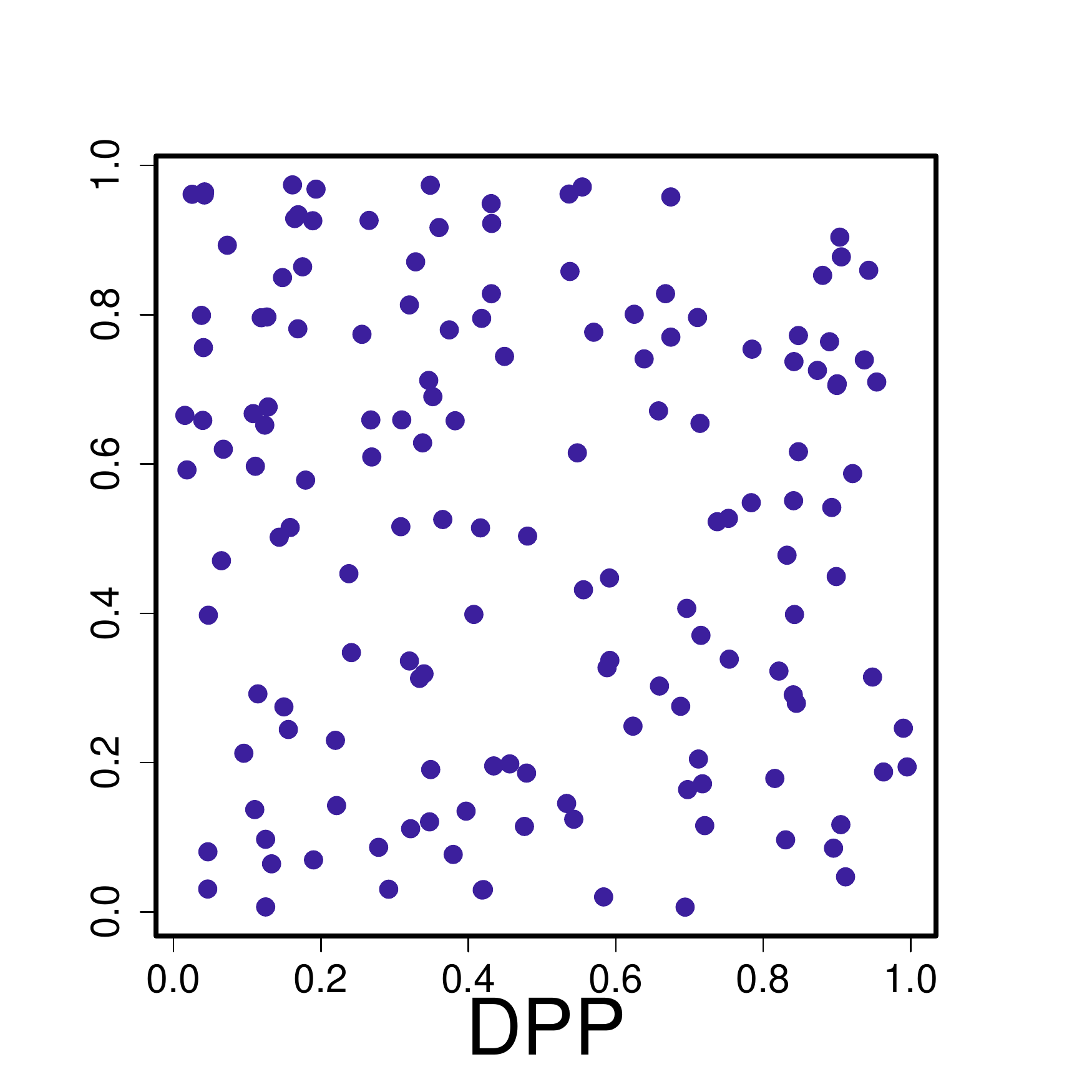}}
	\subfloat[]{\includegraphics[width=0.35\linewidth]{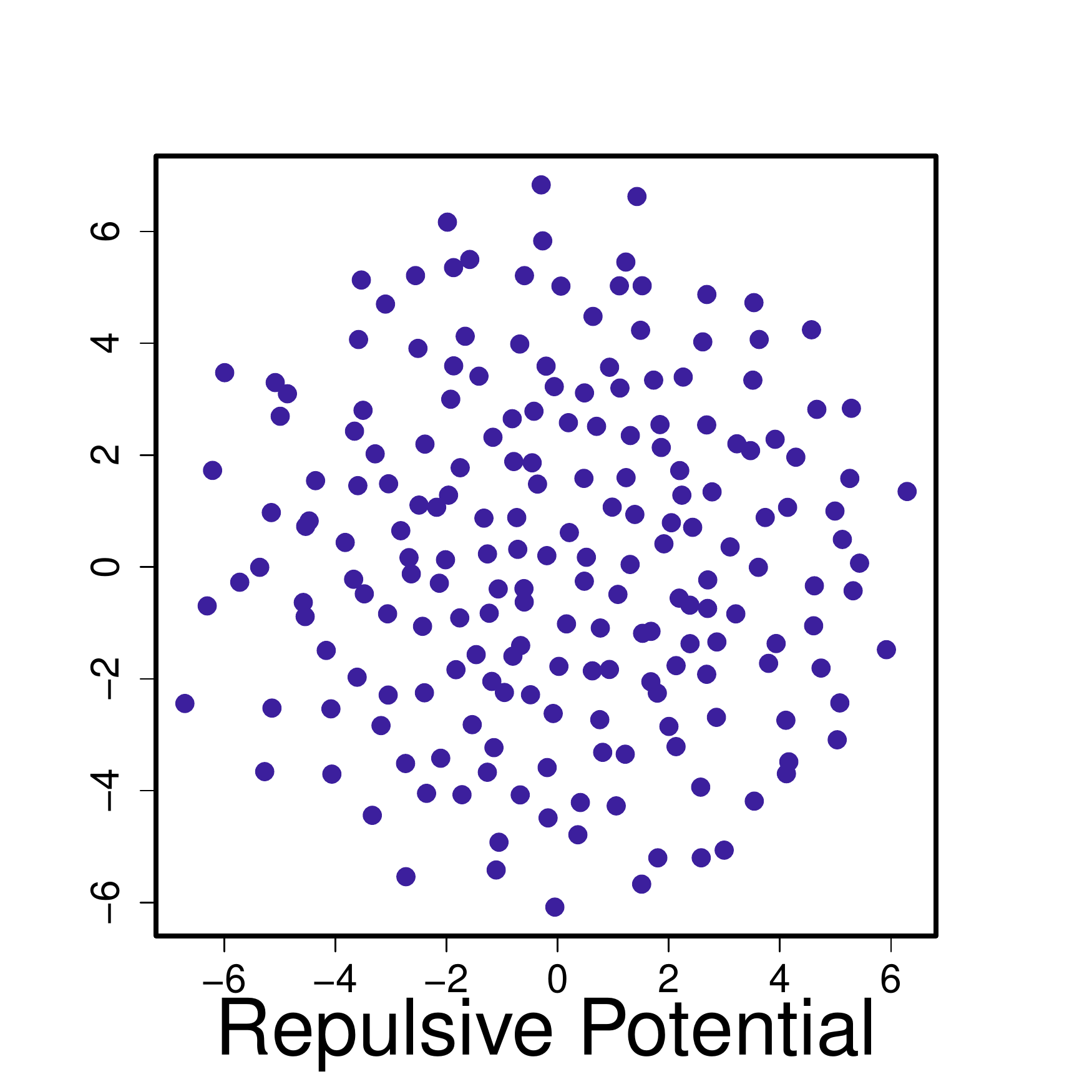}}
	\caption{Samples of size $N = 100$ points from three different distributions in $\mathbb{R}^2$: (a) independent Gaussian; (b) Determinantal Point Process; (c) Repulsive mixture as in \cite{quinlan2021class}.} 
	\label{fig:PointProcesses_examples}
\end{figure}

\section{General concepts of statistical mechanics}\label{sec:statmec}

In this Section, we introduce basic concepts from statistical mechanics necessary for the following development.

An individual configuration of the particles is referred to as \textit{microstate}, while the macroscopic observables of interest (e.g., energy of the system) are called \textit{macrostates}. Each macrostate can be associated to several microstates, since different particle configurations can lead to the same macroscopic quantity of interest. An \textit{ensemble} is a set of microstates of a system that are consistent with a given macrostate. In this framework, statistical mechanics examines an ensemble of microstates corresponding to a given macrostate by providing their probability distribution. Therefore, the macroscopic properties of a system are derived from probability distributions describing the interactions among the particles. Note that these probability distributions are \textit{conditional} on a particular macrostate. The systems studied in statistical mechanics often involve particles interacting with an external environment, called a \textit{reservoir}. In particular, the microstates are influenced by the macroscopic features of the surrounding environment, thus characterising their probability distribution.
The three main ensembles, corresponding to different assumptions on the system conditions, are \citep{landau1968statistical, mandl1991statistical}:
\begin{itemize}
    \item the \textit{microcanonical} ensemble: assumes an isolated system in which the energy is pre-specified. There are no exchanges, either of energy or particles, with the surrounding environment.
	
    \item the \textit{canonical} ensemble: assumes a system interacting with an external reservoir. The system is kept at a fixed temperature, while the energy is allowed to vary with the particle configuration. The total energy of the combined system (reservoir and particles) is fixed. The probability of each configuration is given by the Boltzmann distribution.
	
    \item the \textit{grand canonical} ensemble: assumes a system where both energy and particles are allowed to fluctuate between the system and the reservoir. The probability distribution describing the particle configuration presents an additional term involving the number of particles in the system. Both the total energy and number of particles of the combined systems are fixed. The distribution of the particle configuration is known as the Gibbs distribution.
\end{itemize}
Statistical mechanics often focuses on a system in \textit{equilibrium}, i.e. for which the probability distribution over the ensemble does not have an explicit time dependence \citep{tuckerman2010statistical}. We assume that this is the case throughout this work. From a mathematical point of view, this implies that the probability distribution of the particles configuration in a particular ensemble is a function of the total energy as expressed by the \textit{Hamiltonian}, an operator containing kinetic and potential energy terms describing the system \citep{tuckerman2010statistical}. 

Let $\bm \theta = \left(\theta_1, \dots, \theta_M\right)$ be a set of $M$ random variables  describing the position of the particles on the $d$-dimensional lattice $\mathbb{Z}^d$ and let $\Omega$ be the set of possible configurations of such particles.
The internal energy of the system $\overline{E}$ can be computed via the Hamiltonian as a function of the microstate $\omega \in \Omega$ describing the pairwise interactions between the particles and the relationship with the reservoir. A general expression for the Hamiltonian of a system is:
\begin{equation}\label{eq:EnergyIntro}
	\mathcal{H}(\bm \theta \mid \zeta, h) = h \sum_{i = 1}^M \psi_1(\theta_i) + \zeta \sum_{1\leq i < j\leq M} \psi_2(\theta_i, \theta_j)
\end{equation}
where $h, \zeta \in \mathbb{R}$, while $\psi_1$ and $\psi_2 $ represent the \textit{self energy} and the \textit{interaction function}, respectively \citep{domb2000phase}. The latter specifies how the particles interact with each other, often through the use of pairwise terms. On the other hand, $\psi_1$ captures how the particles are affected by the presence of fields external to the system. An example of Hamiltonian is the one corresponding to the popular Ising model, used to study the behaviour of a system of ferromagnetic particles immersed in a magnetic field. In this case, the Hamiltonian is:
\[
\mathcal{H}(\bm \theta \mid \zeta, h) = h \sum_{i = 1}^M \theta_i + \zeta \sum_{i,j : |i - j| = 1} \theta_i \theta_j
\] 
where $\theta_i \in \{-1,1\}$, indicating the positive or negative spin of the particles, and $\zeta > 0$ is the \textit{inverse temperature}.
  
A fundamental concept in physics is that of Entropy, which plays a crucial role in determining the probability distribution of the configuration of particles, given different types of ensemble \citep{maxwell1860ii}. The Entropy of a system is given by \citep{boltzmann1866}:
\begin{equation}\label{eq:Boltzmann_Entropy}
	S = \kappa_{B} \ln W
\end{equation}
where $\kappa_{B} \approx 1.38 \times 10^{-23}$J/K is the Boltzmann constant, and $W$ is the number of microstates associated with a given macrostate $\overline{E}$ (e.g., energy of the system). Note that $W$ depends on the type of ensemble assumed to describe the system of particles and on the assumption of equilibrium. Given $W$, the probability distributions of each of the three ensembles can be derived \citep{landau1968statistical,landau1980chapter, bowley1999introductory}:

\begin{itemize}
    \item Microcanonical ensemble. Boltzmann's \textit{postulate of equal a-priori probabilities} assumes that each configuration of the particles is equally probable. Therefore, since in this case this is an isolated system with fixed energy and number of particles, the probability of observing a microstate $\omega$ is given by the inverse of the number of microstates $W$, i.e. $p_{\omega} = 1/W$.
    \item Canonical ensemble. The Boltzmann distribution is given by:
	\begin{equation}\label{eq:Boltzmann_distr}
		p_{\omega} = \frac{e^{-\mathcal{H}(\bm \theta \mid \zeta, h) / (\kappa_B T)}}{Z^{c}_M}
	\end{equation}
	where the normalising constant $Z^{c}_M = \sum_{\omega \in \Omega} e^{-\mathcal{H}(\bm \theta \mid \zeta, h) / \kappa_B T}$ is the \textit{partition function}. Due to its role as unit conversion factor in the description of thermodynamic systems, the constant $\kappa_B $ is usually called \textit{inverse temperature} and $T$ is the temperature.
    \item Grand canonical ensemble. The additional assumption of exchange of particles between the system and the reservoir affects the expression of the probability distribution, by including an additional term reflecting the migration of particles between the two sub-systems:
	\begin{equation}\label{eq:Gibbs_distr}
		p_{\omega} = \frac{e^{\left(\mu N_{\omega} - \mathcal{H}(\bm \theta \mid \zeta, h) \right)/ (\kappa_B T)}}{Z^{gc}_M}
	\end{equation}
	where $\mu$ is the \textit{chemical potential}, $N_{\omega}$ is the number of particle in the microstate $\omega$ and $Z^{gc}_M = \sum_{\omega \in \Omega} e^{\left(\mu N_{\omega} -\mathcal{H}(\bm \theta \mid \zeta, h) \right) / \kappa_B T}$ is the partition function.
\end{itemize}

From an information-theoretic perspective, the above distributions have the property of maximising the Shannon entropy, a measure of the amount of information or uncertainty about the possible outcomes of a random variable \citep{shannon1948mathematical}. In statistical mechanics, this is referred to as the Gibbs entropy. Given the postulate of a-priori probabilities, the Boltzmann entropy is a special case of the Shannon entropy, where all the probabilities are equal. For more details on this topic see \cite{jaynes1965gibbs}.
The higher the value of the entropy for the distribution of microstates over an ensemble, the higher the uncertainty around the distribution of the particle configurations. In practice, choosing the distribution maximising the entropy of a system, i.e. the Boltzmann distribution, corresponds to choosing the flattest possible distribution over the microstates compatible with the available information, i.e. the macrostate $\overline{E}$. In a Bayesian setting, this concept is analogous to that of Gibbs posterior \citep[see, for instance,][and references therein]{jiang2008gibbs, rigon2020generalized}, where a likelihood-free approach is devised for the estimation of a set of parameters of interest, directly specifying their posterior distribution via a standard prior and a term depending on a loss function with desired properties, yielding an expression similar to Eq.~\eqref{eq:Boltzmann_distr}. \cite{bissiri2016general} show that this approach has good theoretical properties relating to a maximum-entropy principle. Finally, this approach is reminiscent of the ``product of approximate conditionals'' (PAC) likelihood \citep{cornuet2007note,li2003modeling}.  

This work focuses on the description and discussion of Bayesian mixing distributions characterised by a repulsive term. The latter presents an interesting analogy with the distributions arising in statistical mechanics under different ensemble assumptions. Indeed, there is a parallelism between particles in a thermodynamic system and location parameters in a mixture model with repulsion terms. The locations and the number of components of the mixture can be associated to the position and number of particles of a physical system, while the repulsion term in their prior distribution can be related to their interaction in a given configuration (i.e. the function $\psi_2$ in Eq.~\eqref{eq:EnergyIntro}). The three ensembles are recovered by making different assumptions on the prior for the locations. For instance, the microcanonical ensemble is recovered by assuming a mixture model with fixed number of components and no repulsion term with the locations i.i.d. draws from $P_0$, the canonical ensemble corresponds to a mixture model with fixed number of components and non-zero repulsion (introducing dependence among the locations), while the grand canonical ensemble additionally allows for a prior distribution on the number of components.

\section{Gibbs Point Processes and Repulsive Mixtures}\label{sec:gpp}

Gibbs Point processes \citep[][page 127]{daley2003introduction} are a fundamental class of point processes arising in statistical physics to describe forces acting on and between particles. Moreover, point processes can be seen as limits of ensembles \citep{Holcomb2015}. Gibbs processes are generated by interaction potentials, as described in the previous section. The total potential energy corresponding to a given configuration of particles is assumed to be decomposable into terms representing the interactions between the particles taken in pairs, triples, and so on; first-order terms representing the potential energies of the individual particles due to the action of an external force field may also be included. Given a \textit{potential} $\mathcal{H}$, a Gibbs process, i.e. homogeneous spatial point pattern \citep{daley2003introduction}, is directly specified by using the \textit{Boltzmann distribution} for the configuration of a set of particles (also referred to as Gibbs canonical distribution), with Janossy density: 
\begin{equation}\label{eq:Gibbs_Density_gen}
	p_M\left(\bm \theta \mid \zeta \right) = Z^{-1}_M(\zeta) \exp\left\{-\zeta \mathcal{H}(\bm \theta) \right\} 
\end{equation}
where $\bm \theta = (\theta_1,\ldots, \theta_M)$.

In Eq. \eqref{eq:Gibbs_Density_gen} $\zeta$ is a parameter controlling the amount of repulsion and $Z_M(\zeta)$ is a normalisation constant called \textit{partition function}, which plays an important role in the study of particle systems, as well as in the study of repulsive mixture models, as discussed later. Point processes, involved in the description of particle configurations, are usually characterised by interaction between the points. The interaction can be attractive or repulsive, depending on geometrical features, whereas the null interaction is associated to the well-known Poisson point process. Frequently, it is supposed that only the first- and second-order terms need to be included, so that the process is determined by the point pair potentials. In this case we have {\em repulsive interactions with pair potential}:
\begin{equation}\label{eq:pairwisepotential}
 	\mathcal{H}\left(\bm \theta\right) = \sum_{i=1}^M \psi_1(\theta_i) + \sum_{1\leq i < j\leq M} \psi_2(\theta_i, \theta_j)   
\end{equation}
for appropriate choice of $\psi_1$ and $\psi_2$. Commonly, three types of potentials are used to model the pairwise interactions \citep{daley2003introduction}:
\begin{align}\label{eq:potentials}
	& \phi_1(r) = - \log \left(1 - e^{-(r/\zeta)^2} \right) \nonumber \\
	& \phi_2(r) = \left(\zeta / r \right)^{m_1} - \left(\zeta / r \right)^{m_2}, \quad m_1 > m_2 \geq 0 \dots \\
	& \phi_3(r) = \infty \mathbb{I}_{[0,\zeta]}(r) \nonumber
\end{align}
where $r$ is the distance between two particles and $\zeta > 0$ is a tuning parameter. These three pairwise potentials are all functions of the pairwise interactions, captured by the distance between points $r$, and can be used to specify $\psi_2$. 

The works of \cite{petralia2012repulsive,xie2020bayesian} and \cite{quinlan2021class} are based on the above pairwise potentials and propose repulsive prior distributions of the form:
\begin{eqnarray*}
 	p_M\left(\bm \theta \mid \zeta\right)  & \propto & \left(\prod_{m=1}^M g(\theta_m)\right)  \left( \prod_{1\leq i < j\leq M} h_s \left(\lVert \theta_i - \theta_j \rVert\right)
 	\right) \\ 
 	& \propto &\exp \left\{\sum_{m=1}^M  \log g(\theta_m)  +\sum_{1\leq i < j\leq M} \phi_s \left(\lVert \theta_i - \theta_j \rVert\right) \right\}
 	\\ 
 &&	s = 1  \quad \mbox{ in \cite{quinlan2021class}}\\
 &&	s = 2  \quad \mbox{ in  \cite{petralia2012repulsive}} \\
 &&	s = 3  \quad \mbox{ in \cite{beraha2022mcmc}} 
\end{eqnarray*}
The repulsive prior is specified by setting $\psi_1 = \log p_{\bm \mu}$ as in \cite{quinlan2021class}, where $p_{\bm \mu}$ is then chosen to be a (often) Normal distribution with mean zero. This yields $\psi_1 = \sum_{1}^M \theta_i^2$ and that the repulsion term $\psi_2$ is a function of $(\theta_i - \theta_j)^2$ in \cite{quinlan2021class} and of the Euclidean distance between pairs of locations in \cite{petralia2012repulsive}, becoming special cases of repulsive interactions with pair potential. Both papers consider extensions. These approaches present serious drawbacks that will be discussed later. 

On the other hand, \cite{xie2020bayesian} use the same form for $\psi_1$, but the repulsive part of the prior, still belonging to the class of GPPs, does not simplify to a function of pairwise interactions. The fact that $\psi_2$ does not simplify as before forces them to make stronger assumptions on $\mathcal{H}$ to ensure the existence of the partition function $Z_M(\zeta)$. In particular, the authors show that the partition function $Z_M(\zeta)$ is bounded as a function of the number of components $M$ when the repulsion term is smaller than 1 and when square integrability of a function of the norm $\parallel \theta_i - \theta_j \parallel_2$ is assumed. In more details, the latter assumption states that $\int_{\mathbb{R}}\int_{\mathbb{R}}\left(\log\left(\hat{g}\left( \parallel \theta_i - \theta_j \parallel_2 \right)\right) \right)^2 d\theta_i d\theta_j < +\infty$, for $\hat{g} : \mathbb{R}^+ \rightarrow \left[ 0, 1 \right]$ a strictly monotonically increasing function s.t. $\hat{g}(0) = 0$. We point out that the first assumption is not necessary for the prior distributions presented in this work. Moreover, the authors use the second condition to prove a theorem similar to the large deviation principle discussed later.

When $\psi_2$ is chosen to be one of the functions in Eq.~\eqref{eq:potentials}, the Janossy density of the resulting GPP factorises. These potentials have an interpretation in statistical mechanics. The first type of potential arises in the study of the behaviour of gases \citep{ruelle1970statistical, prestion1976random}, and is reminiscent of earlier work involving the Morse potential, used to describe interatomic interactions \citep{morse1929diatomic}. The second type of potential is the Lennard-Jones potential \citep{jones1924determination}, which has been investigated in relation to the study of gases (such as argon) whose repulsive and attractive forces follow an inverse power of the distance between the particles. Potential $\phi_3$ is called Strauss potential \citep{strauss1975model}, and it has been firstly introduce to test the hypothesis that a collection of points in space is distributed uniformly. It is constructed by replacing the distance between two points $r$ by an indicator variable describing the proximity of the points in terms of a given radius. This potential is an example of ``hard-core'' potential, due to the abrupt change in repulsive force imposed on the particles, describing the situation in which the particles are hard spheres of radius $\zeta$, with their centres corresponding to the points in the configuration.

In conclusion, in Bayesian mixture models, the specification of a repulsive joint prior distribution often reduces to  multiplying a standard density (usually corresponding to the independence assumption) by a repulsive term, usually a function of the pairwise distances between the locations (see previous section). In principle, this approach allows the specification of a wide range of joint distributions for the parameters of interest, governed by some tuning parameters. Indeed, Gibbs processes are appealing in terms of flexibility and interpretability, but are typically intractable due to the normalising constant $Z_M(\zeta)$, i.e. the partition function. 
To deal with this issue, \cite{ogata1981estimation, ogata1984likelihood} provide an approximation of the joint probability distribution for maximum likelihood estimation, while, under some conditions, bounds on the partition function can be derived \citep{DobrushinShlosman1985, DobrushinShlosman1986}. In the context of finite mixture models (i.e., when $M$ is fixed), as in \cite{petralia2012repulsive} and \cite{quinlan2021class}, the knowledge of $Z_M(\zeta)$ is not needed to perform posterior inference, which is usually based on Metropolis-Hastings algorithms as the number of particles is fixed. When a prior distribution is assumed on the number of components of the mixture $M$, knowledge of the normalising constant $Z_M(\zeta)$ enables simpler computations, as opposed to the approach of \citep{beraha2022mcmc} which requires non-trivial adaptation of birth-and-death Metropolis-Hastings algorithms and the exchange algorithm of \cite{murray2006mcmc}. Alternatively, \cite{xie2020bayesian} construct an \textit{ad-hoc} prior for $M$ which contains the normalising constant $Z_M(\zeta)$ in its specification to remove such issues, since $Z_M(\zeta)$ cancels out when evaluating Metropolis-Hastings acceptance probabilities. Although theoretical results on the relationship between the number of components and $Z_M(\zeta)$ are shown by \cite{xie2020bayesian}, this choice of prior distribution for $M$ does not allow the inclusion of relevant a-priori information into the model, making interpretation and tuning of the hyper-parameters difficult. As already pointed out by \cite{murray2007advances, murray2012bayesian}, the prior choice of \cite{xie2020bayesian} tends to dominate posterior inference. Furthermore, the intractability of $Z_M(\zeta)$ does not allow to perform inference on parameters such as the strength of the repulsion. 
Another drawback of these previous approaches is the fact, already pointed out by \cite{quinlan2021class}, that the joint distribution for the location parameters specified in this way is not \textit{sample-size consistent}, i.e. the $(M-1)$-th dimensional distribution cannot be derived from appropriate marginalisation of the $M$-dimensional distribution, leading to theoretical and computational issues. For instance, it is not possible to specify a mixture model with random number of components in a standard way, and consequently posterior inference across dimensions is unfeasible.

\section{Repulsive priors obtained from random matrices}\label{sec:repulsive_priors}

In this work, we take a different approach and specify tractable joint distributions, still within the class of GPPs, 
exploiting  results from random matrix theory. Such distributions still present interesting connections with the mathematical theory of gases. Specifically, we consider the joint Gibbs canonical distributions used to model \textit{Coulomb gases}, also called log-gases \citep{dyson1962statistical, forrester2010log, mehta1963statistical, de1995statistical}. These distributions are obtained starting from a potential $\mathcal{H}$ with logarithmic pairwise interactions, following the same approach used to define the Boltzmann distribution in \eqref{eq:Gibbs_Density_gen}. 
Coulomb gases \citep{dyson1962statistical} provide an example of analytically tractable systems, with particles described as infinitely long parallel charged lines.
In the case of \textit{one component} Coulomb system, all $M$ particles are of like charge, $\sqrt{\zeta}$, say. For such gases, the interaction between the particles is described by a logarithmic function in the expression of the Hamiltonian:
\begin{equation}\label{eq:Coulomb_potential}
	\phi_{\zeta}(r_{ij}) = - \zeta \log \left( r_{ij} \right)
\end{equation}
where $r_{ij}$ is the distance between particles $i$ and $j$. The parameter $\zeta > 0$ characterises the strength of the interaction between particles. In this case, a general expression for the potential $\mathcal{H}$ for $M$ particles and the resulting Janossy density are:
\begin{align}\label{eq:pairwisepotential_Coulomb}
	&\mathcal{H}(\theta_1,\ldots, \theta_M \mid \zeta) = \sum_{i=1}^M \psi_1(\theta_i) + \zeta \sum_{i < j} \log(\lVert \theta_i - \theta_j \rVert) \\
	&p_M(\theta_1,\ldots,\theta_M \mid \zeta) = Z^{-1}_M(\zeta) \exp\left\{-\mathcal{H}(\theta_1,\ldots, \theta_M \mid \zeta) \right\} \nonumber 
\end{align}
Different expressions of the function $\psi_1$ lead to different joint distributions for the location parameters $\bm \theta$. Under specific choices (discussed later), these present tractable normalising constants (i.e., partition functions), obtained by normalising \eqref{eq:pairwisepotential_Coulomb}, making them an appealing class of distributions in statistical inference. In particular, for suitable choices of $\psi_1$, the joint distributions obtained by normalising \eqref{eq:pairwisepotential_Coulomb} coincide with those of the eigenvalues of Gaussian, Wishart and Beta random matrices \citep{mehta2004random, forrester2010log}. The link between Coulomb gases and random matrix theory, referred to as the Coulomb gas analogy, is widely recognised in the field of statistical mechanics.  

The Gibbs canonical distributions of the Coulomb gases also present a link with DPPs. In particular, for different choices of $\zeta$, $p_M$ in \eqref{eq:pairwisepotential_Coulomb} can be expressed as the determinant of specific random matrices \citep{mehta2004random}. These results are, in general, very technical. A tractable example is obtained for $\zeta = 2$, where a symmetric positive definite matrix can be constructed via a basis of orthogonal polynomials, whose joint law of eigenvalues coincides with its determinant. Suitable choices of the orthogonal polynomial basis yield the eigenvalue distributions for the Gaussian (Hermite polynomials), Wishart (Laguerre polynomials) or Beta (Jacobi polynomials) random matrices \citep{forrester2010log}, introduced in the following sections. The three families of distributions generated in this way inherit the names of Hermite, Laguerre and Jacobi ensembles, respectively. 
Moreover, in one and two dimensions, at inverse temperature $\zeta = 2$, the Coulomb system with
logarithmic interactions (a.k.a. Dyson log gas in 1D) is known to be a determinantal point process, meaning that its correlation functions are given by certain determinants \citep{ghosh2018point}. We refer the reader to the work by \cite{forrester2010log} for an extensive discussion on the topic.

Here, we introduce eigenvalue distributions of random matrices whose probability density function is proportional to:
\begin{equation}\label{eq:EigenDistr_General}
	\prod_{l=1}^M \eta\left(\theta_l \mid \zeta\right) \prod_{1\leq i < j\leq M}\mid \theta_i - \theta_j \mid^{\zeta}, \quad \zeta > 0
\end{equation}
where $\eta\left(\theta \mid \zeta\right)$ is a weight function characterising the resulting probability law. Some common weight functions used in random matrix theory are:
\begin{equation}\label{eq:weight_functions}
\eta(\theta \mid \zeta) = 
\left\{\begin{array}{l @{\quad} l r l}
	e^{- \frac{\zeta}{2} \theta^2} & \theta \in \mathbb{R} & \mbox{Hermite} \\
	\theta^{\frac{\alpha \zeta}{2}} e^{- \frac{\zeta}{2} \theta} & \theta > 0, \alpha \zeta / 2 > -1 & \mbox{Laguerre} \\
	\theta^{\alpha} (1 - \theta)^{\beta} & \theta \in (0,1), \alpha, \beta > -1 & \mbox{Jacobi} \\
\end{array}\right.
\end{equation}
The three types of weight function refer to the properties of the random matrices for which Eq.~\eqref{eq:EigenDistr_General} is the eigenvalue distribution and have an interpretation in statistical mechanics. In particular, as we will describe in the next Sections, each weight function refers to specific transformations of random matrices. In agreement with the theory describing the joint law of systems of particles, these sets of random matrices are referred to as \textit{ensembles}.  The parameter $\zeta$ plays a crucial role in the derivation of the eigenvalue distributions. In particular, each subset of random matrices corresponds to a specific value of $\zeta$. 

\subsection{Eigenvalue distributions derived from Gaussian ensembles}

In this Section, we describe the eigenvalue distribution for the set of random matrices belonging to the Gaussian ensemble. Let $M > 0$ and $\bm Z$ be a $M \times M$ matrix with i.i.d. entries $Z_{ij} \sim \mathcal{N}(0,1)$. Define the $M \times M$ matrix $\bm X = \left(\bm Z  + \bm Z'\right)/\sqrt{2}$. The resulting symmetric matrix $\bm X$ is called a real Wigner matrix and the joint law of its eigenvalues, denoted by $\bm \theta = (\theta_1, \dots, \theta_M)$, is given by:
\begin{equation}\label{eq:Eigen_GOE}
	p(\bm \theta \mid  M) = \mathcal{G}^{-1}_{M} \prod_{l = 1}^M e^{-\frac{\theta_l^2}{2}} \prod_{i < j} \mid \theta_i - \theta_j \mid
\end{equation}
where $\mathcal{G}_{M}$ is the normalising constant. The family of eigenvalue distributions originated by this random matrix construction is known as the Gaussian orthogonal ensemble. When the matrix $\bm Z$ has complex or quaternion Gaussian entries, the distribution of the eigenvalues of $\bm X$ changes. In general, the joint eigenvalue distribution for the Gaussian ensemble has the following expression:
\begin{equation}\label{eq:Eigen_Gaussian}
	p(\bm \theta \mid M, \zeta) = \mathcal{G}^{-1}_{M, \zeta} \prod_{l = 1}^M e^{- \frac{\zeta}{2} \theta_l^2} \prod_{i < j} \mid \theta_i - \theta_j \mid^{\zeta}
\end{equation}
where, when $\zeta = 1$, we recover the Gaussian orthogonal ensemble. Values of $\zeta$ equal to 2 and 4 correspond to $\bm Z$ with complex (Gaussian unitary ensemble) and quaternion (Gaussian symplectic ensemble) entries, respectively. In general, for $\zeta > 0$, Eq.\eqref{eq:Eigen_Gaussian} gives the eigenvalue distribution of $\bm X$ in the case of tri-diagonal random matrices with standard normal diagonal entries and $\chi$-squared off-diagonal entries with degrees of freedom equal to $(M-1)\zeta, (M-2)\zeta, \dots, \zeta$ \citep{forrester2010log}. 
Such distribution can be used to specify a joint prior distribution for the location parameters of a mixture model. 
The normalising constant has a closed form expression \citep{mehta1963statistical, mehta2004random}:
$$
\mathcal{G}_{M, \zeta} = \zeta^{-\frac{M}{2} - \zeta M(M-1)/4}\left( 2 \pi \right)^{\frac{M}{2}}\prod_{l = 0}^{M-1}\frac{\Gamma\left(1 + \left(j+1\right)\frac{\zeta}{2}\right)}{\Gamma\left(1 + \frac{\zeta}{2}\right)}
$$
which allows for more efficient computations. Note that, when $M = 1$, we recover the univariate Gaussian distribution with precision parameter $\zeta$.

\textit{Remark:} in Eq.~\eqref{eq:Eigen_Gaussian}, when $\zeta = 2$, the repulsion is a function of the Euclidean distance as proposed by \cite{quinlan2021class}. The difference is that the distance in \cite{quinlan2021class} appears in the exponential function, with the repulsion term penalising more heavily close locations that our penalty term. Still, the normalising constant is not available analytically, adding complexity to computations.

\subsection{Eigenvalue distributions derived from Laguerre ensembles}

Throughout, we assume $N > M$. Let $\bm Z$ be a $N \times M$ matrix with i.i.d. entries $Z_{ij} \sim \mathcal{N}(0,1)$. Let $\bm \Sigma$ be a deterministic positive-definite $M \times M$ matrix and let $\sqrt{\bm \Sigma}$ denote its unique positive-definite square root, such that $\bm \Sigma = \sqrt{\bm \Sigma}'\sqrt{\bm \Sigma}$. Define the $N \times M$ matrix $\bm X = \bm Z \sqrt{\bm \Sigma}'$. Then, the $M \times M$ matrix $\bm S = \bm X' \bm X$  has a Wishart distribution with $N$ degrees of freedom and scale matrix $\bm \Sigma$ and we write $\bm S \sim W_M\left(N, \bm \Sigma\right)$, with the following p.d.f.:
\begin{equation}\label{eq:Wishart_pdf}
\begin{gathered}
	p\left(\bm S \mid N, \bm \Sigma\right) = \\
 \frac{\mid \bm \Sigma \mid^{-\frac{N}{2}}}{2^{M\frac{N}{2}} \Gamma_M \left( \frac{N}{2} \right) } \mid \bm S \mid^{\frac{N - M - 1}{2}} \exp\left\{ \frac{tr \left(\bm \Sigma^{-1} \bm S \right)}{2} \right\} \mathbbm{1}_{\Omega^{\bm S}_M}(\bm S)
 \end{gathered}
\end{equation}
where $\Gamma_p(x)$ is the multidimensional gamma function of dimension $M$ and argument $x>0$ \citep{gupta2018matrix}, $tr(\cdot)$ and $\mid \cdot \mid$ indicate the trace and the determinant of a square matrix, respectively, while $\Omega^{\bm S}_M$ represents the cone of positive definite square matrices of dimension $M$. The joint law of the eigenvalues of $\bm S$, denoted by $\bm \theta = (\theta_1, \dots, \theta_M)$, is given by:
\begin{equation}\label{eq:Eigen_LOE}
\begin{gathered}
p(\bm \theta \mid N, M, \bm \Sigma) = \\
\mathcal{W}^{-1}_{N,M} |\bm \Sigma|^{-\frac{N}{2}} {}_0F_0\left( -\frac{1}{2}\bm \Sigma^{-1}, \bm S \right)\prod_{l = 1}^M \theta_l^{N-M-1/2} \prod_{i < j} \mid \theta_i - \theta_j \mid
\end{gathered}
\end{equation}
where ${}_0F_0(\cdot,\cdot)$ is the hypergeometric function of two matrix arguments \citep{james1964distributions} and $\mathcal{W}_{N,M}$ is the normalising constant. Note that $\theta_j>0$ since they are eigenvalues of a positive definite matrix. The set of random matrices for which Eq.~\eqref{eq:Eigen_LOE} is the joint eigenvalue distribution is known as the Laguerre orthogonal ensemble.

When $\bm \Sigma = \mathbb{I}_M$, i.e. the identity matrix of dimension $M$, the eigenvalue distribution reduces to:
\begin{equation}\label{eq:Eig_ID_Wishart}
p(\bm \theta \mid N, M) = \mathcal{W}^{-1}_{N,M} e^{-\frac{1}{2}\sum_{l = 1}^M \theta_l}  \prod_{l = 1}^M \theta_l^{N-M-1/2} \prod_{1\leq i < j\leq M} \mid \theta_i - \theta_j \mid
\end{equation}
In this case, the normalising constant can be computed exactly \citep{fisher1939sampling, hsu1939distribution, roy1939p}. When the matrix $\bm X$ has complex or quaternion entries, we can introduce the parameters $\zeta > 0$, such that $\bm \Sigma = \zeta \mathbb{I}_M$, and $\alpha = N - M + 1 - 2/\zeta$ and define the family of distributions referred to as the Laguerre $\zeta$-ensemble \citep{forrester2010log}:
\begin{equation}\label{eq:Eigen_Wishart}
	p(\bm \theta \mid N, M, \alpha, \zeta) = \mathcal{W}^{-1}_{\alpha, \zeta, M} e^{-\frac{\zeta}{2}\sum_{l = 1}^M \theta_l}\prod_{l = 1}^M \theta_l^{\alpha \frac{\zeta}{2}} \prod_{i < j} \mid \theta_i - \theta_j \mid^{\zeta}
\end{equation}
For $\zeta = 1$, this is exactly the joint law in Eq.~\eqref{eq:Eig_ID_Wishart}, also called the Laguerre orthogonal ensemble. The cases $\zeta = 2$ and $\zeta = 4$ correspond to the analogous distributions when the matrix $\bm X$ has complex (Laguerre unitary ensemble) and quaternion (Laguerre symplectic ensemble) entries, respectively \citep{edelman2005rmt}. For arbitrary $\zeta \not\in\{1,2,4\}$, the distribution \eqref{eq:Eigen_Wishart} may be viewed as the case when $\bm Z$ contains $\chi$-squared distributed entries \citep{forrester2010log}. 
Notice, when $M = 1$, we obtain the $\text{Gamma}\left(\alpha \zeta/2 + 1, \zeta/2\right)$ distribution with mean $\alpha + 2/\zeta$ and variance $\left(\alpha + 2/\zeta\right)2/\zeta$.

Note that the above law is normalisable for any pair $\zeta > 0$ and $\alpha > - 2/\zeta$, with normalising constant known in closed form \citep{forrester2010log}:
\begin{equation}\label{eq:NormConst_Wishart}
\begin{gathered}
 \mathcal{W}_{\alpha, \zeta, M} = \\
 \left( \zeta/2 \right)^{-M \left( 1 + \frac{\zeta}{2}\left(\alpha + M-1\right) \right)} \prod_{j = 0}^{M-1}\frac{\Gamma\left(1 + \frac{\zeta}{2}\left(j + 1\right)\right)\Gamma\left(1 + \alpha\frac{\zeta}{2} + j\frac{\zeta}{2}\right)}{\Gamma\left(1 + \frac{\zeta}{2}\right)}
 \end{gathered}
\end{equation}
The joint distribution is otherwise not well-defined.

\subsection{Eigenvalue distributions derived from Jacobi ensembles}

The Jacobi ensemble is related to the eigenvalues of the Beta random matrix \citep{forrester2010log} and has applications in physics \citep{livan2011moments, vivo2008transmission}. Following \cite{mitra1970}, let $\bm S_1 \sim W_M\left(N_1, \bm \Sigma\right), \bm S_2 \sim W_M\left(N_2, \bm \Sigma\right)$, with $\bm S_1$ and $\bm S_2$ independent. Let us define $\bm S_{12} = \bm S_1 + \bm S_2$, so that $\bm S_{12} \sim W_M\left(N_1+N_2, \bm \Sigma\right)$. We take the Cholesky factorisation $\bm S_{12} = \bm C' \bm C$, where $\bm C'$ is lower triangular with non-negative diagonal elements. Note that $\bm S_{12}$, and hence $\bm C'$, is invertible with probability 1. Letting $\bm L' = \left(\bm C'\right)^{-1}$, we define:
\begin{equation} \label{eq:Betamatrix}
	\bm U = \bm L' \bm S_1 \bm L,
\end{equation}
and we say that $\bm U$ follows the matrix-variate Beta distribution with parameters $\frac{N_1}{2},\frac{N_2}{2}$, with $N_1, N_2 > (M - 1)/2$. The latter condition is derived from the existence of the Wishart matrix $\bm S_{12}$. The p.d.f. of the matrix-variate Beta distribution is the following:
\begin{equation}\label{eq:BetaMatrix_pdf}
\begin{gathered}
	p\left(\bm U \mid N_1, N_2\right) = \\
 \frac{\Gamma_M \left( \frac{N_1 + N_2}{2} \right) }{\Gamma_M \left( \frac{N_1}{2} \right) \Gamma_M \left( \frac{N_2}{2} \right)} \mid \bm U \mid^{\frac{N_1 - M - 1}{2}} \mid \bm I_M - \bm U \mid^{\frac{N_2 - M - 1}{2}} \mathbbm{1}_{\Omega^{\bm U}_M}(\bm U)
 \end{gathered}
\end{equation}
where $\Omega^{\bm U}_M$ is the space of $M \times M$ symmetric matrices $U$ with the property that $\bm U$ and $\bm I_M - \bm U$ are both positive definite. Notice that the p.d.f. of $\bm U$ does not depend on the scale matrix $\bm \Sigma$. Consider the joint distribution of the eigenvalues of $\bm U$, $\bm \theta = (\theta_1, \dots, \theta_M)$, given by:
\begin{equation}\label{eq:Eigen_JOE}
	p(\bm \theta | \alpha, \beta, M) = \mathcal{B}^{-1}_{\alpha, \beta, M} \prod_{l = 1}^M \left(\theta_l\right)^{\alpha - 1}(1 - \theta_l)^{\beta-1} \prod_{i < j} |\theta_i - \theta_j|
\end{equation}
where $\alpha = (N_1 - M + 1)/2$ and $\beta = (N_2 - M + 1)/2$, highlighting that fact that the hyperparameters of this distribution also depend on the number of components $M$. As before, this can be viewed as a specific instance of a more general family of distributions indexed by the parameter $\zeta > 0$, leading to the $\zeta$-Jacobi ensemble: 
\begin{equation}\label{eq:Eigen_Beta}
	p(\bm \theta \mid \alpha, \beta, \zeta, M) = \mathcal{B}^{-1}_{\alpha, \beta, \zeta, M} \prod_{l = 1}^M \left(\theta_l\right)^{\alpha - 1}(1 - \theta_l)^{\beta - 1} \prod_{i < j} |\theta_i - \theta_j|^{2 \zeta}
\end{equation}
The normalising constant $\mathcal{B}_{\alpha, \beta, \zeta, M}$ is also known as the Selberg integral \citep{selberg1944berkninger}:
\begin{equation}\label{eq:NormConst_Beta}
	\mathcal{B}_{\alpha, \beta, \zeta, M} = \prod_{j = 0}^{M-1} \frac{\Gamma(\alpha + j\zeta) \Gamma(\beta + j\zeta) \Gamma(1 + (j+1)\zeta)}{\Gamma(\alpha + \beta + (M + j - 1)\zeta)\Gamma(1 + \zeta)}
\end{equation}
The conditions for the existence of the distribution are $\alpha, \beta > -1$ and $\zeta \geq 0$. 

\cite{pham2009multivariate} discuss several properties of the above distribution, alternatively called the multivariate Selberg Beta distribution, and present an example of marginal laws obtained in the bivariate case. An interesting result is that, for fixed values of $\alpha$, $\beta$, $\zeta$ and $M$, thanks to the symmetry of expression \eqref{eq:Eigen_Beta}, the univariate marginal distributions are all of the same type, although difficult to compute analytically. The authors also show that a generalisation of the Dirichlet distribution can be achieved, referred to as the Selberg Dirichlet distribution, by normalising a vector of random variables whose joint law is given by Eq. \eqref{eq:Eigen_Beta}. This joint distribution presents the same tractable features as the multivariate Selberg distributions (e.g., the joint distribution of the eigenvalues of the Beta random matrix), i.e. its normalising constant is known in closed form and presents the desired repulsive property. Furthermore, when $M = 1$, the distribution coincides with the Beta distribution with parameters $\alpha/2$ and $\beta/2$.

As we can observe from Eq. \eqref{eq:Eigen_Gaussian}, \eqref{eq:Eigen_Wishart} and \eqref{eq:Eigen_Beta}, the joint distributions of the eigenvalues of random matrices belonging to the Gaussian, Laguerre or Jacobi ensembles include a repulsive term depending on the pairwise differences between the eigenvalues, and on the choice of the parameters $(\alpha, \beta, \zeta)$. When these distributions are used as centring measure $P_0$ for the location parameters in a mixture model, the choice of the parameters $(\alpha, \beta, \zeta)$ is crucial in defining the amount of repulsion induced, and influence the effect on the resulting clustering structure. As the repulsion term appears in the same functional form, these prior distributions will lead to similar posterior inference and $\zeta$ plays the role of a calibration parameter.


\begin{table}[!ht]
	\centering
	\caption[]{Moments of eigenvalues of random matrices.}
	\label{tab:moments}
\resizebox{\linewidth}{!}{%
	\begin{tabular}{c|ccc}
		Matrix & Gaussian & Wishart & Beta \\
		Ensemble & Hermite & Laguerre & Jacobi \\\hline
		
		$\mathbb{E}\left( \theta_l \right)$ & $ - $ & $1 + (\alpha + M - 1)\zeta/2$ & $\frac{\alpha + (M-1)\zeta}{\alpha + \beta + 2(M-1)\zeta}$ \\
		
		$\mathbb{E}\left(\theta^2_l\right)$ & $\left(1 + \zeta / 2 (M-1) \right)/\zeta$ & $2 + (\alpha + 2M - 1)\zeta/2$ & $\frac{\left(\alpha + 1 + 2(M-1)\zeta\right)\mathbb{E}\left(\theta_l\right) - (M-1)\zeta \mathbb{E}\left(\theta_i \theta_j\right)}{\alpha + \beta + 1 + 2(M-1)\zeta}$ \\          
			

  	$\mathbb{E}\left(\prod\limits_{l=1}^k\theta_l\right)$ & $\left(-\frac{1}2\right)^{\frac{k}{2}}\frac{k!}{2^{\frac{k}{2}
   }\left(\frac{k}{2}\right)!}$, $k \mbox{ even}$ & $\prod\limits_{l=1}^k \left(1 + (\alpha + M - l)\zeta/2\right)$ & $\prod\limits_{l=1}^k \frac{\alpha + (M-l)\zeta}{\alpha + \beta + (2M-l-1)\zeta}$ \\


	\end{tabular}
 }
\end{table}

Contour plots of the joint distribution of the eigenvalues of two-dimensional matrices in the Gaussian ensemble are reported in Figure \ref{fig:Gaussian_pdf_prior}, for increasing values of $\zeta$. This parameter influences the repulsive structure of the joint distribution, as well as its variability.
The joint distribution of the eigenvalues of the two-dimensional Wishart matrix are reported in Figure \ref{fig:Wishart_pdf_prior}, for different choices of the parameters $(\alpha, \zeta)$. As expected, increasing the value of $\zeta$ increases the amount of repulsion imposed on the eigenvalues, producing two diverging modes. On the other hand, higher values of the parameter $\alpha$ increase the peakiness of the modes, reducing the variability.
Similar plots in the case of the matrix-variate Beta distribution are shown in Figure \ref{fig:Beta_pdf_prior}. The panels show the different roles of the parameters of this distribution. The parameter $\zeta$ controls the amount of repulsion, while the parameters $\alpha$ and $\beta$ act as shape parameters, similarly to a classical Beta or Dirichlet distribution, allowing for asymmetry. In Table~\ref{tab:moments} we report analytical moments of the eigenvalues for the three different ensembles. Results about the moments of the eigenvalue distributions can be found in \cite{ullah1986, mehta2004random, iguri2009selberg, pham2009multivariate}.

\begin{figure}[ht]
	\centering
	\subfloat[$\zeta = 0.001$]{\includegraphics[width=0.35\linewidth]{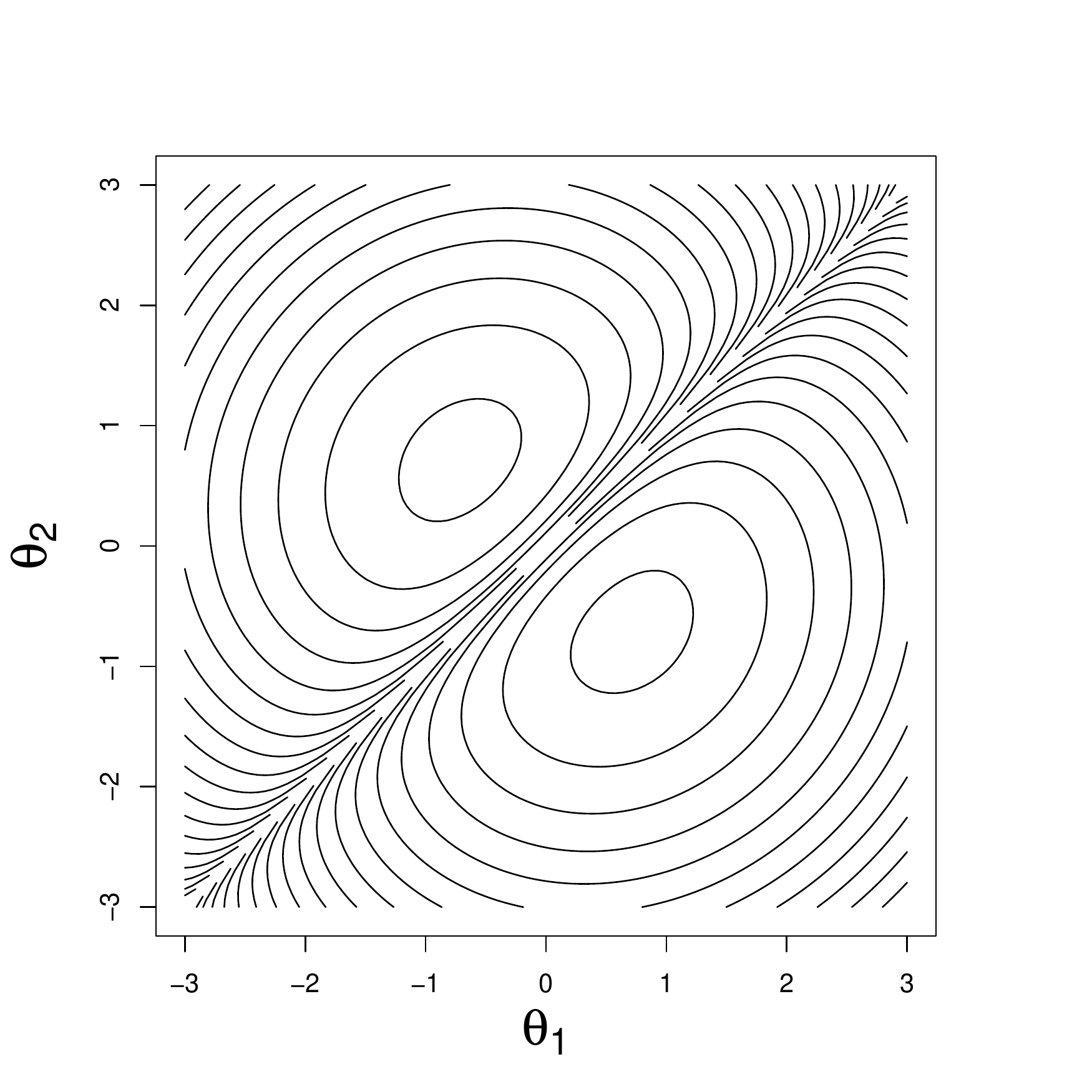}}
	\subfloat[$\zeta = 0.01$]{\includegraphics[width=0.35\linewidth]{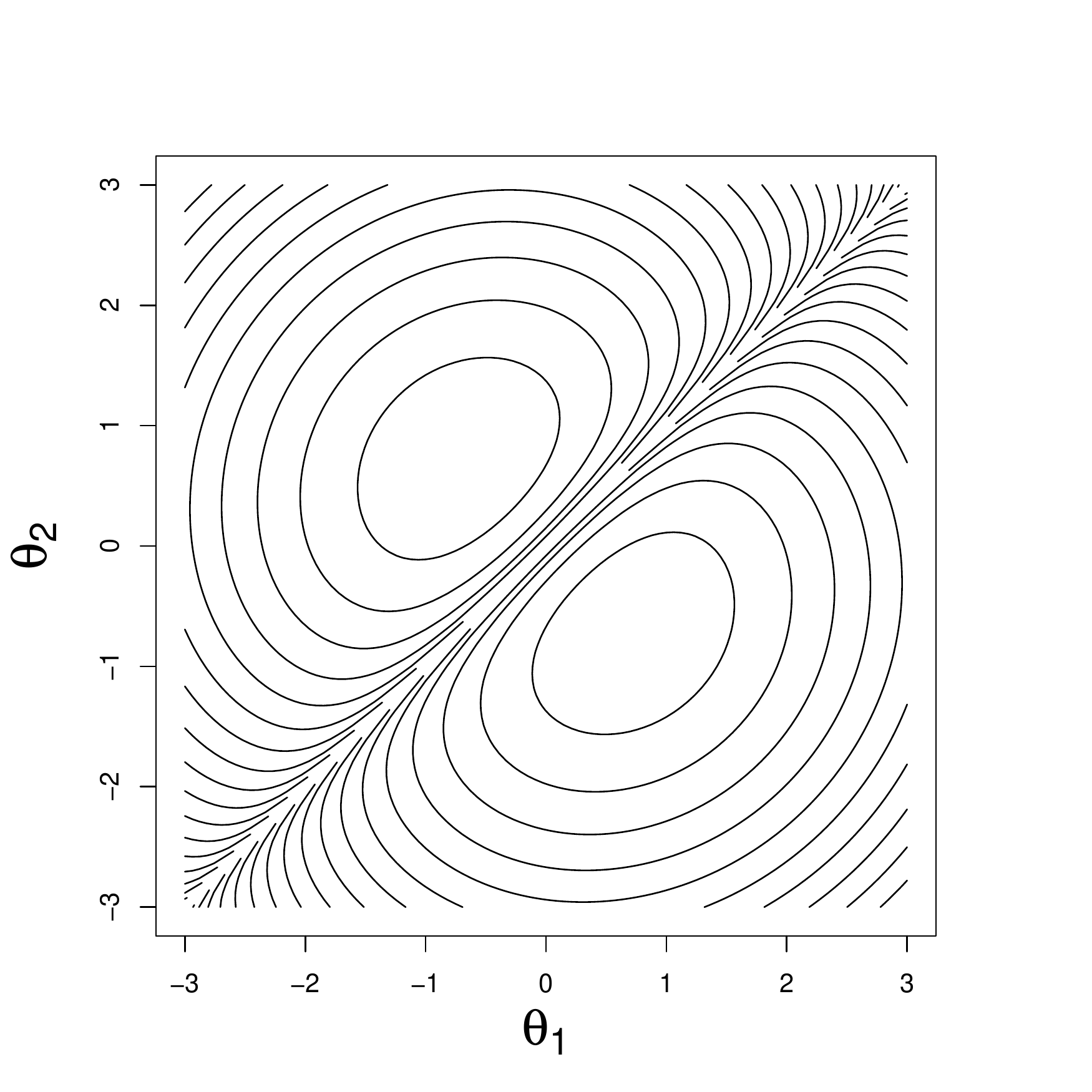}}
	\subfloat[$\zeta = 0.1$]{\includegraphics[width=0.35\linewidth]{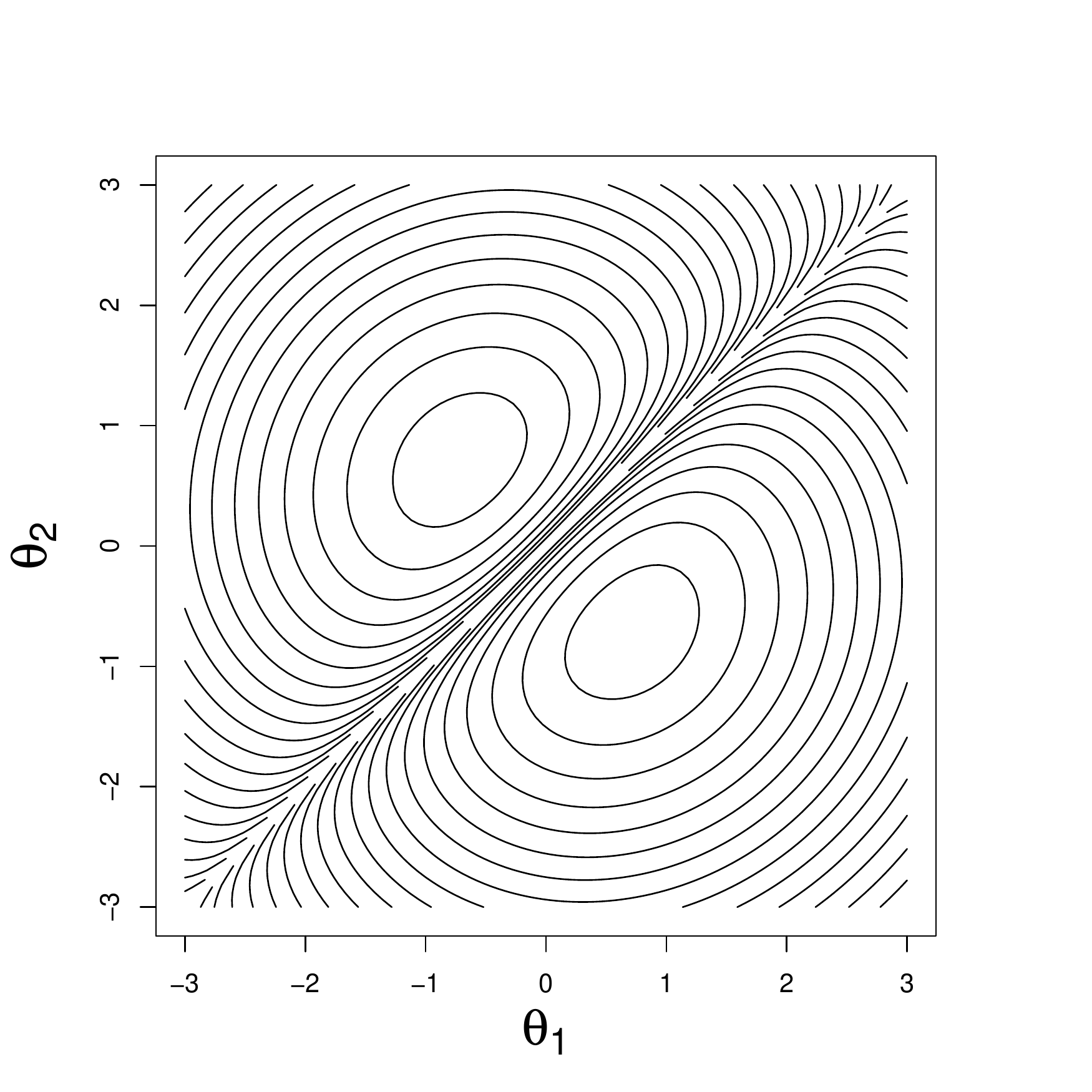}}
	
	\centering
	\subfloat[$\zeta = 0.5$]{\includegraphics[width=0.35\linewidth]{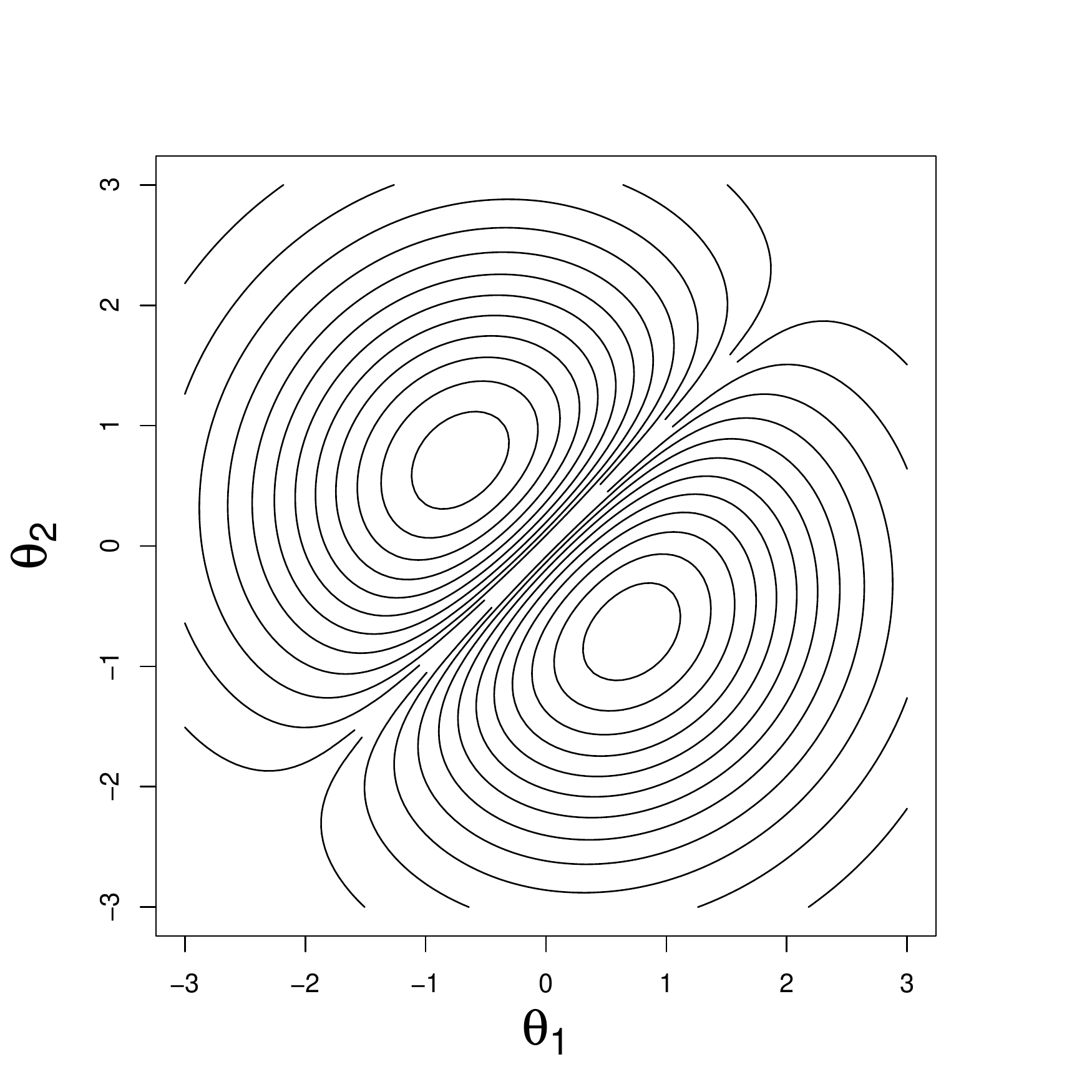}}
	\subfloat[$\zeta = 1$]{\includegraphics[width=0.35\linewidth]{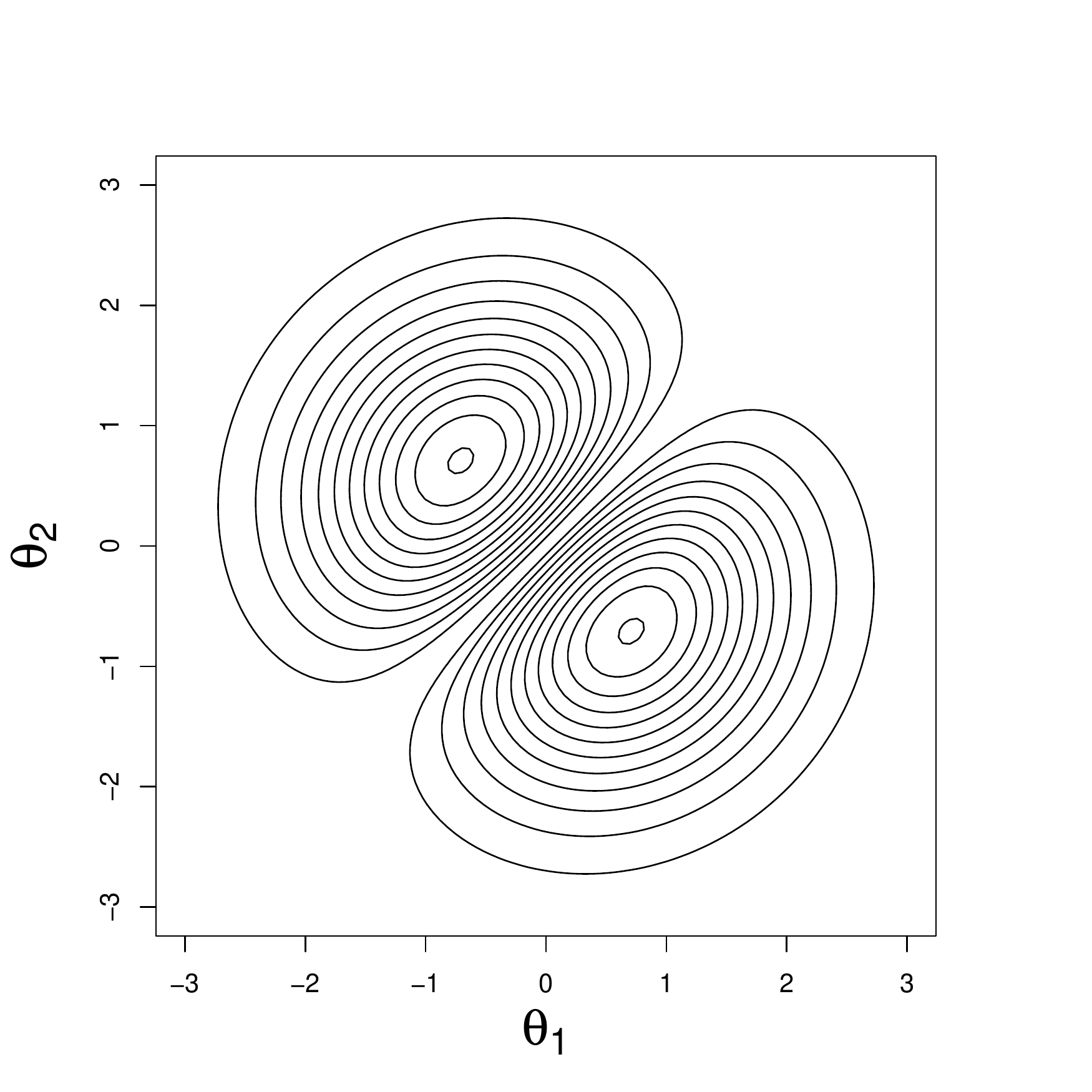}}
	\subfloat[$\zeta = 2.5$]{\includegraphics[width=0.35\linewidth]{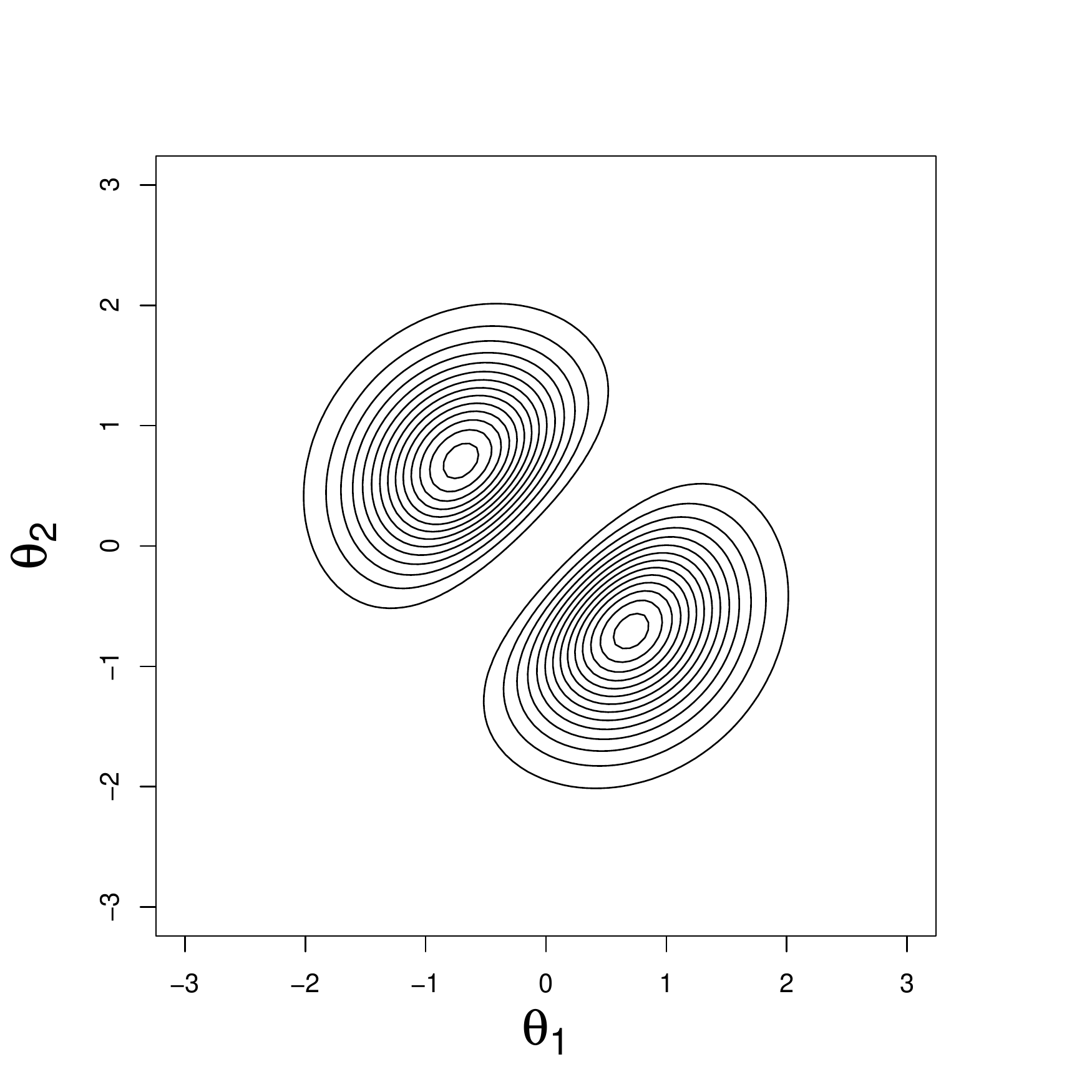}}
	
	\centering
	\subfloat[$\zeta = 5$]{\includegraphics[width=0.35\linewidth]{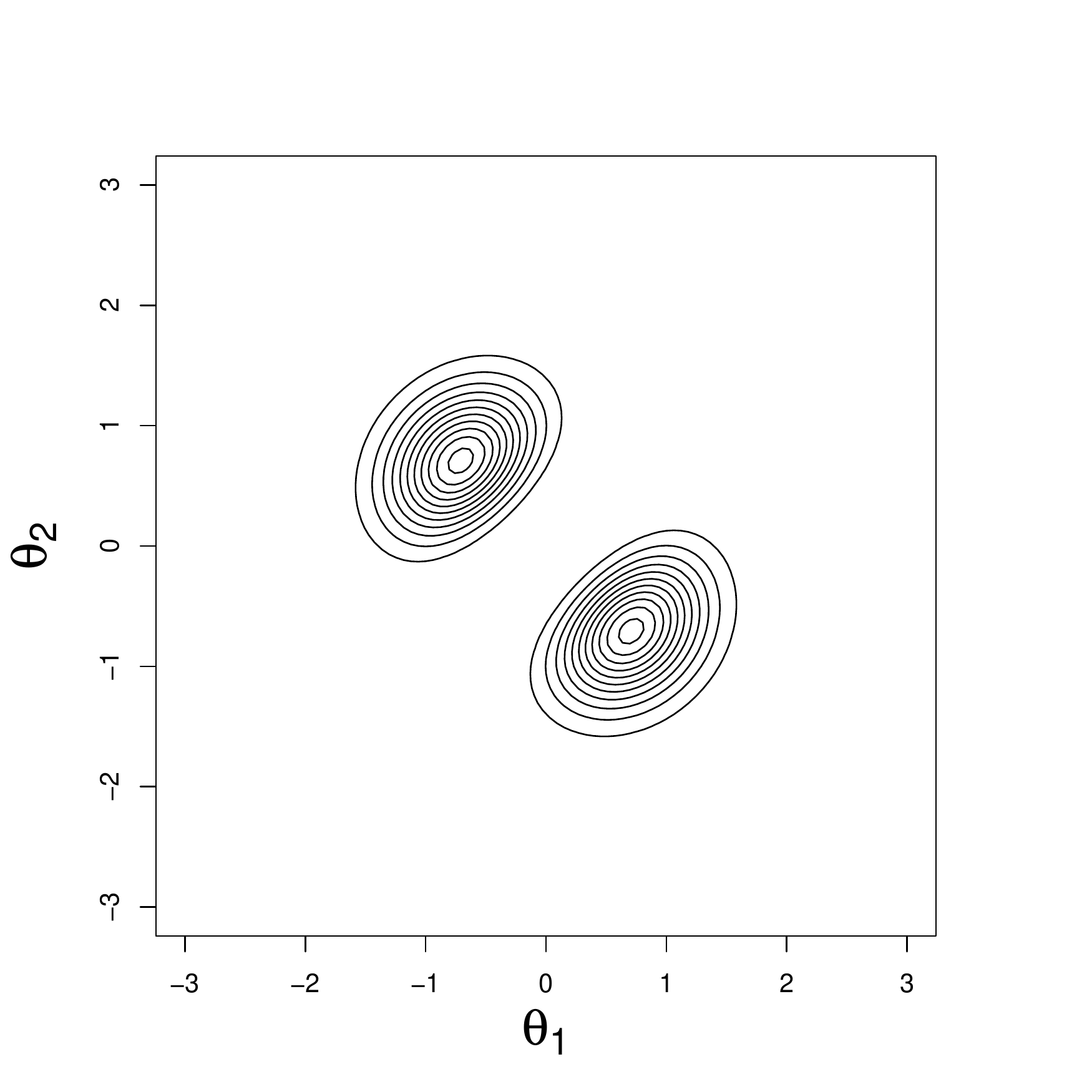}}
	\subfloat[$\zeta = 10$]{\includegraphics[width=0.35\linewidth]{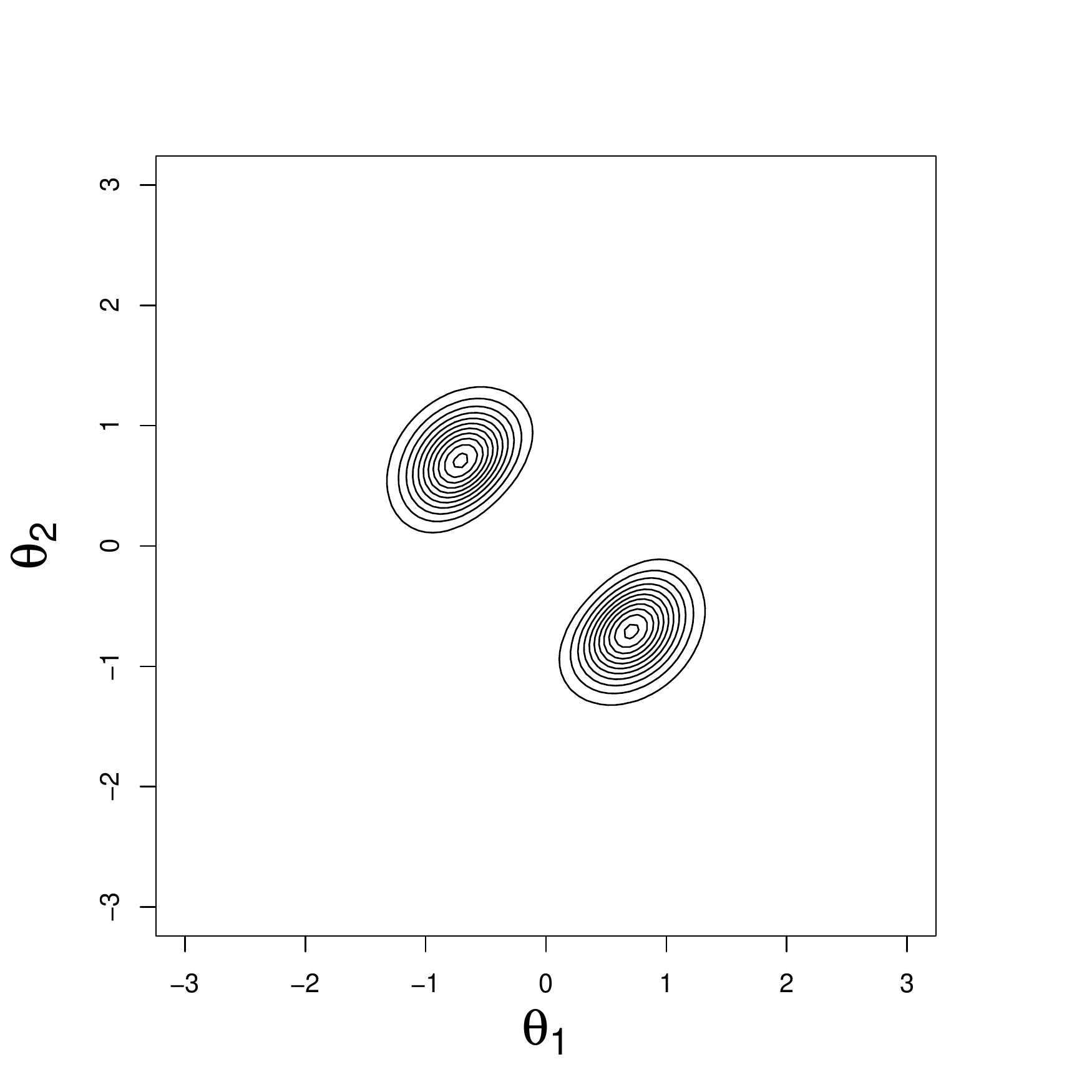}}
	\subfloat[$\zeta = 100$]{\includegraphics[width=0.35\linewidth]{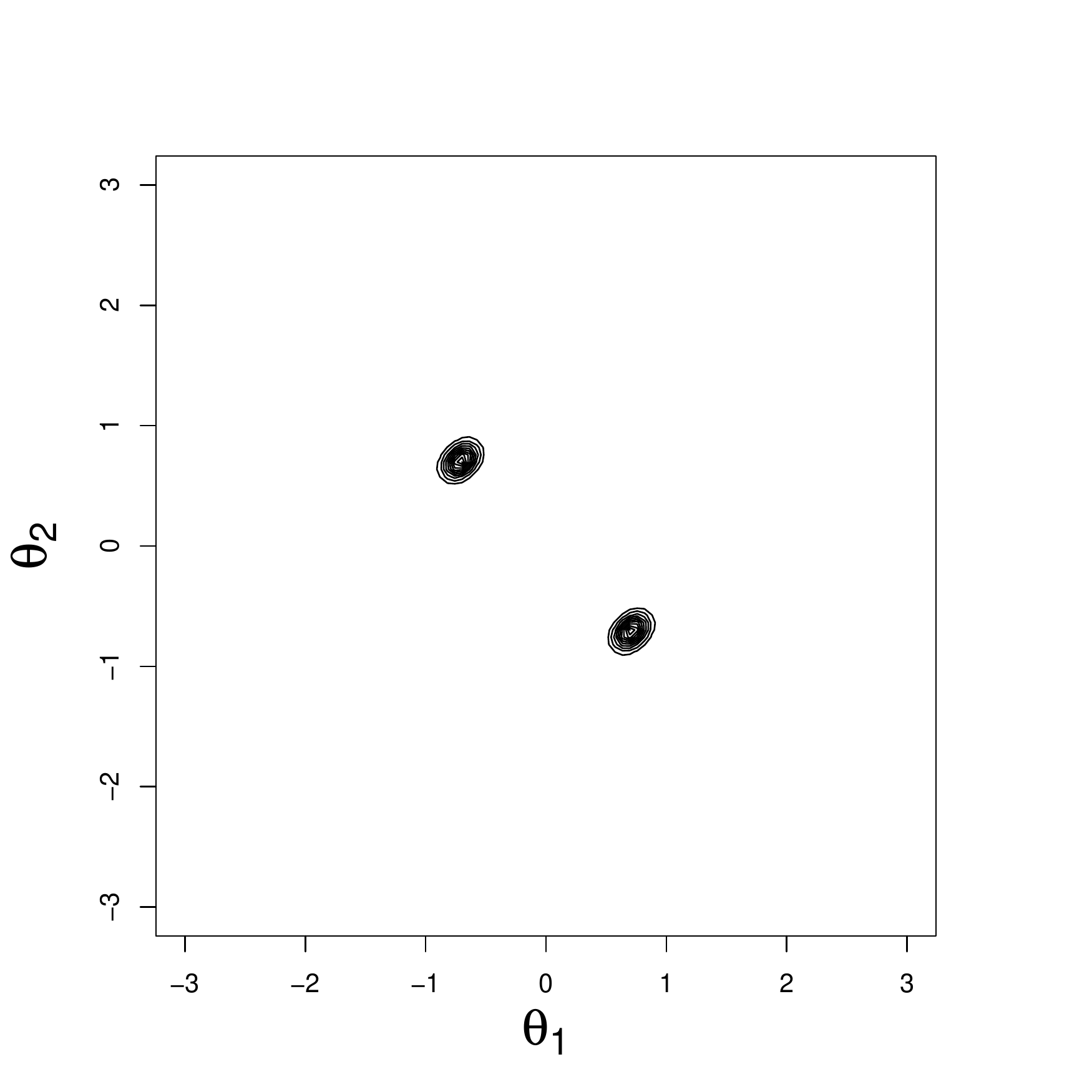}}
	\caption{Gaussian ensemble. Joint distribution of the eigenvalues of a 2-dimensional Gaussian matrix for different choices of the parameter $\zeta$.}
	\label{fig:Gaussian_pdf_prior}
\end{figure}

\begin{figure}[ht]
	\centering
	\subfloat[$(\alpha, \zeta) = (1,2)$]{\includegraphics[width=0.35\linewidth]{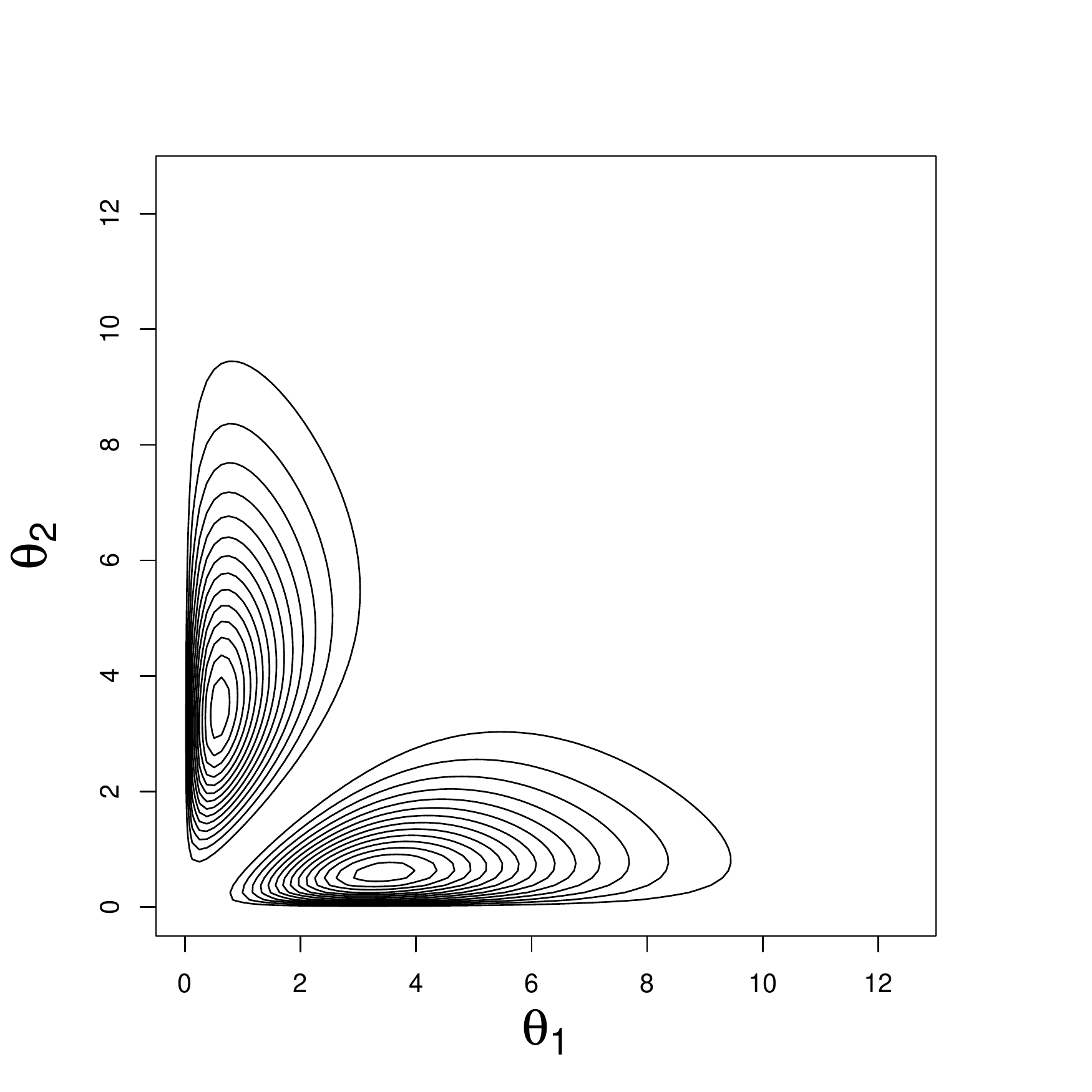}}
	\subfloat[$(\alpha, \zeta) = (1,2.01)$]{\includegraphics[width=0.35\linewidth]{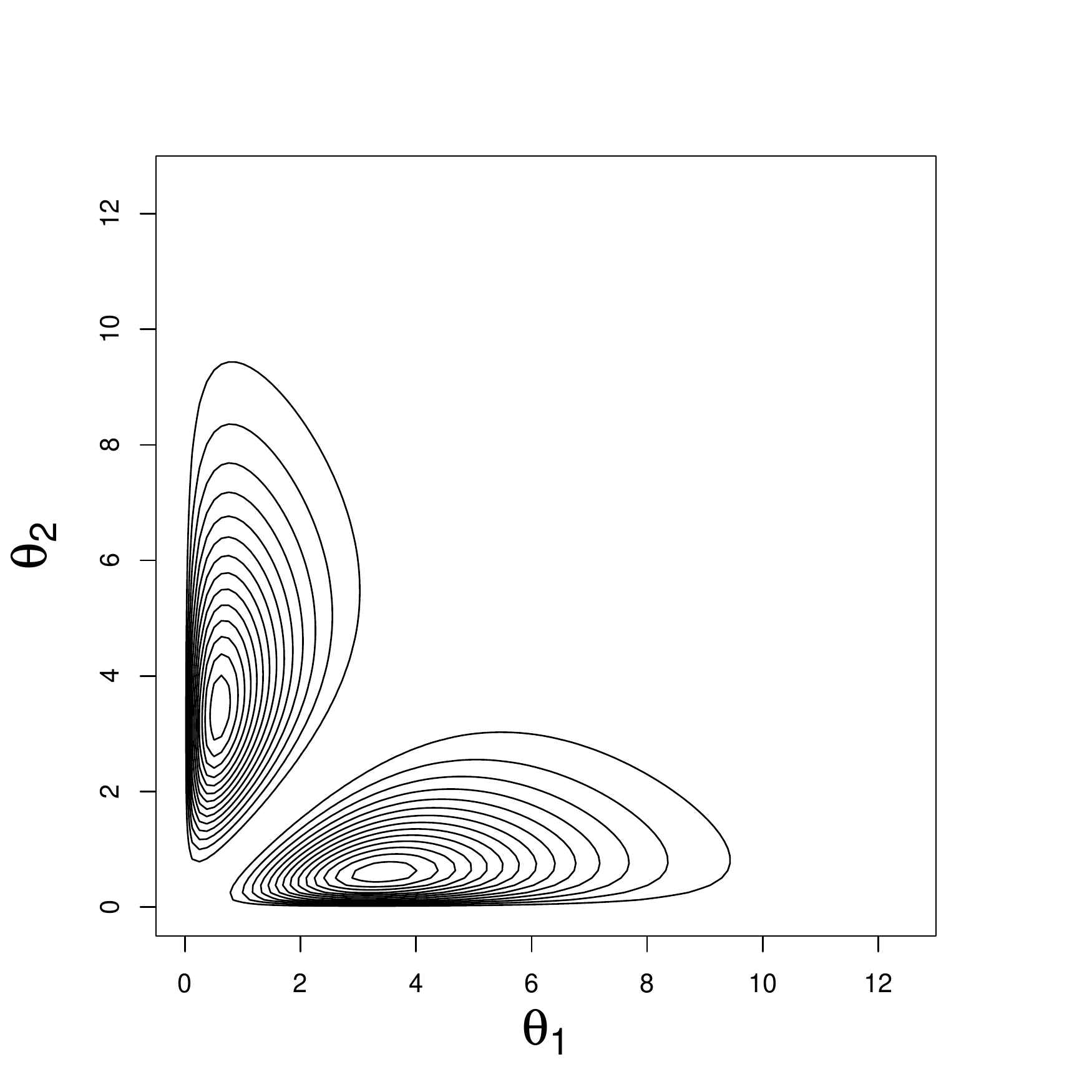}}
	\subfloat[$(\alpha, \zeta) = (1,2.1)$]{\includegraphics[width=0.35\linewidth]{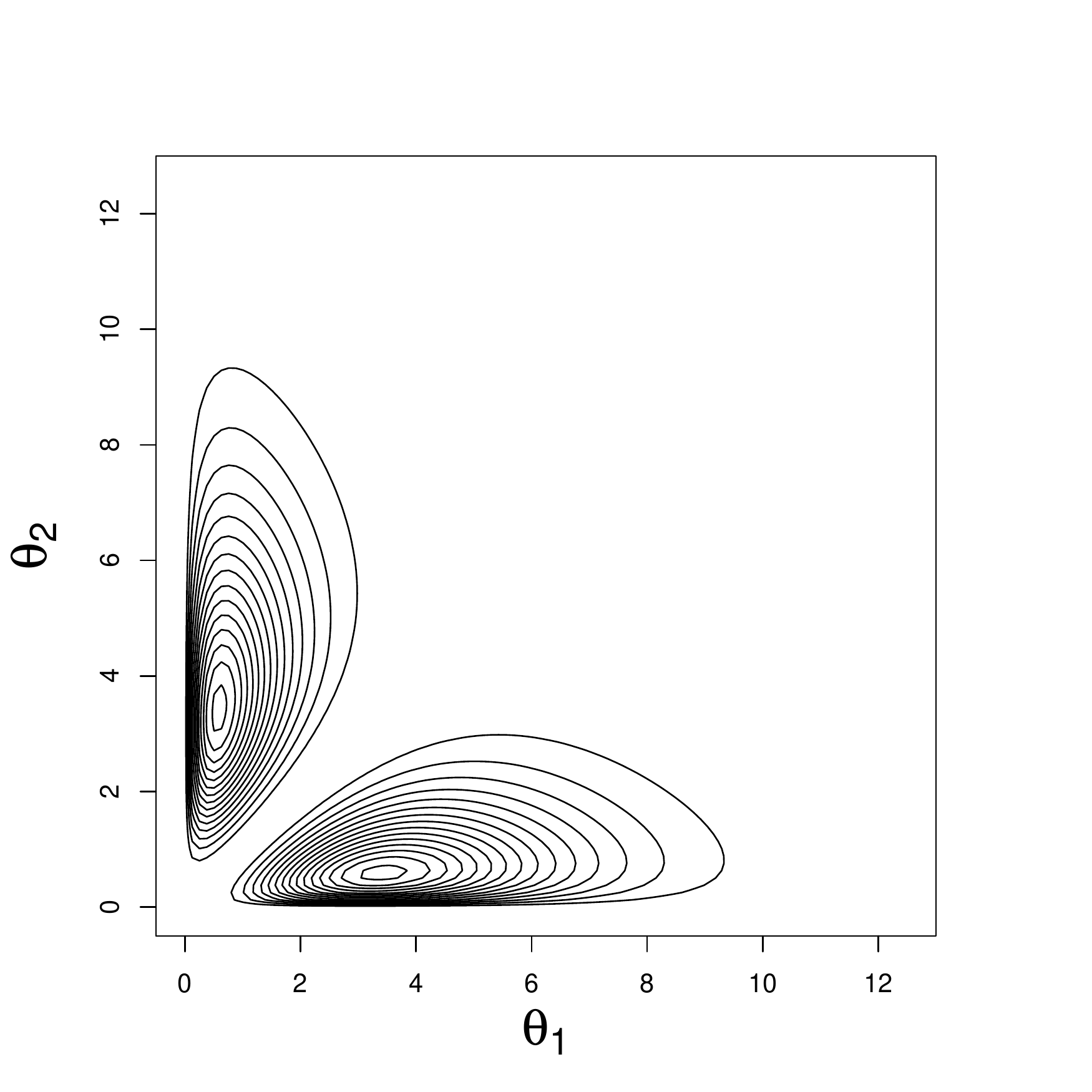}}
	
	\centering
	\subfloat[$(\alpha, \zeta) = (2,2)$]{\includegraphics[width=0.35\linewidth]{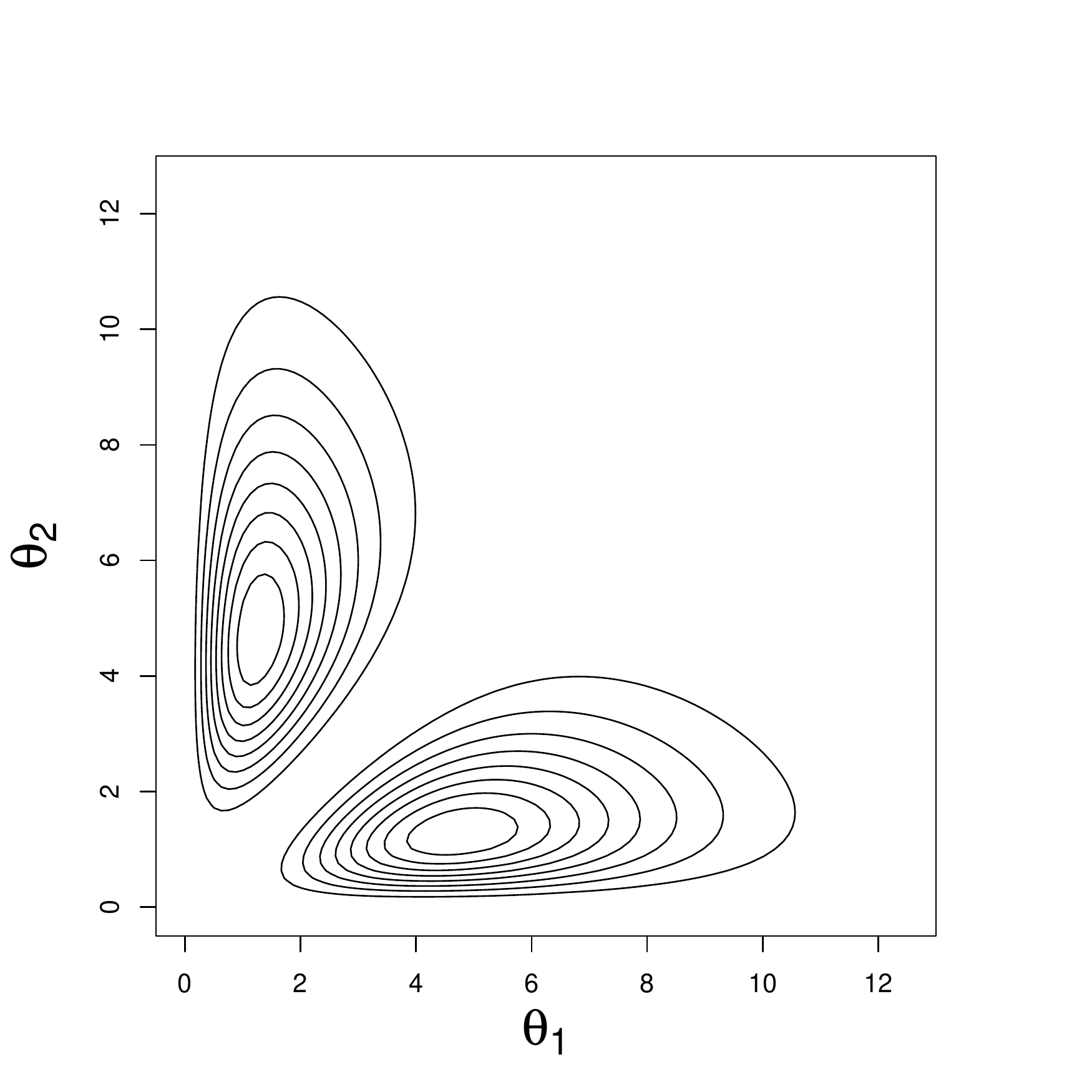}}
	\subfloat[$(\alpha, \zeta) = (2,2.01)$]{\includegraphics[width=0.35\linewidth]{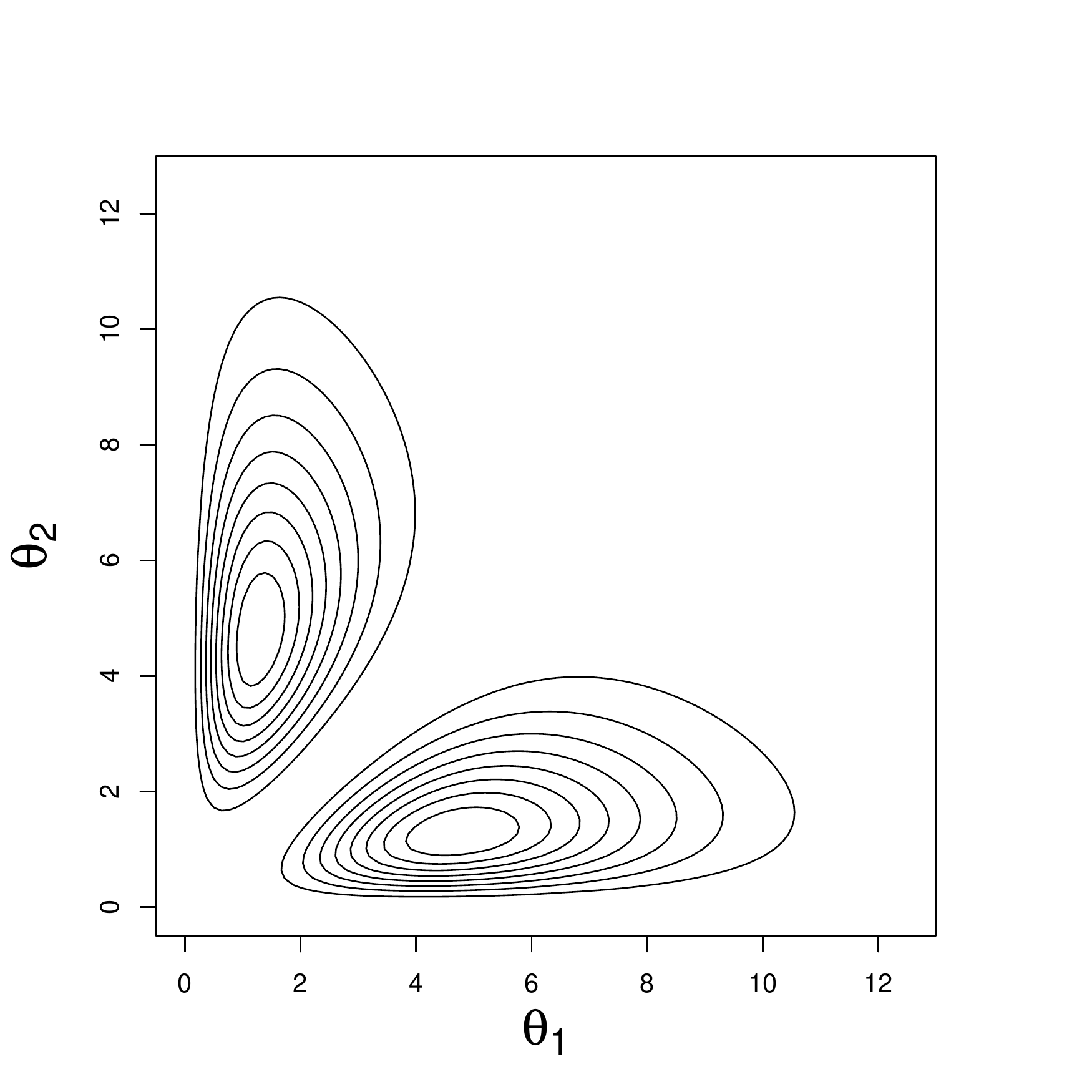}}
	\subfloat[$(\alpha, \zeta) = (2,2.1)$]{\includegraphics[width=0.35\linewidth]{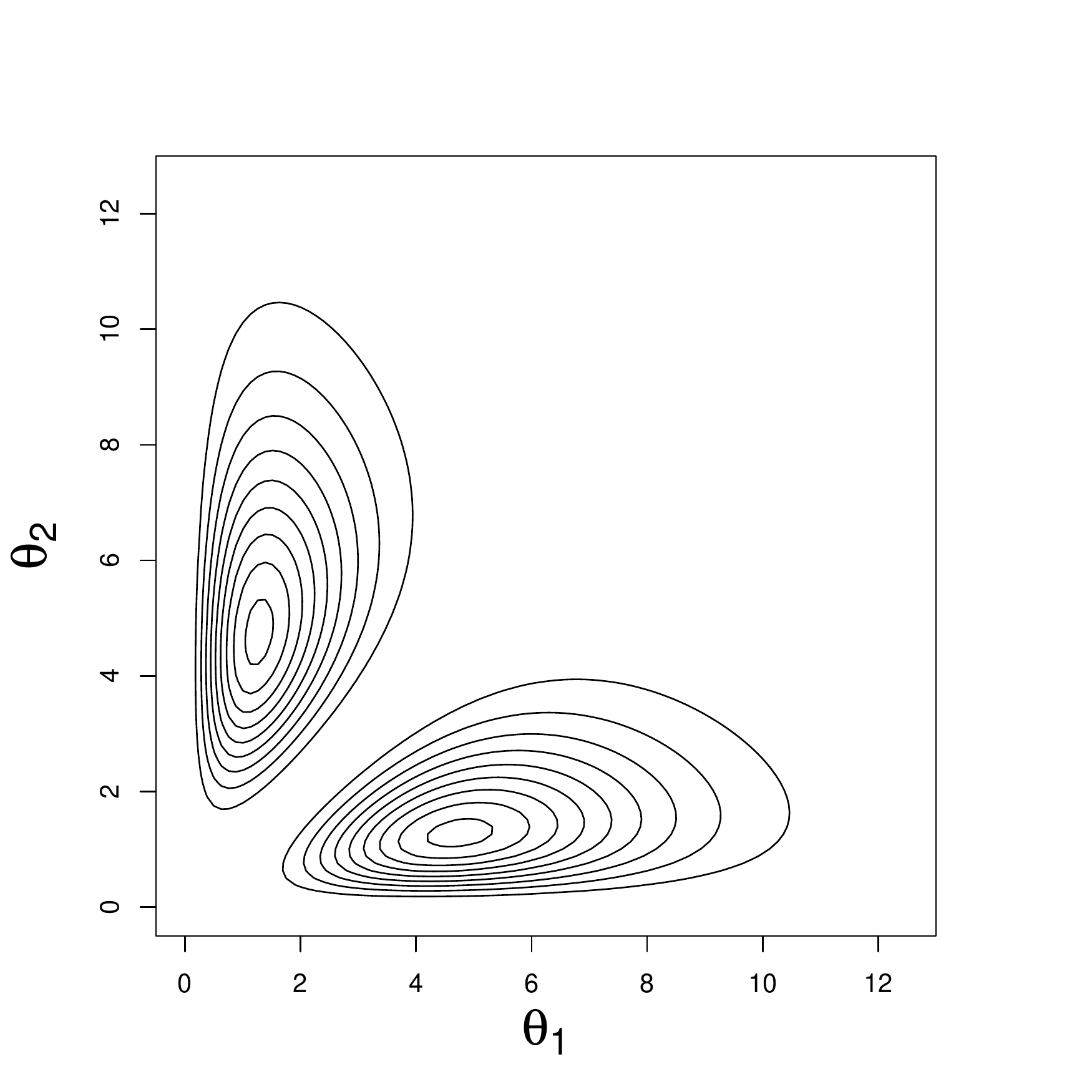}}
	
	\centering
	\subfloat[$(\alpha, \zeta) = (-0.5,1)$]{\includegraphics[width=0.35\linewidth]{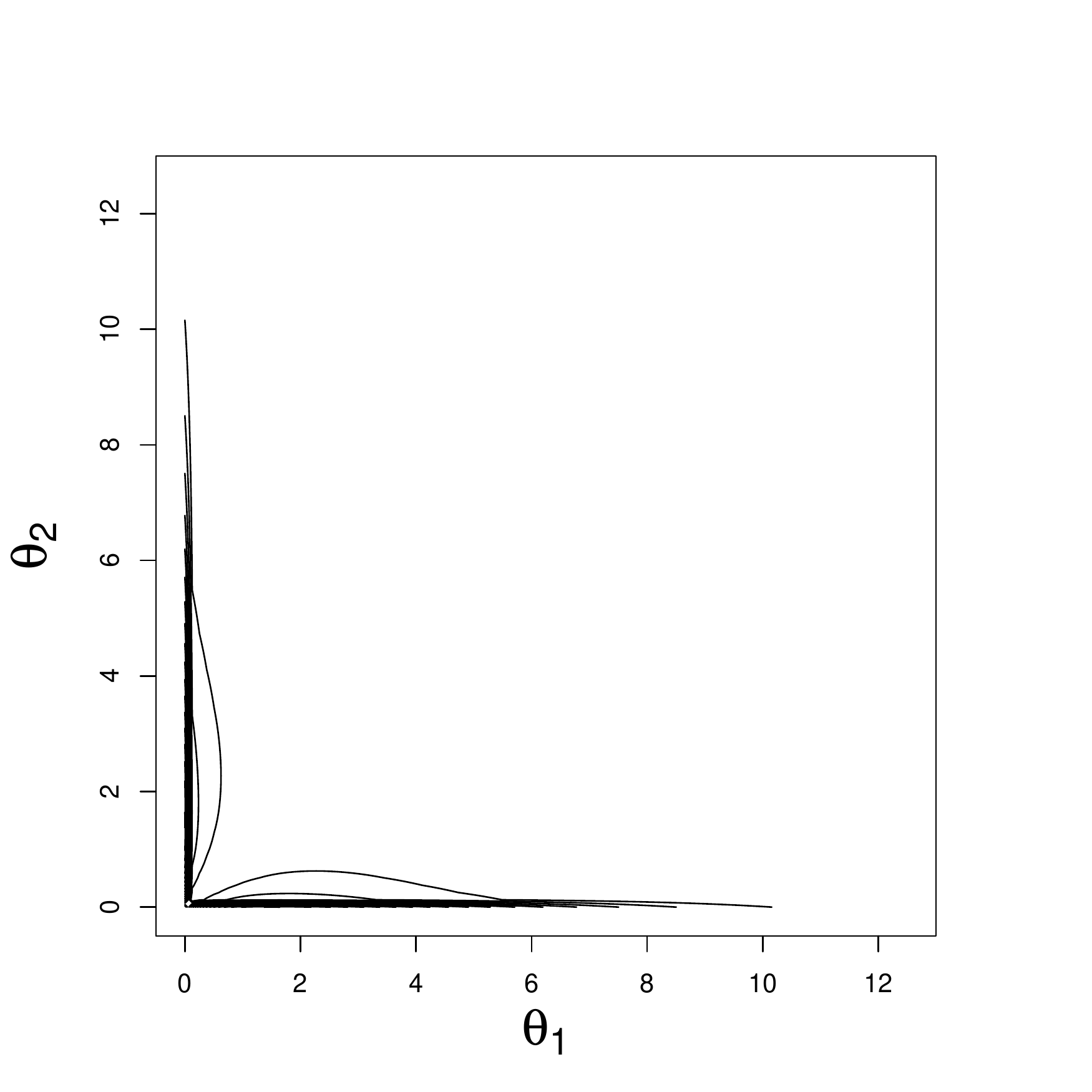}}
	\subfloat[$(\alpha, \zeta) = (3,5)$]{\includegraphics[width=0.35\linewidth]{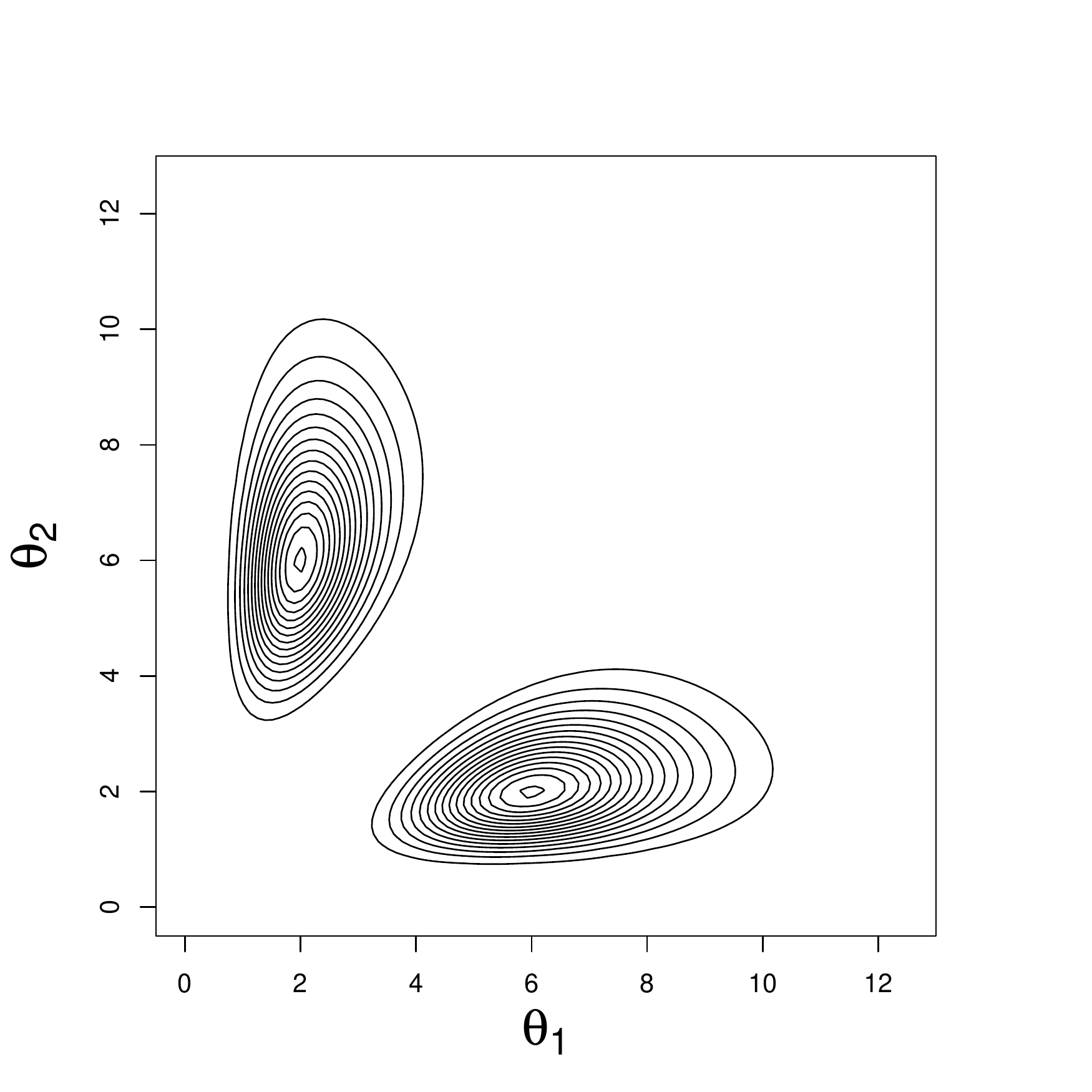}}
	\subfloat[$(\alpha, \zeta) = (4,10)$]{\includegraphics[width=0.35\linewidth]{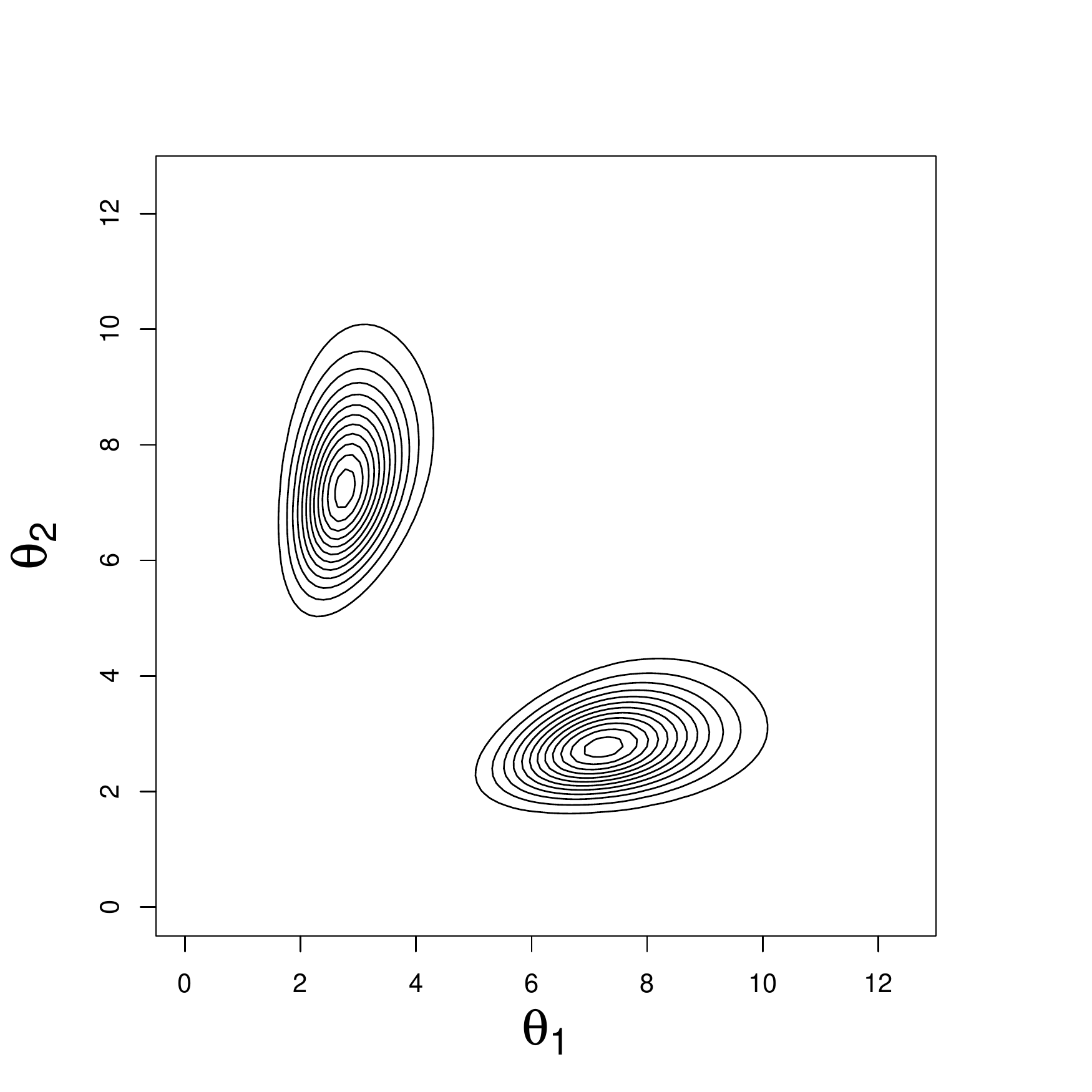}}
	\caption{Laguerre ensemble. Joint distribution of the eigenvalues of a 2-dimensional Wishart matrix for different choices of the parameters $(\alpha, \zeta)$.}
	\label{fig:Wishart_pdf_prior}
\end{figure}

\begin{figure}[ht]
	\centering
	\subfloat[{\tiny $(\alpha, \beta, \zeta) = (2,2,0.5)$}]{\includegraphics[width=0.35\linewidth]{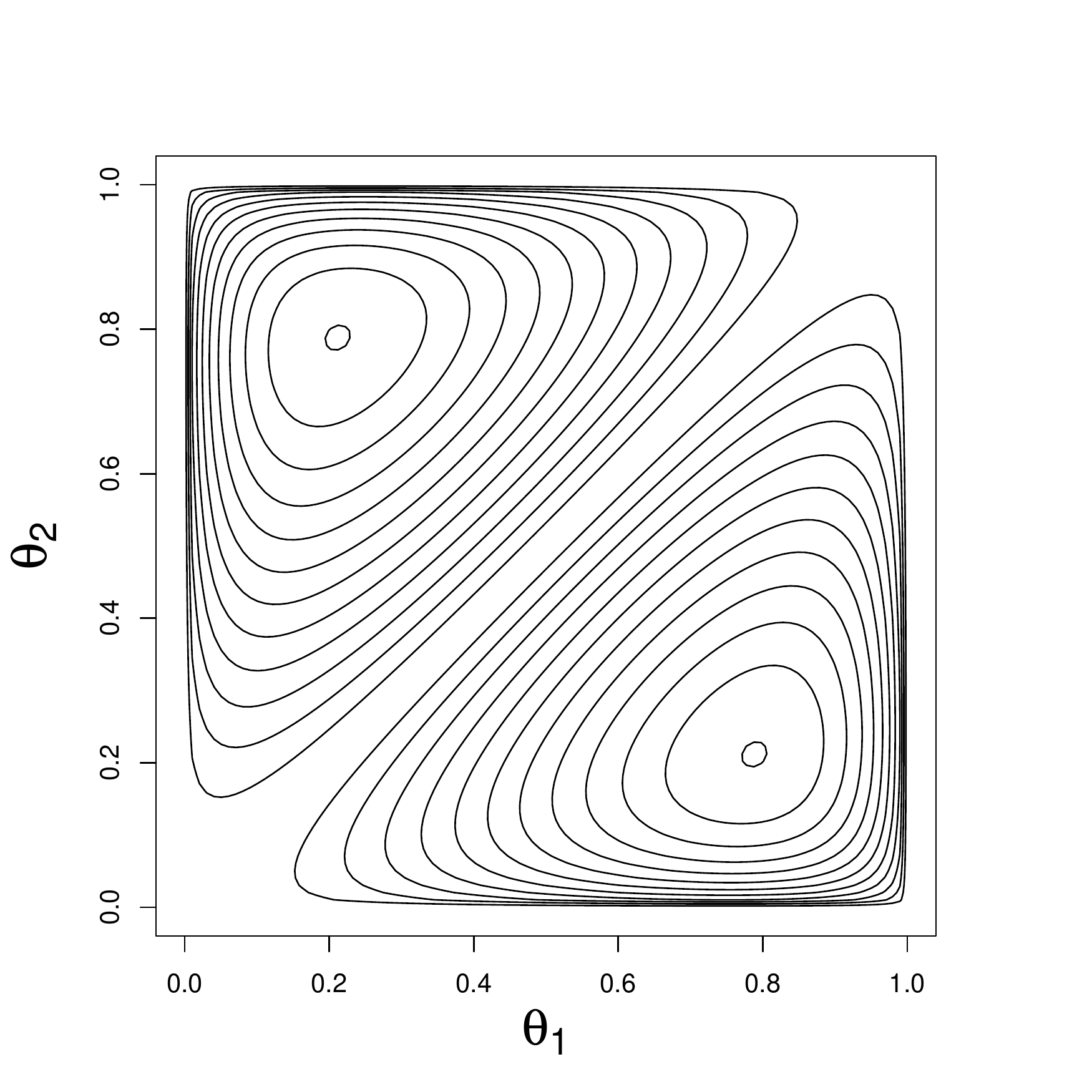}}
	\subfloat[{\tiny $(\alpha, \beta, \zeta) = (2,5,0.5)$}]{\includegraphics[width=0.35\linewidth]{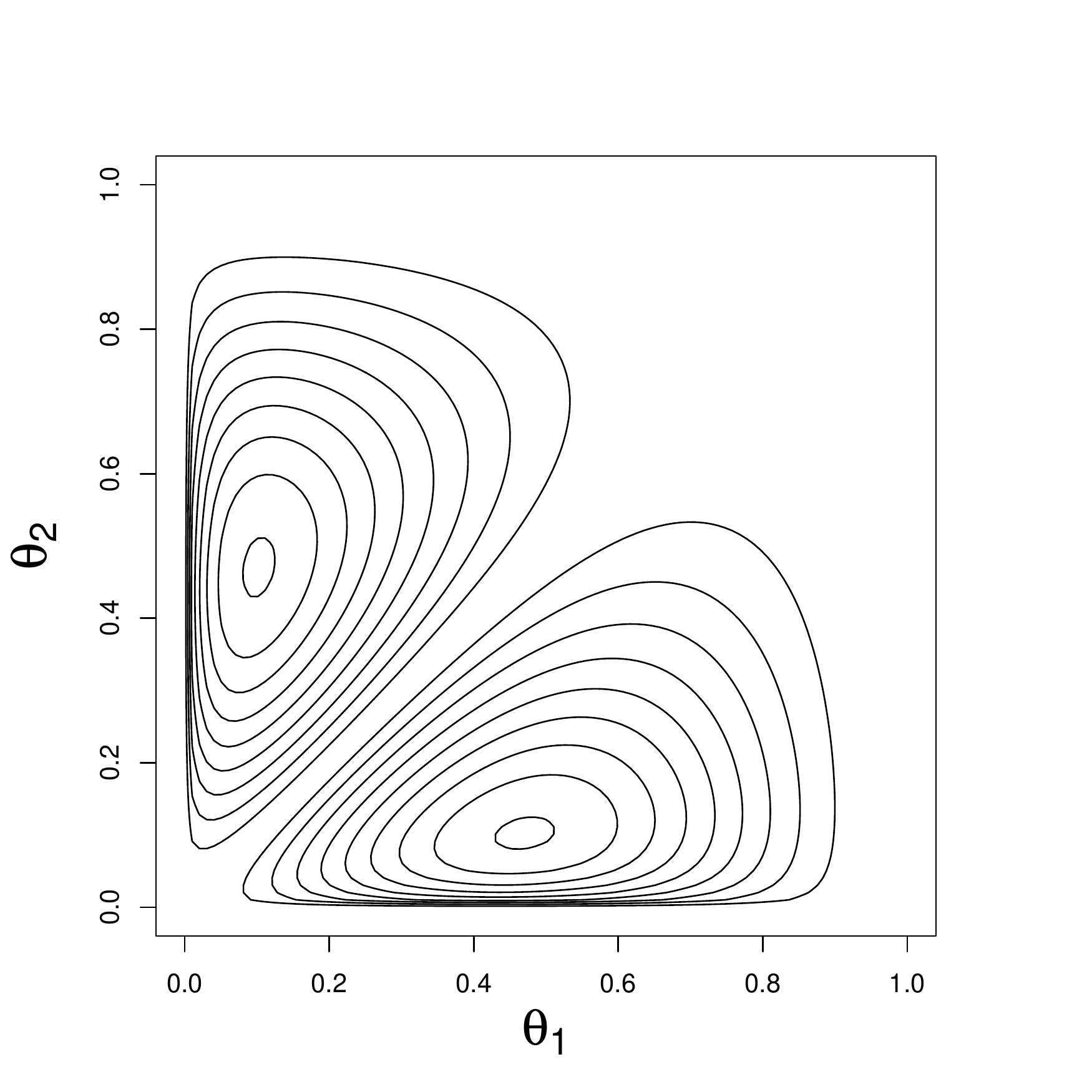}}
	\subfloat[{\tiny $(\alpha, \beta, \zeta) = (2,10,0.5)$}]{\includegraphics[width=0.35\linewidth]{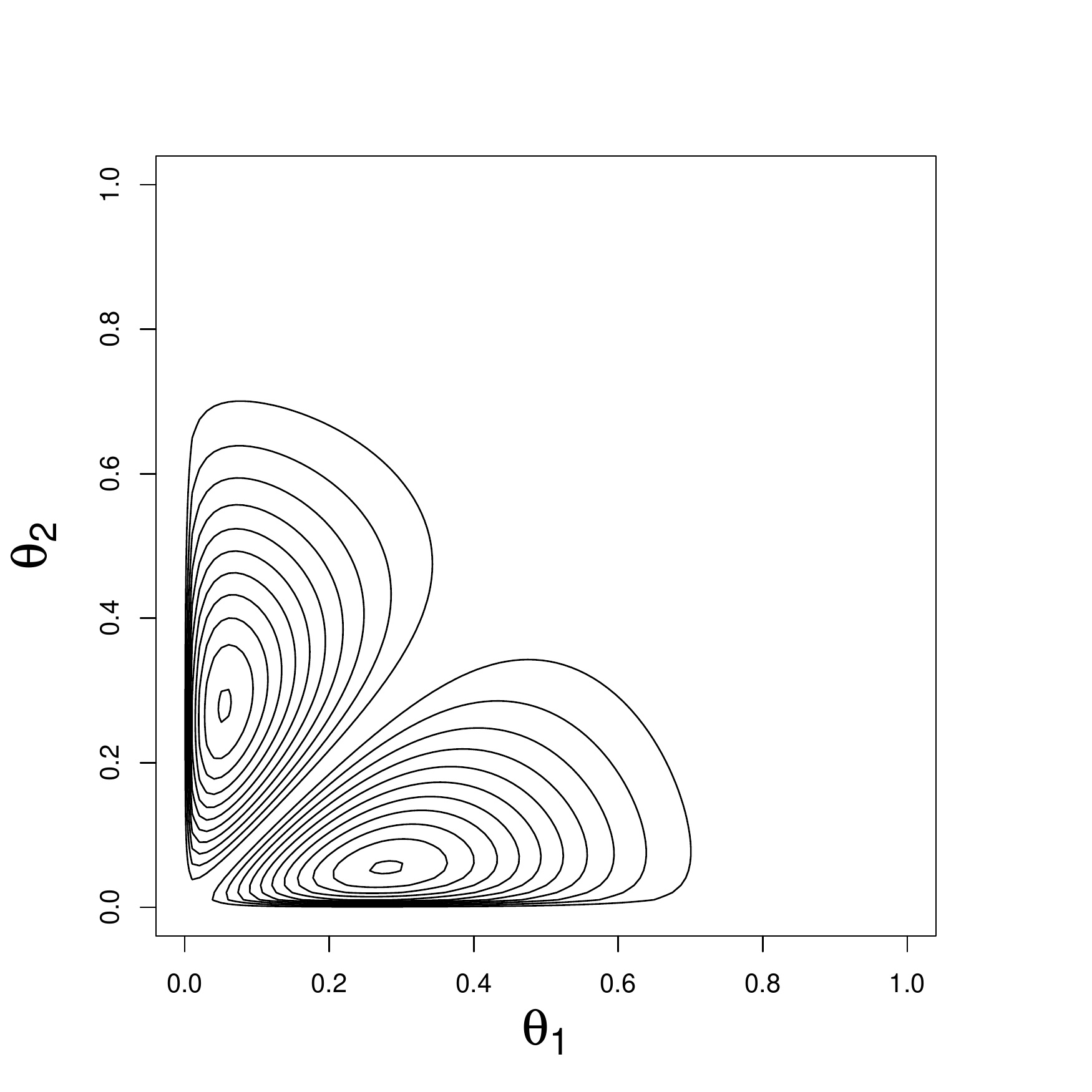}}
	
	\centering
	\subfloat[{\tiny $(\alpha, \beta, \zeta) = (5,2,1)$}]{\includegraphics[width=0.35\linewidth]{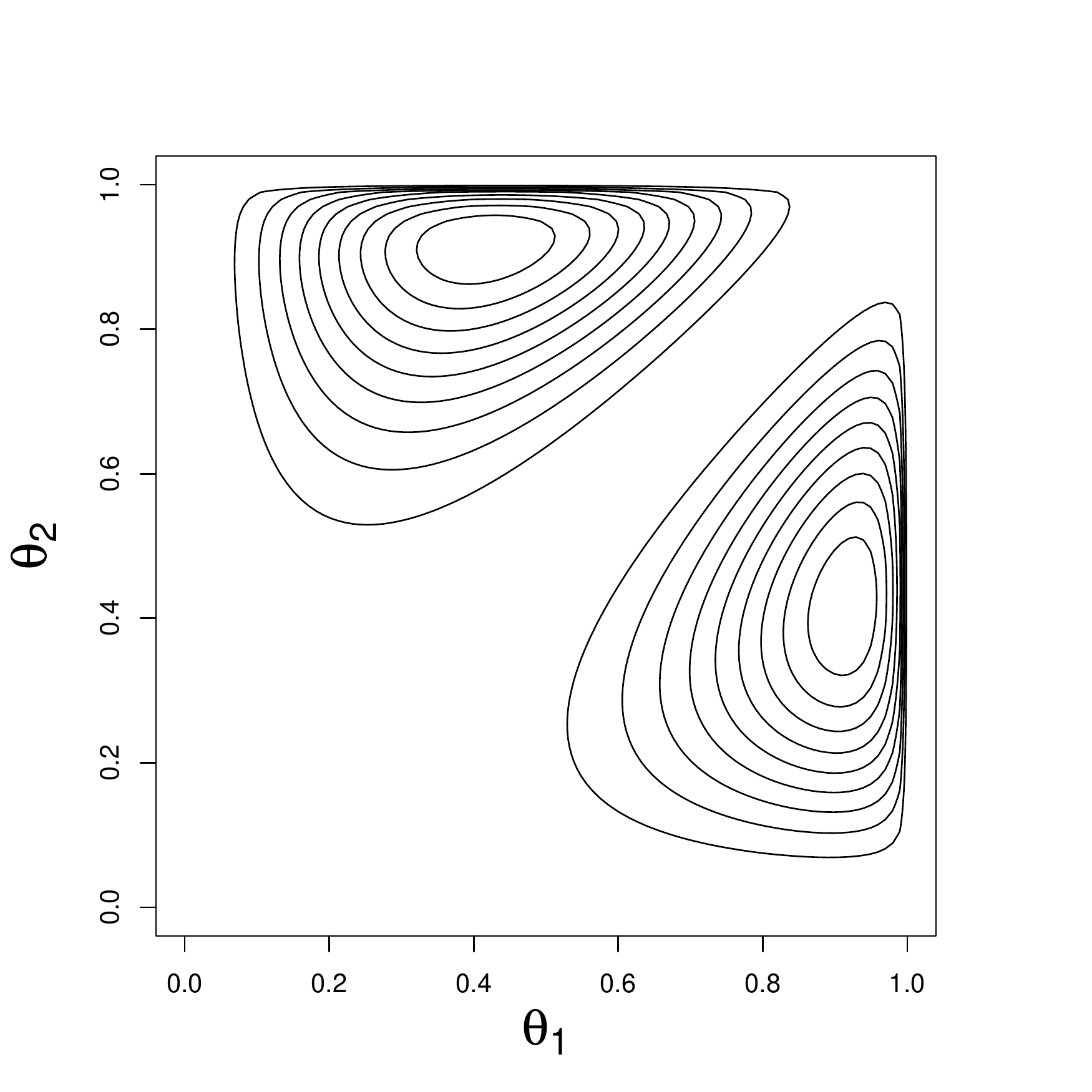}}
	\subfloat[{\tiny $(\alpha, \beta, \zeta) = (5,5,1)$}]{\includegraphics[width=0.35\linewidth]{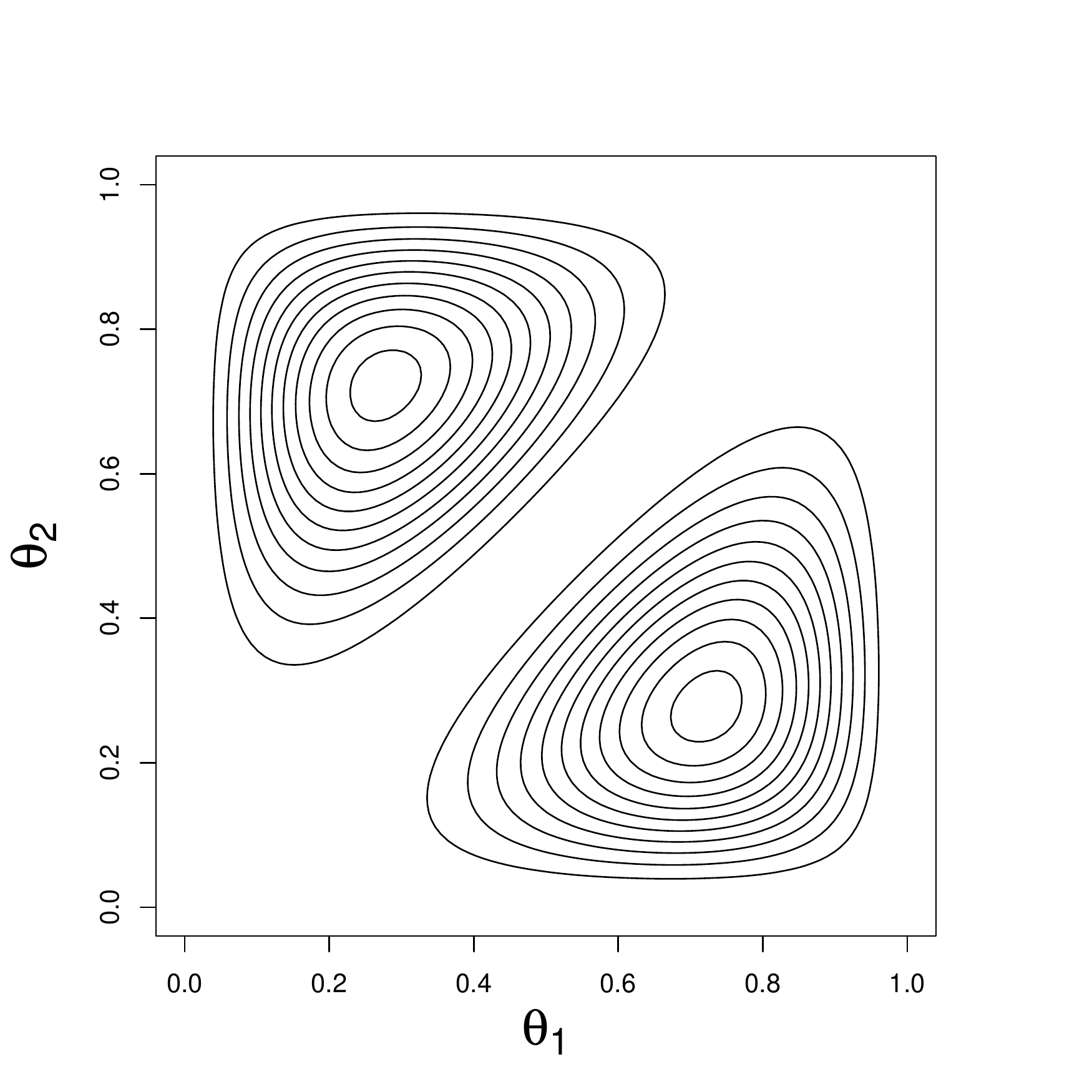}}
	\subfloat[{\tiny $(\alpha, \beta, \zeta) = (5,10,1)$}]{\includegraphics[width=0.35\linewidth]{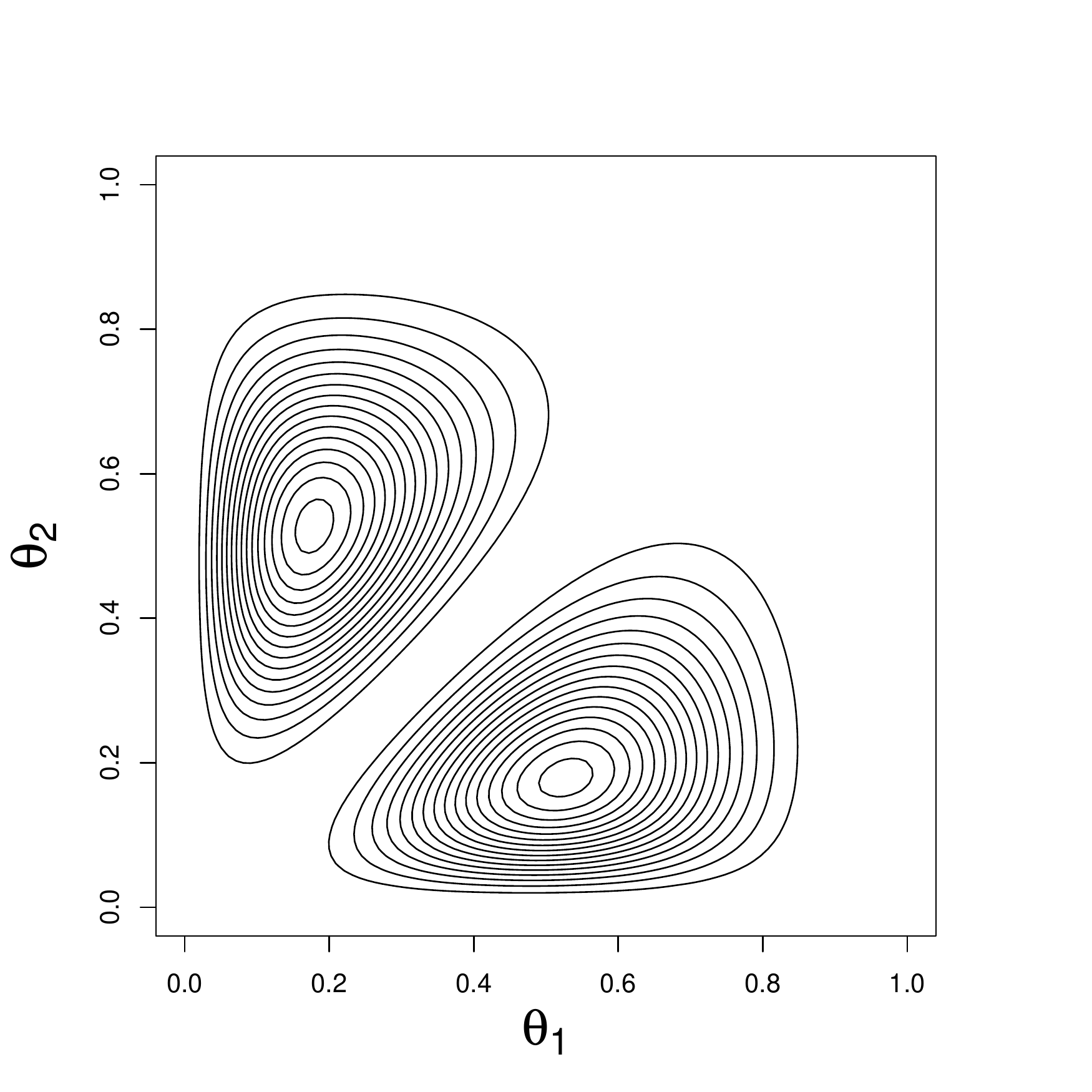}}
	
	\centering
	\subfloat[{\tiny $(\alpha, \beta, \zeta) = (10,2,5)$}]{\includegraphics[width=0.35\linewidth]{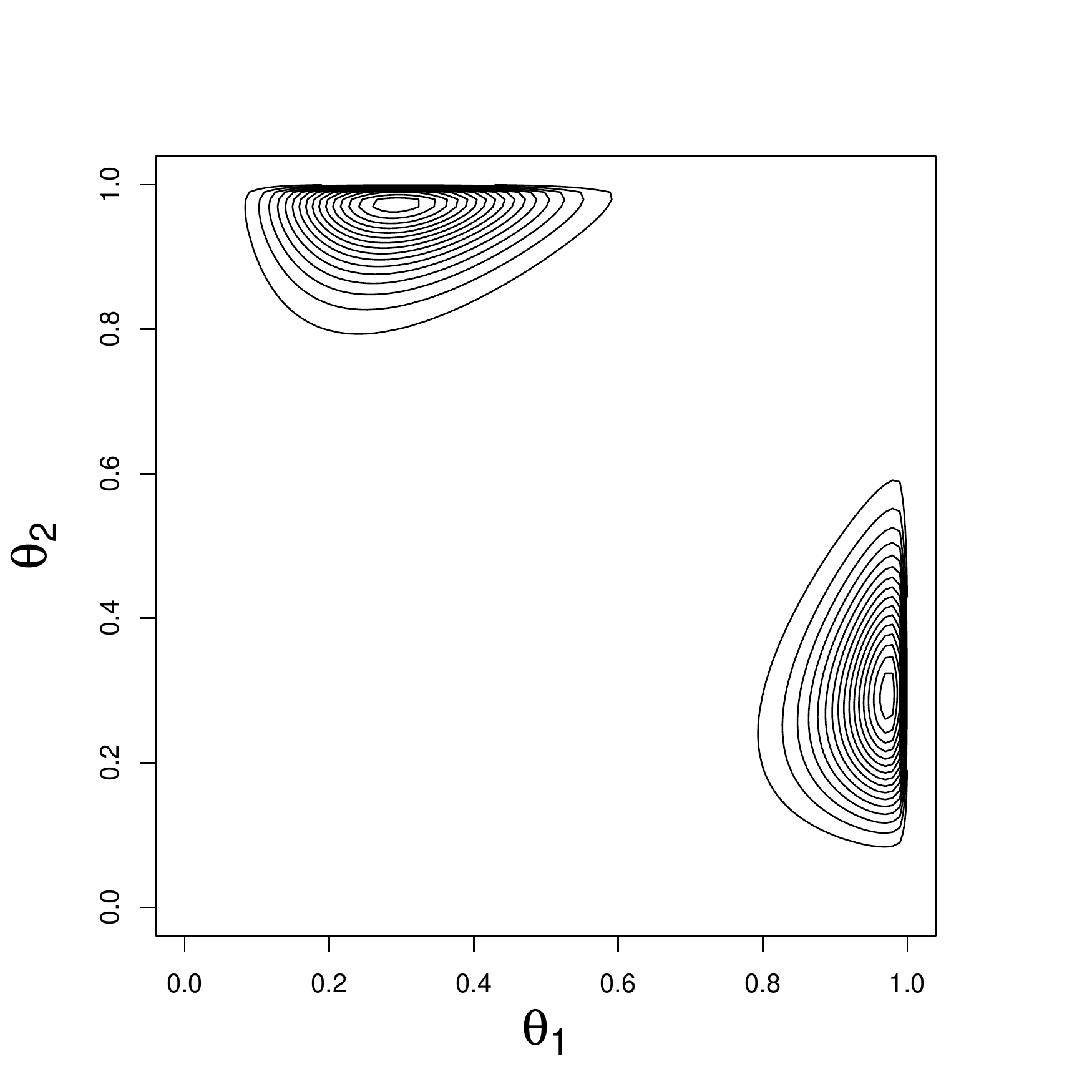}}
	\subfloat[{\tiny $(\alpha, \beta, \zeta) = (10,5,5)$}]{\includegraphics[width=0.35\linewidth]{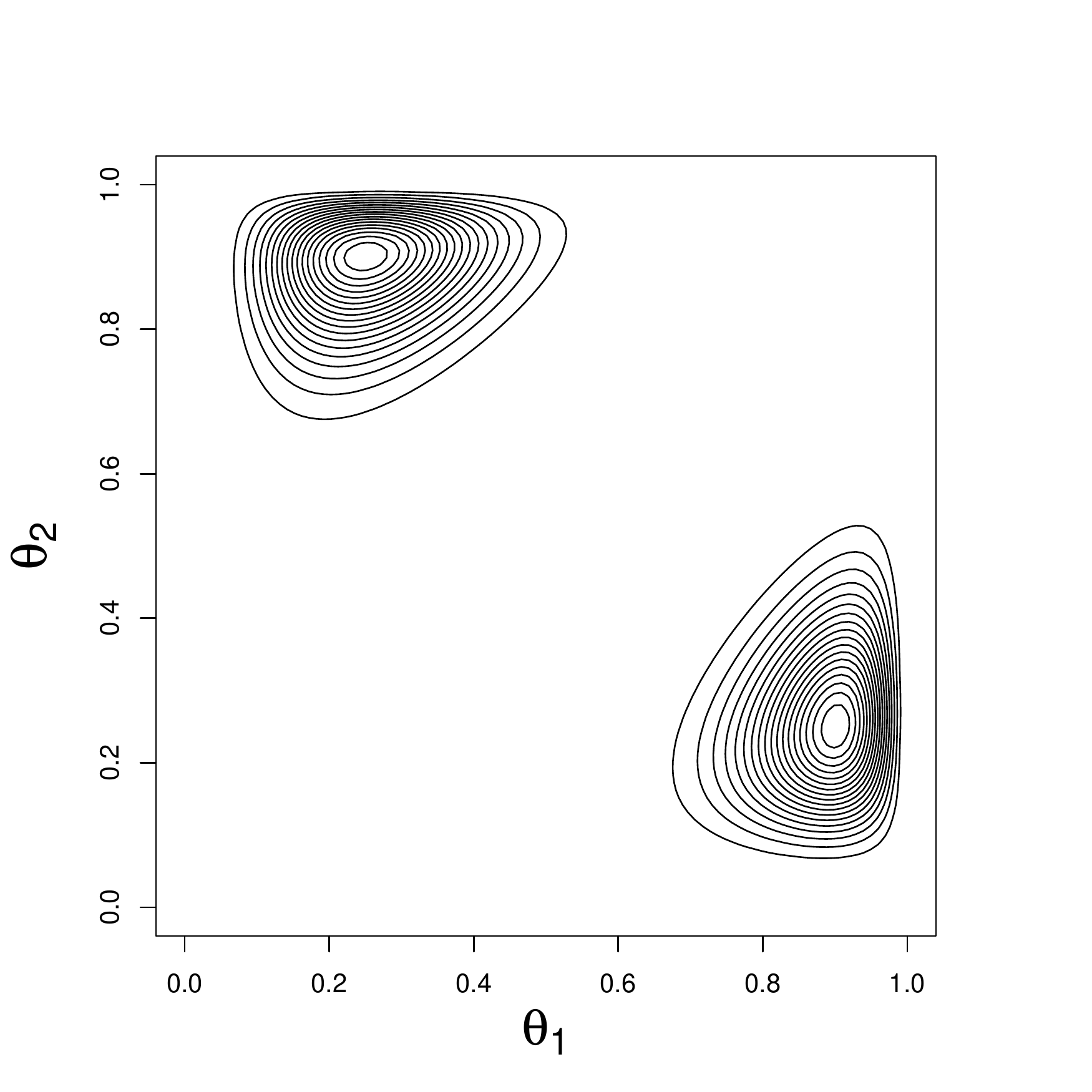}}
	\subfloat[{\tiny $(\alpha, \beta, \zeta) = (10,10,5)$}]{\includegraphics[width=0.35\linewidth]{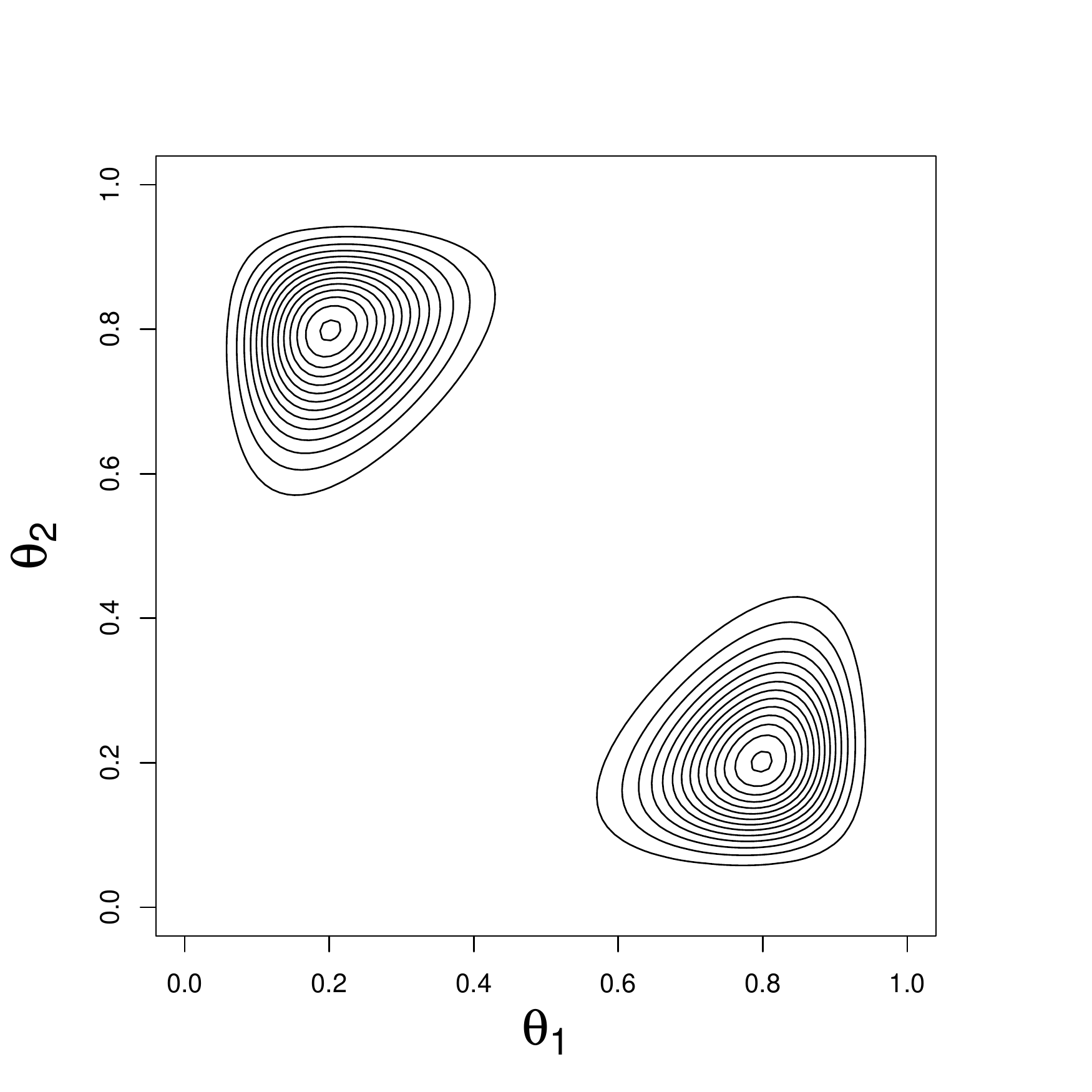}}
	\caption{Jacobi ensemble. Joint distribution of the eigenvalues of a 2-dimensional Beta matrix for different choices of the parameters $(\alpha, \beta, \zeta)$.}
	\label{fig:Beta_pdf_prior}
\end{figure}

We conclude this section by noting that the such eigenvalues distribution share a common form as the repulsive priors proposed previously: they are given by the product of independent kernels (in the three cases Gaussian, Gamma, Beta, respectively) and a repulsion term which is given by $\exp\left(\log \left(r_{ij}\right) \right)$. The repulsion we propose grows slower than the one utilised, for instance, in \cite{quinlan2021class} and allows for analytical tractability. On the other hand, their repulsion is very strong near the origin. In the Wishart case, when we do not have the identity matrix, we lose the normalising constant as well.

\section{Gibbs Measures and Chaos}\label{sec:gibbs-measures-and-chaos}

Allowing the number of components in our model to be arbitrarily large introduces some additional theoretical difficulties. Namely, we require the existence of an infinite-dimensional object from which our finite-dimensional prior distributions can be obtained. The most familiar method of obtaining such an infinite-dimensional object is the Kolmogorov's extension theorem. In this theorem, the existence and uniqueness of the infinite-dimensional distribution of interest are guaranteed under mild conditions on a set of finite-dimensional distributions. These finite-dimensional distributions are thus obtained from the infinite-dimensional one via marginalisation. One of these conditions, usually referred to as Kolmogorov consistency, is not satisfied by the finite-dimensional distributions presented in this work. Instead of Kolmogorov consistency, we consider the notion of DLR consistency, named for its theorists Dobrushin, Randford and Ruelle \citep{Dobrushin1968GibbsianRF, lanford1969observables}. The DLR approach is based on the definition of \textit{specifications}, i.e. families of probability kernels constructed using a given Hamiltonian (i.e., the \textit{interaction principle}) that are consistent via composition of kernels operation. Note that the notion of consistency here is different from the Kolmogorov one, which requires consistency via marginalisation. Under some conditions, it is possible to find a set of infinite-dimensional probability measures which is compatible with a given specification, i.e. each kernel in the specification yields a regular conditional distribution. The resulting set is a set of Gibbs measures. It can be shown that this set, when it exists, is a simplex, and as such every element in it is a linear combination of its extreme points (also called in statistical mechanics infinite-volume Gibbs states, in relation with the states of a physical system undergoing a phase transition). We refer to \cite{friedli_velenik_2017}, ch. 6, for a thorough exposition on this topic. Unlike in the Kolmogorov case, DLR consistency guarantees neither existence nor uniqueness. Nevertheless, it is the natural notion for specifying an infinite-dimensional object in terms of conditional, rather than marginal, distributions. It is at the heart of the theory of Gibbs measures. To obtain additional useful properties, e.g., existence and uniqueness, several tools are available. Here, we make use of a large deviation principle.

Once again we refer to work in statistical physics, where the need for an alternative definition (i.e. different from the Kolmogorov one) of a measure on an infinite-dimensional space derives from the necessity to describe complex systems of particles interacting with each other \citep{georgii2011gibbs}. Indeed, in this setting, Gibbs measures can be used to describe the equilibrium states of the particles with respect to the given interaction at a fixed temperature conditionally on the external system. Gibbs measures are defined in terms of conditional distributions rather than marginal distributions.

Consider a system of particles with configuration $\bm \theta$ and internal energy equal to $\mathcal{H}\left(\bm \theta \mid \zeta,h\right)$ in Eq.~\eqref{eq:EnergyIntro}, described by the Boltzmann Entropy law in Eq.~\eqref{eq:Boltzmann_Entropy}. Depending on the expressions chosen for the functions $\psi_1 $ and $\psi_2$ in the Hamiltonian $\mathcal{H}\left(\bm \theta \mid \zeta,h\right)$, a Gibbs measure corresponding to this specification of the conditional distributions needs not to exist, but even more interestingly--and in contrast to the Kolmogorov extension--its uniqueness is also not guaranteed. A popular example of such models is the Ising model for $d \geq 2$ \citep{ising1924beitrag}. Interestingly, for certain values of the parameters $h$ and  $\zeta$, the Ising model has a unique associated Gibbs measure, but its marginal distributions are never consistent (in the Kolmogorov sense) \citep{friedli_velenik_2017}. Thus, we see that the marginal distributions associated with a Gibbs measure do not generally satisfy the Kolmogorov consistency conditions, even when the associated Gibbs measure is unique. See Section \ref{sec:IsingExample} for a thorough discussion.

In general, there is no way to express the marginal distributions associated to an infinite-lattice version of the measure \eqref{eq:Boltzmann_distr} without making explicit reference to the Hamiltonian \citep{friedli_velenik_2017}. Moreover, the physical systems with distributions of the form \eqref{eq:Boltzmann_distr} are meant to model systems that undergo so-called phase transitions, abrupt changes in the physical properties of a system, such as the familiar liquid-to-gas transition of water boiling into steam. The association of phase transitions to non-uniqueness of Gibbs measure is due to \cite{Dobrushin1968GibbsianRF}, but it does not coincide with the physical notion in all cases \citep{RAS2015LDPgibbs}.

As Gibbs measures, we will use distributions derived from random matrices, so we will be able to establish the existence and uniqueness of the limiting Gibbs measures by using the tools of random matrix theory.

\paragraph{Chaos}

Beyond existence and uniqueness, we will show that the distributions \eqref{eq:Eigen_Gaussian}, \eqref{eq:Eigen_Wishart} and \eqref{eq:Eigen_Beta} satisfy another useful property, which in the physics literature is sometimes referred to as molecular chaos and can be seen as a sort of asymptotic independence. 
In the context of thermodynamic studies, molecular chaos is closely related to the second law of thermodynamic, which states that the total entropy of an isolated system never decreases. In particular, the statistical foundation of the second law of thermodynamics are based on the \textit{molecular chaos hypothesis}, assuming that the velocities of colliding particles are uncorrelated and independent of their individual positions \cite{boltzmann2003further}. 
More formally, let $p_M(\theta_1,\dots,\theta_M)$ be the joint law of some collection of random variables and let
\begin{equation}\label{marginal_dist}
p_{M,k} = \int\limits_{\mathbb{R}^{M-k}} p_M(\theta_1,\dots,\theta_M) d\theta_{k+1} \cdots d\theta_M
\end{equation}
be the marginal distribution of the first $k$ variables. The functions $p_{M,k}$ are referred to as correlation functions in some sources \citep{Monvel1995}. Let $f$ be a distribution on $\mathbb{R}$. Then, the sequence $\{p_M\}_{M=1}^\infty$ is called $f$-chaotic if $p_{M,k}$ converges weakly to the product $\prod_{i = 1}^k f(x_i)$. This notion is equivalent to another that is more familiar in random matrix theory. First, we need another definition.

Given the random variables $\theta_1, \dots, \theta_M$, we can define the empirical measure
\[ 
\mu_{M} := \frac{1}{M} \sum_{i=1}^M \delta_{\theta_i}
\]
where $\delta_x$ is the Dirac measure at $x$. It is well known in the literature \citep{JW2017MFLSP} that weak convergence of the empirical  measure to a measure $f$ is equivalent to the sequence $p_{M,k}$ being $f$-chaotic. When the random variables are the eigenvalues of a symmetric matrix, the empirical measure is referred to as the empirical spectral distribution. For the Gaussian \citep{wigner1958distribution}, Laguerre \citep{mp1967wishart} and Jacobi \citep{bai2015beta} ensembles, the empirical spectral distributions converge under an appropriate limit. Convergence of the empirical measures for the generalised distributions discussed previously, as well as the existence and uniqueness of the associated Gibbs measure, follows from a so-called large deviation principle \citep[LDP,][]{agz2009rmt}. We end this section by formalising the idea of an LDP and showing its connection to Gibbs measures.

\paragraph{Large Deviations}

The most classical method for showing the uniqueness of a Gibbs measure for a particular system are conditions due to Dobrushin \citep{dobrushin1970prescribing} and Dobrushin and Shlosman \citep{DobrushinShlosman1985}. These are often referred to as the Dobrushin Uniqueness Criterion and the Dobrushin-Shlosman Uniqueness Criterion, respectively. Here, we take a different approach that is more convenient in the context of random matrices, so we will not say more about these criteria, other than to note that they give sufficient, but not necessary conditions for uniqueness of a Gibbs measure.

Let $\mu_n$ be a sequence of probability measures on a topological space $T$. Let $I: T \to [0,\infty]$ be a lower-semicontinuous function and $r_n$ a sequence of positive real constants. Then the sequence $\{\mu_n\}$ is said to satisfy a LDP with rate function $I$ and scale $r_n$ if
\begin{enumerate}
\item for all open $G \subset T$, we have
\[ 
\liminf_{n} \frac{1}{r_n} \log \mu_n(G) \geq -\inf_{x \in G} I(x) 
\]
\item for all closed $F \subset T$, we have
\[ 
\limsup_{n} \frac{1}{r_n} \log \mu_n(F) \leq -\inf_{x \in F} I(x) 
\]
\end{enumerate}
When the sets $\{I \leq c\}$ are compact, the rate function is often called good or tight.

LDPs can be very helpful when working with Gibbs measures because every Gibbs measure corresponds to a zero of the rate function (see below), and we will be able to show that the rate function associated to the empirical measure has a unique zero. The rate function $I$ can be defined via a certain limit, and is guaranteed to exist by Theorem 8.3 in \cite{RAS2015LDPgibbs}. 

\begin{theorem}[DLR Variational Principle, \cite{RAS2015LDPgibbs}]
Fix a shift-invariant absolutely summable continuous interaction principle $\Phi$. Let $\mathcal{M}_\theta(\Omega)$ denote the space of invariant probability measures and let $\gamma \in \mathcal{M}_\theta(\Omega)$. Then $\gamma$ is a Gibbs measure for the specification determined by $\Phi$ if and only if $\gamma$ is a zero of the rate function.
\end{theorem}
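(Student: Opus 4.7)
The plan is to identify the rate function $I$ with the specific free energy functional and then invoke the classical Gibbs variational principle. For the empirical field of a system governed by $\Phi$, one has (under the hypotheses on $\Phi$) a representation of the form
\[
I(\gamma) \;=\; \mathcal{E}_\Phi(\gamma) + h(\gamma) - p(\Phi),
\]
where $\mathcal{E}_\Phi$ is the specific energy associated to $\Phi$, $h$ is the specific (mean) entropy relative to a product reference measure on $\Omega$, and $p(\Phi)$ is the pressure. Since $I \geq 0$ and $\inf I = 0$, zeros of $I$ are precisely the invariant measures attaining the supremum in the variational formula $p(\Phi) = \sup_{\gamma \in \mathcal{M}_\theta(\Omega)}\{-\mathcal{E}_\Phi(\gamma) - h(\gamma)\}$. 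Thus the theorem reduces to showing that the maximizers of this variational problem coincide with the Gibbs measures of the specification determined by $\Phi$.

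For the forward direction, I would take $\gamma$ a Gibbs measure for $\Phi$ and rewrite, on each finite box $\Lambda$, the finite-volume density of $\gamma$ against the reference measure using the DLR equations and the definition of the Gibbs kernel. After dividing by $|\Lambda|$ and passing to the limit along an increasing Van Hove sequence, the shift invariance of $\gamma$ together with absolute summability of $\Phi$ identifies $|\Lambda|^{-1}$ times the finite-volume relative entropy with $\mathcal{E}_\Phi(\gamma) + h(\gamma) - p(\Phi)$; the Gibbs property forces this limit to be zero.

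For the reverse direction, suppose $I(\gamma) = 0$. I would interpret this as vanishing of the specific relative entropy $h(\gamma \,|\, \gamma^\Phi)$ of $\gamma$ against the Gibbs specification, obtained by reorganizing the three terms in $I$ using the explicit form of the Gibbs kernel on each $\Lambda$. Vanishing of the specific relative entropy implies, via a Föllmer-type additivity argument and the fact that relative entropy bounds total variation, that for each finite $\Lambda$ the conditional distribution of $\gamma$ on $\Lambda$ given the configuration outside $\Lambda$ agrees $\gamma$-a.s.\ with the Gibbs kernel, which is exactly the DLR condition.

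The main obstacle will be the identification of the LDP rate function with the specific free energy in a form compatible with the DLR equations: one must show that $\mathcal{E}_\Phi$ is well-defined, affine, and upper-semicontinuous on $\mathcal{M}_\theta(\Omega)$, that $h$ is lower-semicontinuous and affine, and that the finite-volume relative entropies admit a subadditive decomposition along Van Hove sequences. All of these use shift invariance and absolute summability of $\Phi$ in an essential way, and they are the technical content that lets the two directions above reduce to manipulations with the Gibbs kernel; modulo this machinery the theorem becomes a clean application of the Gibbs variational principle.
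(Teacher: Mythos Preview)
Your outline is a faithful sketch of the standard proof of the DLR variational principle as it appears in the cited reference (Rassoul-Agha and Sepp\"al\"ainen, and equivalently in Georgii's monograph): identify the large-deviation rate function for the empirical field with the specific free energy $\mathcal{E}_\Phi + h - p(\Phi)$, then use the Gibbs variational principle to characterize its zeros as exactly the shift-invariant Gibbs measures, with the two directions handled via the DLR equations and a F\"ollmer-type argument on vanishing specific relative entropy. As a proof plan there is no genuine gap.

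However, there is nothing in the paper to compare it against. The paper does not prove this theorem; it is stated as a quotation from \cite{RAS2015LDPgibbs} and used as a black box to motivate why establishing an LDP with a unique zero of the rate function yields a unique Gibbs measure. The paper's own technical work is entirely in the subsequent section, where Lemmas~\ref{lemma1}--\ref{lemma4} establish the LDP for the Jacobi ensemble. So your proposal is not wrong, but it is a reconstruction of background material from the cited textbook rather than of anything the authors themselves supply.
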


To help build intuition about the usefulness of large deviations theory, we first present a classical result. We will use the notation $\mathcal{M}_1(\mathcal{S})$ to denote the collection of probability measures on the space $\mathcal{S}$.

\begin{definition}
Let $\nu, \lambda \in \mathcal{M}_1(\mathcal{S})$. Then, the entropy of $\nu$ relative to $\lambda$ is defined to be
\[ H(\nu|\lambda) = \left\{\begin{array}{lll} \int f \log f d\lambda & &  \nu \ll \lambda \textup{ and } f = \frac{d\nu}{d\lambda} \\ \infty & & \textup{otherwise.}  \end{array}\right. \]
\end{definition}

\begin{theorem}[Sanov's Theorem. \citep{sanov1961probability}]
Let $\mathcal{S}$ be a Polish space and let $\lambda \in \mathcal{M}_1(\mathcal{S})$ be a probability measure on $\mathcal{S}$. Let $\{X_n\}$ be a sequence of i.i.d. $\mathcal{S}$-valued $\lambda$-distributed random variables. Let $L_n$ be the associated empirical measure and define $\rho_n(A) = \mathbb{P}(L_n \in A)$ for Borel subsets $A \subset \mathcal{M}_1(\mathcal{S})$. Then the family $\rho_n$ satisfies an LDP with rate function $I(\nu) = H(\nu|\lambda)$.
\end{theorem}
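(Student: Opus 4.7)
The plan is to follow the classical route for Sanov's theorem, combining the exponential Chebyshev inequality with the Donsker--Varadhan variational representation of relative entropy for the upper bound, and a change-of-measure (tilting) argument for the lower bound. Throughout, $L_n = n^{-1}\sum_{i=1}^n \delta_{X_i}$ is viewed as an $\mathcal{M}_1(\mathcal{S})$-valued random variable under the weak topology. I will take for granted the standard facts that $H(\cdot|\lambda)$ is convex, lower semicontinuous with weakly compact sub-level sets, and satisfies the variational formula
\[
H(\nu|\lambda) = \sup_{f \in C_b(\mathcal{S})} \left\{\int f \, d\nu - \log \int e^f \, d\lambda\right\}.
\]

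For the \textbf{upper bound} on a closed $F$, I would fix $\nu_0 \in F$, pick $f \in C_b(\mathcal{S})$, and apply Chebyshev's inequality to $\exp\bigl(n \int f \, dL_n\bigr) = \prod_i e^{f(X_i)}$. For any weak neighbourhood $U$ of $\nu_0$ this yields
\[
\frac{1}{n} \log \mathbb{P}(L_n \in U) \leq \log \int e^f \, d\lambda - \inf_{\nu \in U} \int f \, d\nu.
\]
Letting $U \downarrow \{\nu_0\}$ and then taking $\sup$ over $f$ recovers $-H(\nu_0|\lambda)$. A finite covering argument that relies on exponential tightness upgrades this local estimate to $\limsup_n n^{-1} \log \rho_n(F) \leq -\inf_{\nu \in F} H(\nu|\lambda)$.

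For the \textbf{lower bound} on an open $G$, fix $\nu \in G$ with $H(\nu|\lambda) < \infty$ (the infinite case being vacuous) and set $f = d\nu/d\lambda$. Changing measure from $\lambda^{\otimes n}$ to $\nu^{\otimes n}$ gives
\[
\mathbb{P}(L_n \in G) = \int_{\{L_n \in G\}} \exp\left(-n \int \log f \, dL_n\right) d\nu^{\otimes n}.
\]
Under $\nu^{\otimes n}$ the law of large numbers yields $L_n \to \nu$ weakly and, after truncating $\log f$ and using $H(\nu|\lambda) < \infty$ to dominate the tails, $\int \log f \, dL_n \to H(\nu|\lambda)$. Since $\nu$ lies in the open set $G$, the event on which both statements hold within tolerance $\varepsilon$ has $\nu^{\otimes n}$-probability tending to one. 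Restricting the integral to this event, passing to the limit, sending $\varepsilon \to 0$, and finally optimising over $\nu \in G$ yields $\liminf_n n^{-1} \log \rho_n(G) \geq -\inf_{\nu \in G} H(\nu|\lambda)$.

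The \textbf{main obstacle} is exponential tightness of $\{\rho_n\}$ in the weak topology when $\mathcal{S}$ is not compact, which is required to pass from the local Chebyshev bound to a uniform upper bound over closed sets. The standard remedy is to choose an increasing sequence of compacts $K_j \subset \mathcal{S}$ with $\lambda(K_j^c)$ decreasing geometrically, use a Chernoff estimate on the Bernoulli variables $\mathbf{1}_{X_i \in K_j^c}$ to control $\mathbb{P}(L_n(K_j^c) > \delta_j)$, and then intersect the resulting events to cut out a weakly compact subset of $\mathcal{M}_1(\mathcal{S})$. A cleaner alternative is to first prove Sanov directly for finite partitions of $\mathcal{S}$ (where the method of types together with Stirling's formula yields the LDP combinatorially) and then lift to the Polish setting via the Dawson--G\"artner projective limit theorem.
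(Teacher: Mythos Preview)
The paper does not actually prove Sanov's theorem; it is stated there only as a classical background result, with a citation to \cite{sanov1961probability}, to help build intuition about large deviations before the authors turn to their own LDP for the Beta matrix ensemble. Consequently there is no ``paper's own proof'' to compare against.

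That said, your outline is the standard textbook route and is essentially correct: the Donsker--Varadhan variational formula combined with the exponential Chebyshev bound gives the local upper bound, a tilting/change-of-measure argument together with the law of large numbers under $\nu^{\otimes n}$ gives the lower bound, and you correctly identify exponential tightness on a non-compact Polish space as the main technical point, offering both the direct Chernoff-on-compacts construction and the finite-partition/Dawson--G\"artner alternative. Either of these would close the argument; the second is the one found in, e.g., Dembo--Zeitouni.
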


In the case when $\mathcal{S} = \mathbb{R}$ and $\lambda$ has density $f$, we can interpret Sanov's theorem as saying that, under the assumption that our data is i.i.d. with distribution $f$, the probability that a histogram looks like the density of a different measure $\nu$ decays like $e^{-H(\nu|\lambda)n}$ \citep{RAS2015LDPgibbs}.

\paragraph{Connections}

Then weak convergence of the sequence of empirical measures to some deterministic measure $\mu$ is equivalent to the sequence $p_M$ being $\mu$-chaotic \citep{JW2017MFLSP}. Moreover, if the sequence $p_M$ satisfies a LDP, then it must converge weakly, so we also have that LDP implies chaos. The Dobrushin Uniqueness criterion is strictly stronger than the existence of a unique Gibbs measure.

\begin{figure}[h]
	\centering
\begin{tikzpicture}[scale=0.65]
	\node (LDP) at (-2.5,3.5) {LDP};
	\node (UGibbs) at (-2.5,7) {\makecell{Unique \\ Gibbs \\ Measure}};
	\node (UDobr) at (2.5,7) {\makecell{Dobrushin \\ Uniqueness}};
	\node (EDC1) at (-2.5,0) {\makecell{Empirical \\ Distribution \\ Converges}};
	\node (MC) at (3,0) {\makecell{Molecular \\ Chaos}};
		
	\draw[-latex, <->] (LDP) -- node [midway, above] {{\tiny DLR}}(UGibbs);
	\draw[-latex, ->] (UDobr) --node [midway, above] {{\tiny Dobrushin}} (UGibbs);
	\draw[-latex, ->] (LDP) -- node [midway, above] {{\tiny Guionnet et. al.}} (EDC1);
	\draw[-latex, <->] (EDC1) --node [midway, above] {{\tiny Jabin \& Wang}} (MC);
\end{tikzpicture}
	\caption{Connections between properties of Gibbs Measures.}\label{fig:Th_Links}
\end{figure}
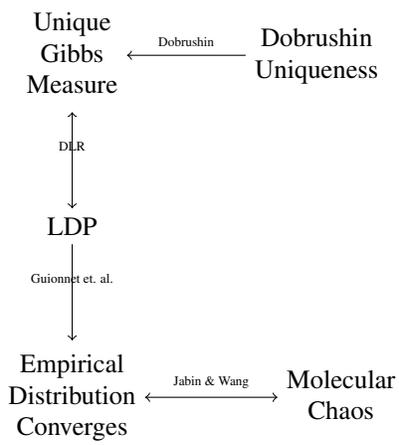

The connections between these various concepts are visualised in the diagram in Figure \ref{fig:Th_Links}. In brief
\begin{itemize}
\item The DLR Variational Principle tells us that uniqueness of the Gibbs measure is equivalent to having a LDP whose rate function has a unique zero
\item A LDP also implies the convergence of the empirical measure, but the converse is not true in general
\item Convergence of the empirical measure is equivalent to molecular chaos
\end{itemize}

\section{Large Deviations}\label{sec:LDP}

As discussed previously, measures of the type \eqref{eq:Boltzmann_distr} and \eqref{eq:Gibbs_distr} are often used in statistical physics to model interacting systems of particles. In this context, one is interested in the conditional distribution conditioned on events of extremely low probability. Bounds on the probability of rare events (e.g., the weak law of large numbers) are an elementary topic in probability. In the context of statistical mechanics, the usual bounds are generally insufficient, so we turn to LDPs. As we have just seen, establishing a LDP will have far-reaching consequences. LDPs have already been shown for the Gaussian and the Wishart type distributions introduced earlier \citep{guionnet2004large, HP1998WishartLDP}. Here, we present an analogous result for the  matrix variate Beta distribution discussed above. Our approach is inspired by \cite{HP2000unitaryLDP}.

In order to establish a LDP for the Beta matrix, we need some control over the growth of the parameters. Namely, we require that $M$ grows with $N_1$ and $N_2$ and that the following relationships are satisfied:
\begin{equation*}
\begin{gathered}
    \lim_{N_1 \to \infty} \frac{M}{N_1} = a, \quad \lim_{N_2 \to \infty} \frac{M}{N_2} = b \\
    \lim_{N_1 \to \infty} \lim_{N_2 \to \infty} \frac{M}{N_1+N_2} = \frac{ab}{a+b} =:c \in (0,1)
\end{gathered}
\end{equation*}
These assumptions are required in order for the empirical spectral measure to converge \citep{bai2015beta}. We will also require that the distance between the parameters $N_1$ and $N_2$ does not grow too quickly. For notational simplicity, we set $\gamma(x) = x - (M+1)/2$ and let $A = a(1-a/2)$. Then we require that the parameters $N_1$ and $N_2$ satisfy 
\[ 
\lim_{N_1 \to \infty} \frac{\gamma\left(N_1\right)}{N_1} = \lim_{N_1 \to \infty} \frac{\gamma\left(N_2\right)}{N_1} = 1 - \frac{a}{2} 
\]
For example, requiring that $\mid N_1 - N_2 \mid \leq k$ ensures this condition is met. More generally, sublinear growth of the distance between the parameters is sufficient. Using the notation above, we write the joint distribution as
\begin{equation}\label{betaLDP} 
p_M\left(\bm \theta \mid \gamma\right) = \mathcal{B}^{-1}_{N_1,N_2, \zeta, M} \prod_{l = 1}^M \left(\theta_l\right)^{\gamma\left(N_1\right)}\left(1 - \theta_l\right)^{\gamma\left(N_2\right)} \prod_{i < j} |\theta_i - \theta_j|^{2 \zeta}
\end{equation}


We state our main theorem below. We use the notation $\mathcal{M}([0,1])$ to denote the collection of probability measures on the interval $[0,1]$. It is important to note that $[0,1]$ is compact and $\mathcal{M}([0,1])$ equipped with the weak-$*$ topology is compact and metrisable. 

\begin{theorem}
Suppose $\theta_1, \dots, \theta_M$ are distributed according to \eqref{betaLDP}, and let $\mu_{\bm U_M}$ be the associated empirical measure. Suppose that the conditions on $N_1,N_2,M$ described above hold, and let $N=N_1$. Then the limit
\begin{equation*}
        \lim_{N \to \infty} \frac{1}{N^2} \log \mathcal{B}_{N_1,N_2, \zeta, M} := B < \infty
\end{equation*}
exists and $\mu_{\bm U_n}$ satisfies the large deviation principle in the scale $N^{-2}$ with rate function
\begin{equation*}
    \begin{gathered}
    I(\mu) = -2a^2\zeta\int\limits_0^1\int\limits_0^1 \log|x-y| d\mu(x)d\mu(y) \\
    - A \int\limits_0^1 \log x + \log(1-x) d\mu(x) + B
        \end{gathered}
\end{equation*}
for $\mu \in \mathcal{M}([0,1])$. Moreover, there exists a unique $\mu_0 \in \mathcal{M}([0,1])$ such that $I(\mu_0) = 0$.
\end{theorem}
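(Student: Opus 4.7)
The plan is to adapt the Ben Arous--Guionnet / Hiai--Petz program for log-gases, exploiting that $[0,1]$ is compact so $\mathcal{M}([0,1])$ with the weak-$*$ topology is compact and metrisable (hence exponential tightness is automatic). The starting point is to rewrite $\log p_M(\bm\theta)$ in terms of the empirical measure $\mu_M=\tfrac{1}{M}\sum_i\delta_{\theta_i}$ on the scale $N^2$. Using $\sum_l\log\theta_l=M\int\log x\,d\mu_M$ and $\sum_{i<j}\log|\theta_i-\theta_j|=\tfrac{M^2}{2}\iint_{x\ne y}\log|x-y|\,d\mu_M\otimes d\mu_M$, together with the scaling $M/N\to a$ and $\gamma(N_j)/N\to 1-a/2$, the exponent of $p_M$ becomes $-N^2 F_M(\mu_M)+o(N^2)$, where $F_M\to F$ uniformly on weak-$*$ compacts and $F$ is (up to sign and the normalisation constant $B$) the functional in the statement.

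For the upper bound on a closed set $C\subset\mathcal{M}([0,1])$, I would apply the exponential Laplace method: replace $\log|x-y|$ and the boundary weights by bounded continuous truncations $\max(\log|x-y|,\log\epsilon)$ and the analogues for $\log x$, $\log(1-x)$; these truncated functionals are weakly-$*$ continuous, so the infimum of $I$ over a neighbourhood of $C$ controls $\mathbb{P}(\mu_M\in C)$. Sending $\epsilon\downarrow 0$ via monotone convergence and shrinking the neighbourhood yields $\limsup N^{-2}\log\mathbb{P}(\mu_M\in C)\le-\inf_C I$.

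For the lower bound on an open set $O$, given $\mu\in O$ with $I(\mu)<\infty$, I would first regularise $\mu$ by a measure $\mu_\delta$ with bounded density supported in $[\delta,1-\delta]$, then discretise $\mu_\delta$ by its quantile points $x_1^*<\cdots<x_M^*$, and lower-bound $\mathbb{P}(\mu_M\in O)$ by integrating $p_M$ over a small product neighbourhood of $(x_1^*,\dots,x_M^*)$ of side comparable to $1/M^2$. Because $\mu_\delta$ has bounded density, the spacings satisfy $x_{i+1}^*-x_i^*\gtrsim 1/M$, so $\log|x_i^*-x_j^*|$ is uniformly controllable; a Riemann-sum calculation then gives $\liminf N^{-2}\log\mathbb{P}(\mu_M\in O)\ge F(\mu_\delta)-B$, and letting $\delta\downarrow 0$ using lower-semicontinuity of $-F$ completes the matching. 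The upper and lower matching, combined with $\int p_M\,d\bm\theta=1$, jointly force the existence of $B=\lim N^{-2}\log\mathcal{B}_{N_1,N_2,\zeta,M}=\inf_\mu(-F(\mu))$, which both establishes the limit claimed for the normalising constant and finishes the LDP.

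Uniqueness of $\mu_0$ then follows from strict convexity of the logarithmic energy $\mu\mapsto-\iint\log|x-y|\,d\mu\,d\mu$ on $\mathcal{M}([0,1])$, a classical fact of potential theory (proved e.g.\ via its Fourier representation), together with linearity of the single-particle terms; a strictly convex good rate function on a compact convex set attains its infimum uniquely, and the Euler--Lagrange equation characterises $\mu_0$ as a Jacobi-type equilibrium measure. I expect the main technical obstacle to be the lower bound near $\{0,1\}$: the weights $\log x$ and $\log(1-x)$ blow up there and $\mu_0$ may itself have unbounded density at the boundary, so the regularisation $\mu\mapsto\mu_\delta$ must be constructed carefully to ensure $F(\mu_\delta)\to F(\mu)$ as $\delta\downarrow 0$ without artefacts from the discretisation grid. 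The sublinear-growth hypothesis on $|N_1-N_2|$ enters exactly here, keeping the two boundary contributions in balance so that a single regularisation parameter $\delta$ suffices on both ends.
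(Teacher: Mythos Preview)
Your proposal is correct and follows essentially the same Hiai--Petz program the paper adopts: truncation of the kernel for the upper bound (the paper's $F_R=\min(F,R)$ plays the role of your $\max(\log|x-y|,\log\epsilon)$), quantile discretisation of a regularised $\mu$ for the lower bound (the paper's constants $r_{i,N},s_{i,N}$ in Lemma~\ref{lemma4}), matching the two bounds to extract the limit $B$, and strict convexity for uniqueness of $\mu_0$. The only cosmetic difference is that the paper regularises by assuming without loss of generality that $\mu$ has a continuous density bounded between $\epsilon$ and $1/\epsilon$ on all of $[0,1]$, whereas you restrict support to $[\delta,1-\delta]$; either works.
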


In this case, if $\mathcal{A}$ is a base for the topology, then the large deviation principle is equivalent to the conditions \citep{dembo2009large}
\begin{align}
-I(x)  & = \inf \left\{ \limsup_{n \to \infty} \frac{1}{r_n} \log \mu_n(G) : G \in \mathcal{A}, x \in G \right\} \\ &  = \inf \left\{ \liminf_{n \to \infty} \frac{1}{r_n} \log \mu_n(G) : G \in \mathcal{A}, x \in G \right\}
\end{align}
Intuitively, we can think of these as conditions on the tails of the distribution. Now, to obtain our theorem, we must show that
\begin{equation}\label{LDP1}
-I(\mu)  \geq \inf \left\{ \limsup_{n \to \infty} \frac{1}{r_n} \log \mu_n(G) : G \right\}
\end{equation}
and
\begin{equation}\label{LDP2}
-I(\mu)  \leq \inf \left\{ \liminf_{n \to \infty} \frac{1}{r_n} \log \mu_n(G) : G \right\}
\end{equation}
where $G$ runs over neighbourhoods of $\mu$.

In general, the conditions above are equivalent only to a weak LDP, in which case another property of the sequence of measures, exponential tightness, is required to establish the full LDP \citep{RAS2015LDPgibbs}. Due to compactness of the underlying space, we do not need this additional property for the full LDP.

The proof of our theorem will be accomplished via several lemmas, for which proofs are given in Appendix~\ref{appendix1}. In those lemmas, we will make frequent use of the following functions, defined for $(x,y) \in (0,1)^2$, the interior of the unit square.
\begin{align}
    F(x,y) := &-a^2\zeta\log|x-y| \nonumber \\
    & - \frac{A}{2} \left[\log x + \log y + \log(1-x) + \log(1-y) \right]
\end{align}
and
\begin{align}
    F_{N}(x,y) = &-2\frac{M^2}{N^2}\zeta\log|x-y| \nonumber \\
    &- \frac{M\gamma(N_1)}{2N^2} (\log x + \log y) \\
    &- \frac{M\gamma(N_2)}{2N^2}(\log(1-x)+\log(1-y)) \nonumber
\end{align}
For $R > 0$, define
\begin{align}
&F_R(x,y) := \min(F(x,y),R) \\
&F_{R,N}(x,y) := \min(F_{N}(x,y),R)
\end{align}
Note that $F_{R,N} \to F_R$ uniformly as $N \to \infty$. Moreover, for each $R$, $F_R(x,y)$ is bounded and continuous on $[0,1]^2$. Then, since $F_R(x,y) \to F(x,y)$ monotonically as $R \to \infty$, the functional $J: \mathcal{M}([0,1]) \to \mathbb{R}$ defined by
\[ 
J(\mu) := \int\limits_0^1\int\limits_0^1 F(x,y) d\mu(x) d\mu(y) 
\]
is lower-semicontinuous. Similarly, $F_{N}(x,y)$ also defines a lower-semicontinuous functional $J_{N}$. Thus, there exist unique \citep{totik1994weighted} $\mu_0, \mu_{N} \in \mathcal{M}([0,1])$ having compact support such that
\begin{equation}
\int\limits_0^1\int\limits_0^1 F(x,y) d\mu_0(x) d\mu_0(y) = \inf_{\mu \in \mathcal{M}([0,1])} J(\mu)
\end{equation}
and
\begin{equation}
\int\limits_0^1\int\limits_0^1 F_{N}(x,y) d\mu_{N}(x) d\mu_{N}(y) = \inf_{\mu \in \mathcal{M}([0,1])} J_{N}(\mu)
\end{equation}
Finally, note that $I(\mu) = J(\mu) + B$.

\begin{restatable}{lemma}{lemmaone}\label{lemma1}
The sequence $\{\mu_{N}\}$ is tight and 
\begin{equation*}
    \begin{gathered}
    \int\limits_0^1\int\limits_0^1 F(x,y) d\mu_0(x) d\mu_0(y) \leq \\
    \liminf_{N \to \infty} \int\limits_0^1\int\limits_0^1 F_{N}(x,y) d\mu_{N}(x) d\mu_{N}(y)
\end{gathered}
\end{equation*}
\end{restatable}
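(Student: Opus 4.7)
The plan is to combine weak-$*$ compactness of $\mathcal{M}([0,1])$ with the two-step truncation $F_{R,N} \to F_R \uparrow F$ already introduced, and then invoke the minimising property of $\mu_0$. Tightness of $\{\mu_N\}$ is immediate: since $[0,1]$ is compact, every Borel probability measure on $[0,1]$ is tight and, by Prokhorov's theorem, $\mathcal{M}([0,1])$ is sequentially weak-$*$ compact.

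For the inequality, set $L := \liminf_{N \to \infty} J_{N}(\mu_{N})$, where $J_N(\mu) = \int\int F_N\, d\mu \otimes d\mu$, and pick a subsequence along which $J_{N_k}(\mu_{N_k}) \to L$; compactness lets us extract a further subsequence (still indexed by $k$) such that $\mu_{N_k} \to \mu_*$ weakly for some $\mu_* \in \mathcal{M}([0,1])$. Fix $R > 0$. The bound $F_{R,N} = \min(F_{N}, R) \leq F_{N}$ gives
\begin{equation*}
J_N(\mu_N) \;\geq\; \int_0^1 \int_0^1 F_{R,N}(x,y)\, d\mu_{N}(x)\, d\mu_{N}(y).
\end{equation*}
The right-hand side converges along $N_k$ to $\int_0^1 \int_0^1 F_R(x,y)\, d\mu_*(x)\, d\mu_*(y)$: the uniform convergence $F_{R,N} \to F_R$ on $[0,1]^2$ (noted in the excerpt) replaces $F_{R,N}$ by $F_R$ with vanishing error uniformly in the measure, and then the weak convergence $\mu_{N_k} \otimes \mu_{N_k} \to \mu_* \otimes \mu_*$ (a standard consequence of weak convergence of marginals on compact spaces) together with the boundedness and continuity of $F_R$ yields the limit. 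Hence $L \geq \int_0^1 \int_0^1 F_R\, d\mu_*(x)\, d\mu_*(y)$ for every $R$.

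Every singularity of $F$ on $[0,1]^2$ is $+\infty$ (from $-\log|x-y|$ on the diagonal and $-\log x$, $-\log(1-x)$ at the corners), so $F$ is bounded below on the compact square. Applying monotone convergence to $F_R - \inf F \uparrow F - \inf F$ as $R \to \infty$ yields $L \geq J(\mu_*)$, and the minimising property of $\mu_0$ gives $J(\mu_*) \geq J(\mu_0)$, which is exactly the claim. The only genuine obstacle is the logarithmic blow-up of $F$: it rules out a direct Portmanteau argument, and the two-step device — uniform convergence $F_{R,N} \to F_R$ at fixed $R$, followed by monotone convergence $F_R \uparrow F$ — is tailored precisely to handle it, which is the reason the excerpt introduces both truncations rather than working with $F$ alone.
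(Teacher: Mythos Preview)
Your proposal is correct and follows essentially the same route as the paper: tightness from compactness of $[0,1]$, extraction of a weakly convergent subsequence achieving the $\liminf$, the truncation bound $F_{R,N}\le F_N$, passage to the limit via uniform convergence $F_{R,N}\to F_R$ together with weak convergence of the product measures, then monotone convergence $F_R\uparrow F$, and finally the minimising property of $\mu_0$. Your write-up is in fact slightly more careful than the paper's, which omits the explicit lower bound on $F$ needed for the monotone convergence step and has a small slip in its final displayed inequality (writing $F_{R,N}$ where $F_N$ is meant).
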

\textit{Proof}: see Appendix~\ref{appendix1:lemma1proof} \qedsymbol{}

\begin{restatable}{lemma}{lemmatwo}\label{lemma2}
For the normalising constant $\mathcal{B}_{N_1,N_2, \zeta, M}$, we have
\begin{equation*}
    \limsup_{N \to \infty} \frac{1}{N^2} \log \mathcal{B}_{N_1,N_2, \zeta, M} \leq - \int\limits_0^1\int\limits_0^1 F(x,y) d\mu_0(x) d\mu_0(y)
\end{equation*}
\end{restatable}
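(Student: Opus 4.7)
The plan is to express $\log \mathcal{B}_{N_1,N_2,\zeta,M}$ as the logarithm of an exponential integral whose exponent is a functional of the empirical spectral measure $\mu_M := \tfrac{1}{M}\sum_i \delta_{\theta_i}$, and then to bound this exponent from below by the variational problem $\inf_\nu J(\nu)$ studied in Lemma~\ref{lemma1}. Substituting $\sum_l \log\theta_l = M\int \log x\, d\mu_M$, $\sum_l \log(1-\theta_l) = M\int \log(1-x)\, d\mu_M$, and $\sum_{i<j}\log|\theta_i-\theta_j| = \tfrac{M^2}{2}\iint_{x\neq y} \log|x-y|\, d\mu_M(x)\, d\mu_M(y)$ into the integrand of \eqref{betaLDP} and matching coefficients with $F_{N}$, the log of the integrand is $-N^2 \iint_{x\neq y} F_{N}\, d\mu_M\, d\mu_M$ up to an overall factor absorbed into the definition of $F_{N}$, so that
\begin{equation*}
\mathcal{B}_{N_1,N_2,\zeta,M} \;=\; \int_{[0,1]^M} \exp\!\Big( -N^2 \iint_{x\neq y} F_{N}\, d\mu_M\, d\mu_M \Big)\, d\theta.
\end{equation*}

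Because $F_{N}$ is unbounded above near the diagonal of the unit square and along its boundary, one cannot directly compare $\iint_{x\neq y} F_{N}\, d\mu_M\, d\mu_M$ with $\iint F_{N}\, d\nu\, d\nu$ for an arbitrary $\nu \in \mathcal{M}([0,1])$. Following the approach of Hiai--Petz, the key step is to truncate, replacing $F_{N}$ by the bounded, continuous $F_{R,N} = \min(F_{N}, R)$. Since $F_{N} \geq F_{R,N}$ pointwise and $F_{R,N} \leq R$, adding back the $M$ diagonal atoms of $\mu_M \otimes \mu_M$ costs at most $R/M$:
\begin{equation*}
\iint_{x\neq y} F_{N}\, d\mu_M\, d\mu_M \;\geq\; \iint F_{R,N}\, d\mu_M\, d\mu_M \;-\; \tfrac{R}{M}.
\end{equation*}
Plugging this into the previous display, using $\iint F_{R,N}\, d\mu_M\, d\mu_M \geq \inf_{\nu\in\mathcal{M}([0,1])} \iint F_{R,N}\, d\nu\, d\nu$ pointwise in $\theta$, and noting $\mathrm{Vol}([0,1]^M) = 1$, yields the deterministic bound
\begin{equation*}
\tfrac{1}{N^2}\log \mathcal{B}_{N_1,N_2,\zeta,M} \;\leq\; \tfrac{R}{M} \;-\; \inf_{\nu\in\mathcal{M}([0,1])} \iint F_{R,N}\, d\nu\, d\nu.
\end{equation*}

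The argument concludes by iterated limits. Fix $R$ and let $N \to \infty$: since $M \to \infty$, $R/M \to 0$, and since $F_{R,N}\to F_R$ uniformly on $[0,1]^2$ with both functions bounded, $\inf_\nu \iint F_{R,N}\, d\nu\, d\nu \to \inf_\nu \iint F_R\, d\nu\, d\nu$. Then let $R \to \infty$: the pointwise monotone convergence $F_R \uparrow F$ combined with the weak-$*$ compactness of $\mathcal{M}([0,1])$ and the lower semicontinuity of $J$ yields $\inf_\nu \iint F_R\, d\nu\, d\nu \uparrow \inf_\nu J(\nu)$, which by uniqueness of the minimiser $\mu_0$ equals $\iint F\, d\mu_0\, d\mu_0$. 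Combining, $\limsup_N \tfrac{1}{N^2}\log \mathcal{B}_{N_1,N_2,\zeta,M} \leq -\iint F\, d\mu_0\, d\mu_0$, which is the desired bound.

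The main obstacle is the careful handling of the diagonal singularity: evaluating the formal functional $J_N(\mu) = \iint F_N\, d\mu\, d\mu$ at an empirical measure produces $+\infty$ (from $-\log|x-y|$ on the diagonal), even though the original integrand is perfectly finite at a typical configuration. The truncation $F_{R,N}$ supplies a bounded continuous surrogate at the cost of an $O(R/M)$ error, which in turn forces the limits $N \to \infty$ and $R \to \infty$ to be taken in this specific order. A secondary subtlety is the interchange of infimum and the $R \to \infty$ limit in the final step, which rests on the compactness of $\mathcal{M}([0,1])$ and the lower semicontinuity of $J$.
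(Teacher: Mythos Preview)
Your argument is correct and follows the standard Hiai--Petz template; it differs from the paper's proof only in bookkeeping. Two points are worth noting.

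First, your claim that the log-integrand equals $-N^2\iint_{x\neq y}F_N\,d\mu_M\,d\mu_M$ exactly is not quite right with the paper's $F_N$: since each $\theta_l$ appears in $M-1$ rather than $M$ pairs, a one-body residual $\sum_l\bigl(\tfrac{\gamma(N_1)}{M}\log\theta_l+\tfrac{\gamma(N_2)}{M}\log(1-\theta_l)\bigr)$ survives. The paper keeps this term explicit and bounds its exponential by the product integral $\bigl[\int_0^1 x^{\gamma(N_1)/M}(1-x)^{\gamma(N_2)/M}\,dx\bigr]^M$, which is uniformly bounded in $N$ and hence negligible on the scale $N^{-2}$. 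Alternatively, the residual is non-positive on $(0,1)^M$, so simply dropping it already yields the upper bound you need; either way your use of $\mathrm{Vol}([0,1]^M)=1$ then goes through.

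Second, where the paper appeals to Lemma~\ref{lemma1} to pass from the $N$-dependent variational problem to $J(\mu_0)$, you instead truncate to $F_{R,N}$ and take iterated limits $N\to\infty$ then $R\to\infty$. This is in fact cleaner: the paper's passage from $\iint_{x\neq y}F_N\,d\mu_M\,d\mu_M$ to the minimum of $J_N$ is asserted rather than justified, precisely because of the diagonal singularity you identify (one cannot write $\iint_{x\neq y}F_N\,d\mu_M\,d\mu_M\ge J_N(\mu_M)$ when the right side is $+\infty$). Your truncation is exactly what makes that comparison rigorous.
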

\textit{Proof}: see Appendix~\ref{appendix1:lemma2proof} \qedsymbol{}

\begin{restatable}{lemma}{lemmathree}\label{lemma3}
For every $\mu \in \mathcal{M}([0,1])$ we have
\begin{equation*}
    \begin{gathered}
        \inf_{G} \left( \limsup_{N \to \infty} \frac{1}{N^2} \log (\mu_{\bm U_M}(G)) \right) \leq \\
        - \int\limits_0^1\int\limits_0^1 F(x,y) d\mu(x)d\mu(y) - \liminf_{N \to \infty} \frac{1}{N^2} \log \mathcal{B}_{N_1,N_2, \zeta, M}
    \end{gathered}
\end{equation*}
where $G$ runs over neighbourhoods of $\mu$ in the weak topology.
\end{restatable}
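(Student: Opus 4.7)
The plan is to start from the density \eqref{betaLDP} and rewrite it in terms of the empirical measure $\mu_{\bm U_M}$. Using $\sum_l f(\theta_l) = M\int f\, d\mu_{\bm U_M}$ and $\sum_{i \neq j} g(\theta_i, \theta_j) = M^2 \int\int_{x\neq y} g\, d\mu_{\bm U_M}^{\otimes 2}$, the three terms in the exponent of $p_M$ assemble (up to a factor $-N^2$) into $\int\int_{x\neq y} F_N\, d\mu_{\bm U_M}^{\otimes 2}$, so that
\begin{equation*}
\mu_{\bm U_M}(G) = \mathcal{B}^{-1}_{N_1, N_2, \zeta, M}\int_{\{\theta : \mu_{\bm U_M}(\theta) \in G\}} \exp\left\{-N^2 \int\int_{x \neq y} F_N\, d\mu_{\bm U_M}^{\otimes 2}\right\} d\theta.
\end{equation*}
I would upper bound this by replacing the integrand with its supremum over $\{\mu_{\bm U_M}(\theta) \in G\}$ and using that the volume of integration is at most $|[0,1]^M| = 1$.

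The difficulty is that $F_N$ has a $+\infty$ singularity on the diagonal $\{x = y\}$, where the empirical measure concentrates mass $1/M$. To handle this, I truncate using $F_{R,N} = \min(F_N, R)$, which is bounded and continuous on $[0,1]^2$. Since $F_{R,N} \leq F_N$ pointwise and the diagonal contributes exactly $R/M$ to $\int\int F_{R,N}\, d\mu_{\bm U_M}^{\otimes 2}$, one obtains the pointwise bound (in $\theta$)
\begin{equation*}
\int\int_{x \neq y} F_N\, d\mu_{\bm U_M}^{\otimes 2} \geq \int\int F_{R,N}\, d\mu_{\bm U_M}^{\otimes 2} - \frac{R}{M}.
\end{equation*}
Since the supremum over $\{\theta : \mu_{\bm U_M}(\theta) \in G\}$ of $-\int\int F_{R,N}\, d\mu_{\bm U_M}^{\otimes 2}$ is bounded by $-\inf_{\nu \in G} \int\int F_{R,N}\, d\nu^{\otimes 2}$, taking $\frac{1}{N^2} \log$ and $\limsup_N$ (noting that $R/M \to 0$ since $M \to \infty$, and using the uniform convergence $F_{R,N} \to F_R$) gives
\begin{equation*}
\limsup_N \frac{1}{N^2}\log \mu_{\bm U_M}(G) \leq -\liminf_N \frac{1}{N^2}\log \mathcal{B}_{N_1, N_2, \zeta, M} - \inf_{\nu \in G}\int\int F_R\, d\nu^{\otimes 2}.
\end{equation*}

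Finally, I take the infimum over open neighborhoods $G$ of $\mu$. By continuity of $\nu \mapsto \int\int F_R\, d\nu^{\otimes 2}$ at $\mu$ (which follows from $F_R$ being bounded and continuous), one has $\sup_G \inf_{\nu \in G} \int\int F_R\, d\nu^{\otimes 2} = \int\int F_R\, d\mu^{\otimes 2}$. Then, because $F \geq 0$ on $[0,1]^2$ and $F_R \nearrow F$ as $R \to \infty$, monotone convergence yields $\int\int F_R\, d\mu^{\otimes 2} \nearrow \int\int F\, d\mu^{\otimes 2}$, completing the bound. The main technical obstacle is keeping the diagonal correction $R/M$ negligible on the $N^2$ scale: since $N^2 \cdot R/M \sim NR/a$, dividing by $N^2$ gives $R/(aN) \to 0$, which forces the order of limits to be $N \to \infty$ first (with $R$ fixed), then $R \to \infty$. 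This mirrors the classical approach used for the Wishart and Hermite ensembles in \cite{guionnet2004large, HP1998WishartLDP}.
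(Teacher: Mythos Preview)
Your overall architecture is exactly that of the paper: truncate $F_N$ at level $R$, exploit that the diagonal carries mass $1/M$ under $\mu_{\bm U_M}\otimes\mu_{\bm U_M}$, bound the integrand by its supremum over $G$, send $N\to\infty$ using the uniform convergence $F_{R,N}\to F_R$, then take the infimum over neighbourhoods $G$ using continuity of $\nu\mapsto\iint F_R\,d\nu^{\otimes 2}$, and finally let $R\to\infty$ via monotone convergence. The order of limits you identify ($N$ first, $R$ last) is also what the paper uses.

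There is one genuine slip. Your claim that ``the three terms in the exponent of $p_M$ assemble (up to a factor $-N^2$) into $\iint_{x\neq y} F_N\,d\mu_{\bm U_M}^{\otimes 2}$'' is not exact. For an empirical measure with $M$ distinct atoms one has
\[
\iint_{x\neq y}\bigl(\log x+\log y\bigr)\,d\mu_{\bm U_M}^{\otimes 2}
=2\Bigl(1-\tfrac1M\Bigr)\int\log x\,d\mu_{\bm U_M},
\]
so the single-particle part of $-N^2\iint_{x\neq y}F_N$ reproduces only $(M-1)\gamma(N_i)\int\log(\cdot)\,d\mu_{\bm U_M}$, not the required $M\gamma(N_i)\int\log(\cdot)\,d\mu_{\bm U_M}$. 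A residual factor
\[
\prod_{l=1}^M \theta_l^{\gamma(N_1)/M}(1-\theta_l)^{\gamma(N_2)/M}
\]
survives in the integrand, so your bound ``volume of $[0,1]^M$ is $1$'' does not directly apply. The paper keeps this residual explicit and controls the integral by
\[
\Bigl[\int_0^1 x^{\gamma(N_1)/M}(1-x)^{\gamma(N_2)/M}\,dx\Bigr]^M,
\]
a Beta integral that is uniformly bounded in $N$; its contribution is $O(M/N^2)\to 0$ after taking $\tfrac1{N^2}\log$. With that correction your argument goes through and coincides with the paper's.
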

\textit{Proof}: see Appendix~\ref{appendix1:lemma3proof} \qedsymbol{}

\begin{restatable}{lemma}{lemmafour}\label{lemma4}
\begin{equation*}
    \liminf_{N \to \infty} \frac{1}{N^2} \log \mathcal{B}_{N_1,N_2, \zeta, M} \geq - \iint F(x,y) d\mu_0(x)d\mu_0(y)
\end{equation*}
and for every $\mu \in \mathcal{M}([0,1])$
\begin{equation*}
    \begin{gathered}
    \inf_G \left( \liminf_{N \to \infty} \frac{1}{N^2} \log \mu_{\bm U_M}(G)\right) \geq \\
    - \iint F(x,y)d\mu(x)d\mu(y) - \limsup_{N \to \infty} \frac{1}{N^2} \log \mathcal{B}_{N_1,N_2, \zeta, M}
        \end{gathered}
\end{equation*}
where $G$ runs over neighbourhoods of $\mu$ in the weak topology.
\end{restatable}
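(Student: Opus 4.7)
The plan is to establish both inequalities by the classical test-configuration (trial-function) method: we lower-bound the Selberg integral $\mathcal{B}_{N_1,N_2,\zeta,M}$, and the probability $\mu_{\bm U_M}(G)$, by restricting the integration defining each of them to a small box around a carefully chosen deterministic configuration whose empirical measure approximates $\mu_0$ (respectively $\mu$). This is the natural complement to the covering/compactness argument that would drive the matching upper bound in Lemma \ref{lemma3}.

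For the first inequality I would fix $\epsilon>0$ and approximate $\mu_0$ by a probability measure $\nu$ whose density is smooth and bounded above and below by positive constants on a subinterval $[\delta,1-\delta]\subset(0,1)$, chosen so that $\iint F\,d\nu\,d\nu\le \iint F\,d\mu_0\,d\mu_0+\epsilon$; the existence of such an approximation follows from the monotone $F_R\nearrow F$ machinery introduced before Lemma \ref{lemma1}. For each $M$ I define quantile points $\theta_j^*=Q_\nu((j-\tfrac12)/M)$, $j=1,\dots,M$, where $Q_\nu$ is the quantile function of $\nu$, and integrate the unnormalised right-hand side of \eqref{betaLDP} over the cube $V_N=\prod_j[\theta_j^*-\delta_N,\theta_j^*+\delta_N]$ with $\delta_N=1/M^2$, so that the boxes are disjoint and $M\log\delta_N=o(N^2)$. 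On $V_N$ the empirical measure $\mu_{\bm U_M}$ stays weakly close to $\nu$, and using the scalings $M/N\to a$ and $\gamma(N_k)/N\to 1-a/2$ together with a Riemann-sum estimate for $\iint\log|x-y|\,d\nu^{\otimes 2}$, the integrand is bounded below by $\exp(-N^2\iint F\,d\nu\,d\nu+o(N^2))$. Multiplying by $\mathrm{Vol}(V_N)=(2\delta_N)^M=e^{o(N^2)}$ and letting $\epsilon\to 0$ gives
\[
\liminf_{N\to\infty}\frac{1}{N^2}\log\mathcal{B}_{N_1,N_2,\zeta,M}\ge -\iint F\,d\mu_0\,d\mu_0.
\]

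For the second inequality, given $\mu\in\mathcal{M}([0,1])$ and an open neighbourhood $G$ of $\mu$, I would approximate $\mu$ by a measure $\nu$ of the same regularised type, close enough in the weak topology that the quantile configuration associated to $\nu$ lies in the event $\{\mu_{\bm U_M}\in G\}$ for all large $N$. Writing
\[
\mu_{\bm U_M}(G)=\mathcal{B}_{N_1,N_2,\zeta,M}^{-1}\int_{\{\mu_{\bm U_M}\in G\}}\prod_l\theta_l^{\gamma(N_1)}(1-\theta_l)^{\gamma(N_2)}\prod_{i<j}|\theta_i-\theta_j|^{2\zeta}\,d\bm\theta,
\]
and restricting the domain of integration to $V_N$, the same Riemann-sum lower bound on the integrand yields $\tfrac{1}{N^2}\log\mu_{\bm U_M}(G)\ge -\iint F\,d\nu\,d\nu-\tfrac{1}{N^2}\log\mathcal{B}_{N_1,N_2,\zeta,M}+o(1)$. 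Taking $\liminf$, then $\nu\to\mu$ via the lower-semicontinuity of $J$, and finally the infimum over $G$ gives the required bound.

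The hard part will be controlling the singular contributions to $F$ on the small box $V_N$. The off-diagonal term $\sum_{i<j}\log|\theta_i-\theta_j|$ is approximated by $\tfrac{M^2}{2}\iint\log|x-y|\,d\nu^{\otimes 2}$ via a Riemann sum, and this approximation can fail for nearly-coincident pairs; regularising $\mu_0$ (or $\mu$) to a density bounded away from $0$ and $\infty$ ensures $\min_{i<j}|\theta_i^*-\theta_j^*|\ge c/M$, so the contribution from such pairs is $O(M^2\log M)=o(N^2)$. A parallel issue at the endpoints explains why we truncate the support of $\nu$ to $[\delta,1-\delta]$ and then take $\delta\to 0$. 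Both regularisations are controlled by the monotone approximation $F_R\nearrow F$, and the final passage $\nu\to\mu_0$ (or $\nu\to\mu$) uses the lower-semicontinuity of $J$ together with the uniqueness of $\mu_0$ established in the setup preceding Lemma \ref{lemma1}.
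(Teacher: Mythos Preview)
Your proposal is essentially the same strategy as the paper's: lower-bound the integral defining $\mathcal{B}_{N_1,N_2,\zeta,M}$ (and the event $\{\mu_{\bm U_M}\in G\}$) by restricting to a small box around a quantile configuration of a regularised version of $\mu$, then pass to the limit via a Riemann-sum argument. The paper uses consecutive quantile intervals $[r_{i,N},s_{i,N}]$ of width $\sim 1/M$ rather than your fixed-width boxes of size $1/M^2$, and it derives the first inequality \emph{from} the second by taking the infimum over $\mu$, whereas you argue them separately; but these are cosmetic differences and both versions give volume factors of order $e^{o(N^2)}$ and minimum spacings $\gtrsim c/M$, which is all that is needed.

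One point to tighten: in the second part you write that the passage $\nu\to\mu$ goes ``via the lower-semicontinuity of $J$''. That is the wrong direction: lower semicontinuity gives $J(\mu)\le\liminf J(\nu)$, whereas what you need here is the existence of regular approximants $\nu$ with $J(\nu)\le J(\mu)+\epsilon$. You stated and used exactly this approximation correctly in your treatment of the first inequality (for $\mu_0$); the same argument applies for general $\mu$ with $J(\mu)<\infty$ (the case $J(\mu)=\infty$ being trivial), and that, not lower semicontinuity, is the justification you should invoke. The paper sidesteps this by simply declaring ``without loss of generality $\mu$ has a continuous density bounded between $\epsilon$ and $1/\epsilon$'', which is precisely the approximation step you spell out more carefully.
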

\textit{Proof}: see Appendix~\ref{appendix1:lemma4proof} \qedsymbol{}

Note that this result, along with a previous lemma, implies that the limit 
\[ 
\lim_{N \to \infty} \frac{1}{N^2} \log \mathcal{B}_{N_1,N_2, \zeta, M} := B < \infty 
\]
exists.  

Noting again that $I(\mu) = J(\mu) +B$, then the preceding Lemmas \ref{lemma1}-\ref{lemma4} imply that \eqref{LDP1} and \eqref{LDP2} hold. Moreover, we see that $I(\mu_0) = 0$ and $I$ is strictly convex, so that it is the only measure with this property.

In summary, we have shown that the empirical spectral measure associated to the distribution \eqref{eq:Eigen_Beta} satisfies a large deviation principle with a unique minimiser. Then, by \cite{RAS2015LDPgibbs}, we see that there is a unique Gibbs measure whose conditional distributions are given by \eqref{eq:Eigen_Beta}.

\section{Hierarchical Mixture Model}\label{sec:model-and-algorithm}

In this Section we show how to use the Coulomb priors in the context of mixture models. In Section \ref{sec:repulsive_priors}, we introduce a repulsive effect on the latent parameters $\bm \theta$ by specifying $P_0(\bm \theta)$ from the joint distribution of the eigenvalues of random matrices. As shown earlier, such distributions induce sparsity in the mixture by repelling the cluster centres from each other, thus allowing flexible configurations of the parameters. Due to the fact that the support of each eigenvalue is not the whole real line ($\mathbb{R}^+$ in the Wishart case and $(0,1)$ in the Beta one), each $\bm \theta$ is modelled on an appropriate scale depending on the application at hand. For instance, in a setting where the latent parameters represent the locations of real-valued observations, these can be transformed via suitable link functions, e.g. $\log$ or logit. Notice that these laws preserve the repulsion term in their distribution, as it is easily seen by applying a change of variable transformation.

To specify a prior distribution on the weights $\bm w$ and the number of components $M$ of the mixture, we follow the approach of \cite{argiento2022infinity}. We assume the weights are derived by normalising a finite point process, which requires the unnormalised weights $\bm S = \left(S_1, \dots, S_M\right)$ to be infinitely divisible. \cite{argiento2022infinity} show that a finite mixture model is simply a realisation of a stochastic process whose dimension is random and has an infinite dimensional support. This leads to flexible distributions for the weights of the mixture, which also allows for efficient posterior computations. They refer to their construction as a \textit{normalised independent finite point process}. Here, we assume that $S_1, \dots, S_M$ are i.i.d. from a Gamma distribution. This construction is the finite-dimensional version of the class of normalised random measures with independent increments \citep{regazzini2003distributional, james2009posterior}. The resulting model is the following:
\begin{align}\label{eq:Application_mixture}
	&\bm y_i \mid z_i, \bm \theta, M \sim f\left(\bm y_i \mid \bm \theta_{z_i}\right) \quad i = 1, \dots, n \nonumber \\
	&\bm \theta_1, \dots, \bm \theta_M \mid M \sim P_0(\bm \theta) \nonumber \\
	&\mathbb{P}\left(z_i = h \mid \bm S, M\right) \propto S_h \quad h = 1, \dots, M \\
	&S_1, \dots, S_M \mid M \iid \text{Gamma}\left(\gamma_S, 1\right) \nonumber \\
	&M \sim \text{Poi}_{1}\left(\Lambda\right) \nonumber 
\end{align}
where $\text{Gamma}(a, b)$ represents the Gamma distribution with mean $a/b$ and variance $a/b^2$, while $\text{Poi}_{1}(\Lambda)$ is the Poisson distribution shifted by one unit, i.e. the random variable $M - 1$ is Poisson with mean $\Lambda$. Note that upon marginalisation of the allocation variables $z_i$, for $i = 1, \dots, n$, we obtain the normalised weights $w_m = S_m/\sum_j S_j$. Here we highlight the distinction between the number, $M$, of components in a mixture, i.e. of possible clusters/sub-populations, and the number, $K_n$, of clusters, i.e. the number of allocated
components (components to which at least one observation has been assigned). The latter quantity $K_n\leq M$ can only be estimated a-posteriori.

In the original setting proposed by \cite{argiento2022infinity}, the updates of the location parameters $\bm \theta$ for the allocated and non-allocated components are performed separately, thanks to the independence assumption between these two components of the process. In our scenario, the parameters $\bm \theta$ are not i.i.d., and therefore their update require a different technique. Just as in \cite{argiento2022infinity}, we use the information provided by the vector of allocation variables $\bm z$ to identify which components have been allocated and which ones have not. Thus, we can split the vector of parameters as $\bm \theta = \left(\bm \theta^{(a)}, \bm \theta^{(na)}\right)$ to highlight their allocation status. Update of the allocated components of the mixture is performed via a Metropolis-Hastings step. On the other hand, the update of the non-allocated components is tackled with a Metropolis-Hastings birth-and-death move, as proposed by \cite{geyer1994simulation} and implemented in the context of repulsive mixtures based on DPPs in \cite{beraha2022mcmc}. One of the advantages of the proposed class of prior distributions for the parameters $\bm \theta$ is their tractability. Indeed, all distributions described in Section \ref{sec:repulsive_priors} are known in closed form and have tractable normalising constants. This allows to perform inference a-posteriori for the hyperparameters driving these distributions, a feature that can be very appealing in clustering analysis.
In Supplementary Material, we detail the MCMC algorithm.

\subsection{Examples}\label{sec:examples}

In this Section, we first present the popular Ising model used in the thermodynamic literature to describe the behaviour of polarised particles on the 2-dimensional lattice $\mathbb{Z}^2$, highlighting an interesting behaviour related to the uniqueness of the associated Gibbs measures. Then, 
we demonstrate the proposed approach on simulated and benchmark data. We explore the partitions induced a-posteriori by the Coulomb priors introduced in Section \ref{sec:repulsive_priors}, and compare them with alternative repulsive mixture models proposed in the literature, as well as the mixture model with a random number of components proposed by \cite{argiento2022infinity}, which can be recovered from our model when no repulsion is assumed. The latter can be easily implemented using the R package \texttt{AntMAN} \citep{antmanpackage}. We assess the clustering results by comparing the estimates of the partitions obtained by minimising the Binder loss function \citep{binder1978bayesian}, while density estimation is assessed by computing the predictive densities.

\paragraph{The 2-dimensional Ising model}\label{sec:IsingExample}

Consider $M$ particles on a 2-dimensional lattice. Let $\bm \theta  = \left( \theta_1, \dots, \theta_M \right)$ be the vector of spins of the $M$ particles under study, so that $\theta_i \in \{-1, +1\}$, where $-1$ and $+1$ indicate negative and positive spins, respectively. The Hamiltonian of the Ising model is given by:
\begin{equation}\label{eq:Ising_Hamiltonian}
	\mathcal{H}(\bm \theta \mid \zeta, h) = h \sum_{i = 1}^M\theta_i + \zeta \sum_{i \sim j} \theta_i \theta_j
\end{equation}
where $\zeta > 0$ is the inverse temperature and $h \in \mathbb{R}$ quantifies the influence of a magnetic field acting on the particles and in particular on the values of the spins $\bm \theta$. Notice how the expression of the Hamiltonian of the Ising model is of the same form as the Gibbs potential in Eq.~\eqref{eq:EnergyIntro}, where $\psi_1(\theta) = \theta$ and $\psi_2(\theta_i, \theta_j) = \theta_i \theta_j$. In the standard Ising model, the particles are assumed to interact only with their immediate neighbours, as indicated by the notation $i\sim j$ in Eq.~\eqref{eq:Ising_Hamiltonian}. From Eq.~\eqref{eq:Ising_Hamiltonian} we obtain the probability of observing a specific configuration of the particle spins $\bm \theta$:
\begin{equation}\label{eq:Ising_Boltzmann}
	p_M\left(\bm \theta \mid \zeta, h\right) = \frac{e^{-\mathcal{H}\left(\bm \theta \mid \zeta, h\right)}}{Z^{Ising}_n\left(\zeta, h\right)}
\end{equation}
with $Z^{Ising}_n\left(\zeta, h\right) = \sum_{\bm \theta \in \{-1, +1\}^n}e^{-\mathcal{H}\left(\bm \theta \mid \zeta, h\right)}$ representing the normalising constant (i.e., the partition function). Computations under the Ising model are challenging due to the intractability of $Z^{Ising}_n\left(\zeta, h\right)$, with exception of the case $d = 1$ which has a closed form solution.

The behaviour of the Ising model at the thermodynamic limit, i.e. for $M \rightarrow +\infty$, is of particular interest as it depends on the dimension of the lattice $d$ and on the parameters $\zeta$ and $h$. Theorem 3.25 by \cite{friedli_velenik_2017} gives an overview of the limiting distributions that arise under different Ising model specifications. When $d \geq 2$, a first order phase transition may occur, allowing the set of limiting Gibbs measures to contain more than one element. In particular, when the external magnetic field is absent ($h = 0$) there exist a \textit{critical value} $\zeta_c = \zeta_c(d)$ of the inverse temperature above which two possible states are obtained for $M \rightarrow +\infty$. On the other hand, when $h \neq 0$, the behaviour of the particles is dominated by the external magnetic field.

We illustrate this result numerically for the 2-dimensional Ising model. We specify a univariate Gaussian mixture model with two components, indexed by an underlying Ising model, as follows:
\begin{align}\label{eq:Ising_mixture}
    &y_i \sim \text{N}\left(y_i \mid \theta_i, 1 \right), \quad i = 1, \dots, M \nonumber \\
    &\theta_1, \dots, \theta_M \mid \zeta, h \sim \text{Ising}^2_M\left(\bm \theta \mid \zeta, h\right) 
\end{align}
where $\text{Ising}^2_M\left(\bm \theta \mid \zeta, h\right)$ indicates that the $M$ particle spins $\bm \theta$ are distributed according to the 2-dimensional Ising model specified above. We simulate $M = n \times n$ observations from model \eqref{eq:Ising_mixture}, on a square lattice of size $n \in \{5,10,15,20\}$, with equal proportion $\pi_{\theta}$ of positive and negative spins. We fit the above model for varying values of the parameters $\zeta \in \{0, 0.1, 0.25, \zeta_c, 0.5, 0.75, 1, 10, 100\}$, with $\zeta_c = \log\left(1 + \sqrt 2\right)/2 \approx 0.44$, and $h \in \left[-5,5\right]$. We run the MCMC algorithm for length 1500 iterations, of which the first 1000 are discarded as burn-in. We start the MCMC chains from different spin configurations $\left(\theta_1,\dots,\theta_M\right)$ sampled from independent  Bernoulli distributions  over $\{-1,1\}$ with varying success probability (i.e., expected proportion of spins equal to 1), $\pi_{\theta} \in \{0, 0.25, 0.5, 0.75, 1\}$. 

As we show in Figure \ref{fig:Ising_pitheta}, when $h = 0$, the initial configuration is of particular importance as it determines the limiting distribution, when non-uniqueness is encountered. Specifically, for values of the inverse temperature $\zeta$ greater than the critical $\zeta_c$, often referred to as the \textit{low temperatures} case, the distribution of the spins is the same as the one used to sample the initial configuration.
\begin{figure}[ht]
	\centering
	\subfloat[$n = 5$]{\includegraphics[width=0.5\linewidth]{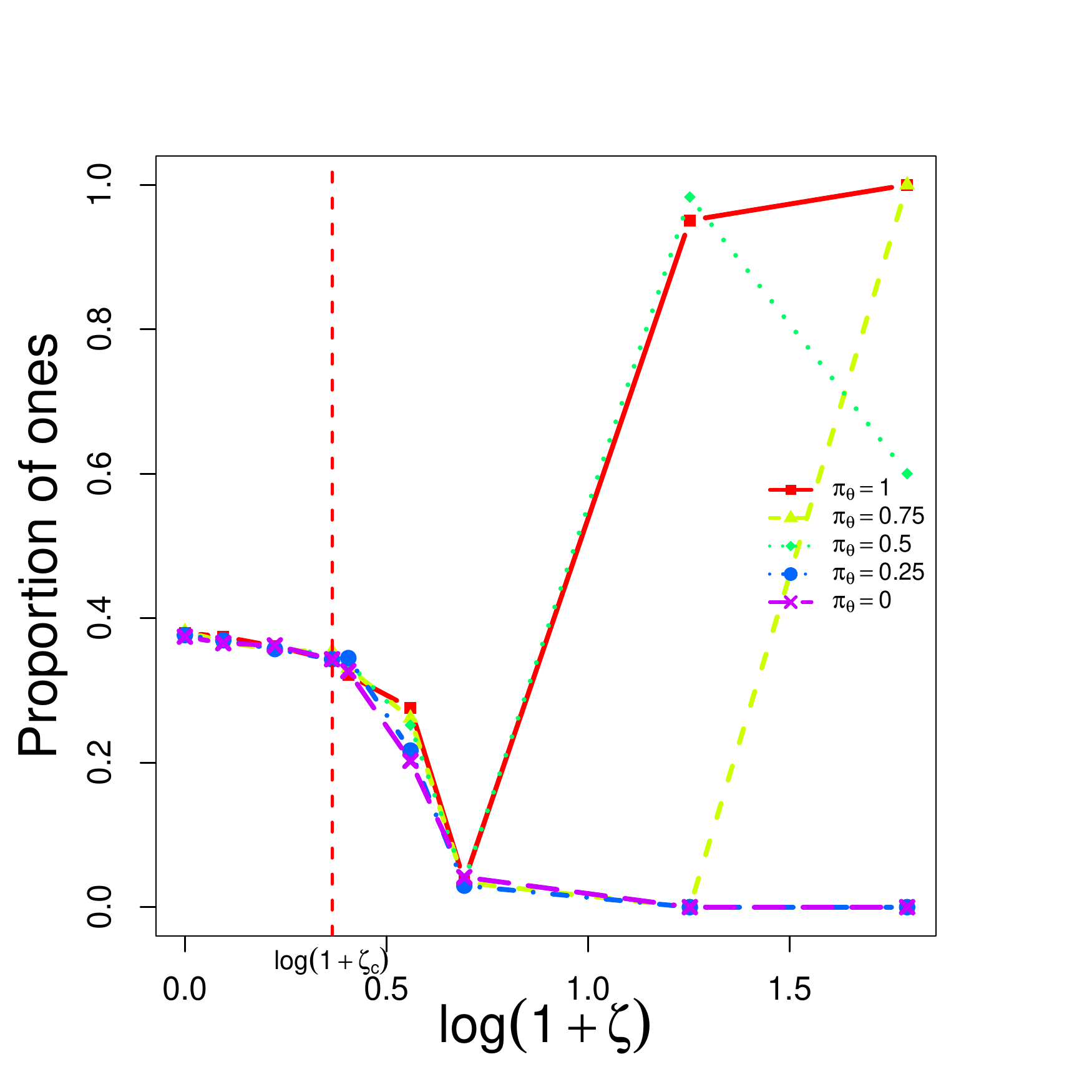}}
	\subfloat[$n = 10$]{\includegraphics[width=0.5\linewidth]{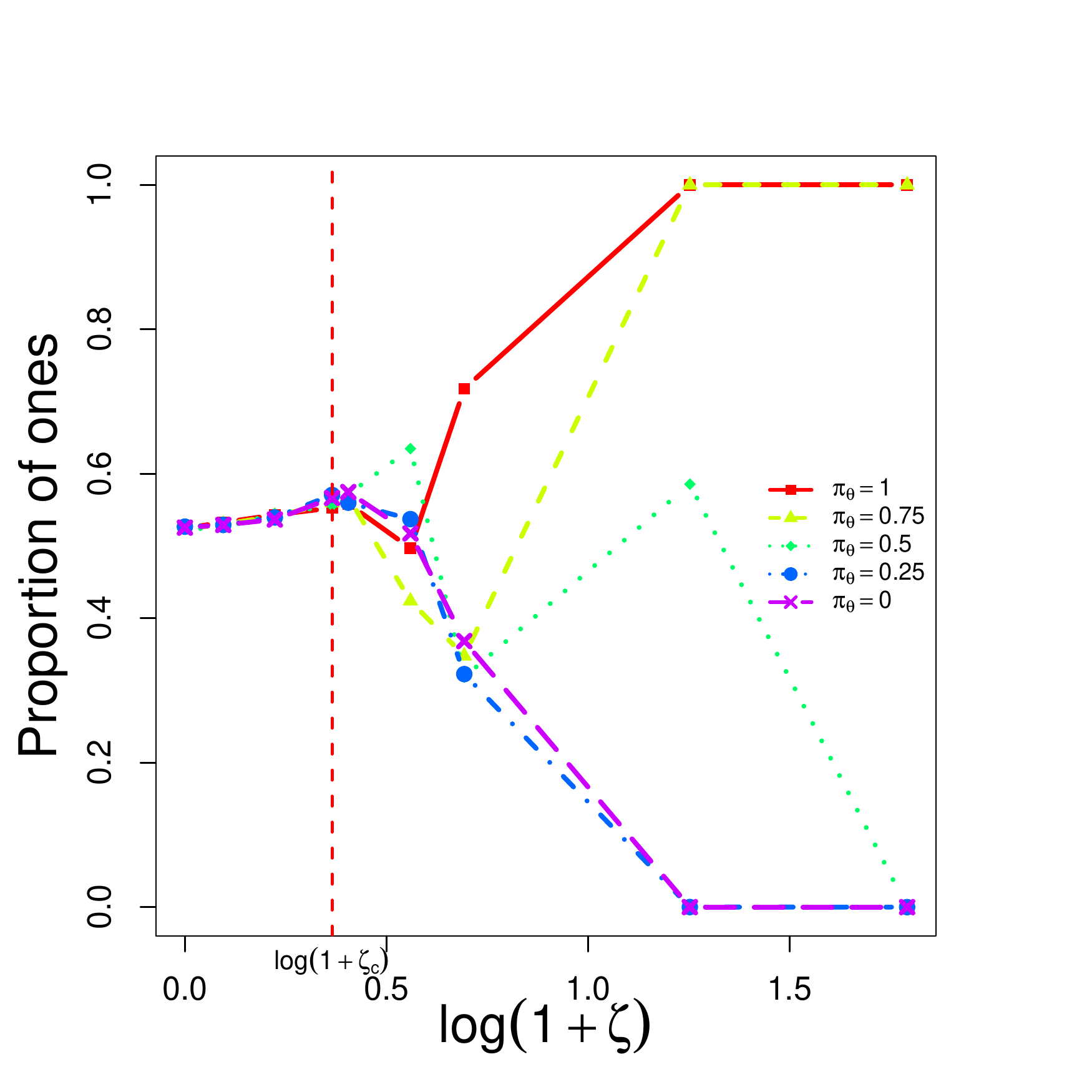}} \\
	\subfloat[$n = 15$]{\includegraphics[width=0.5\linewidth]{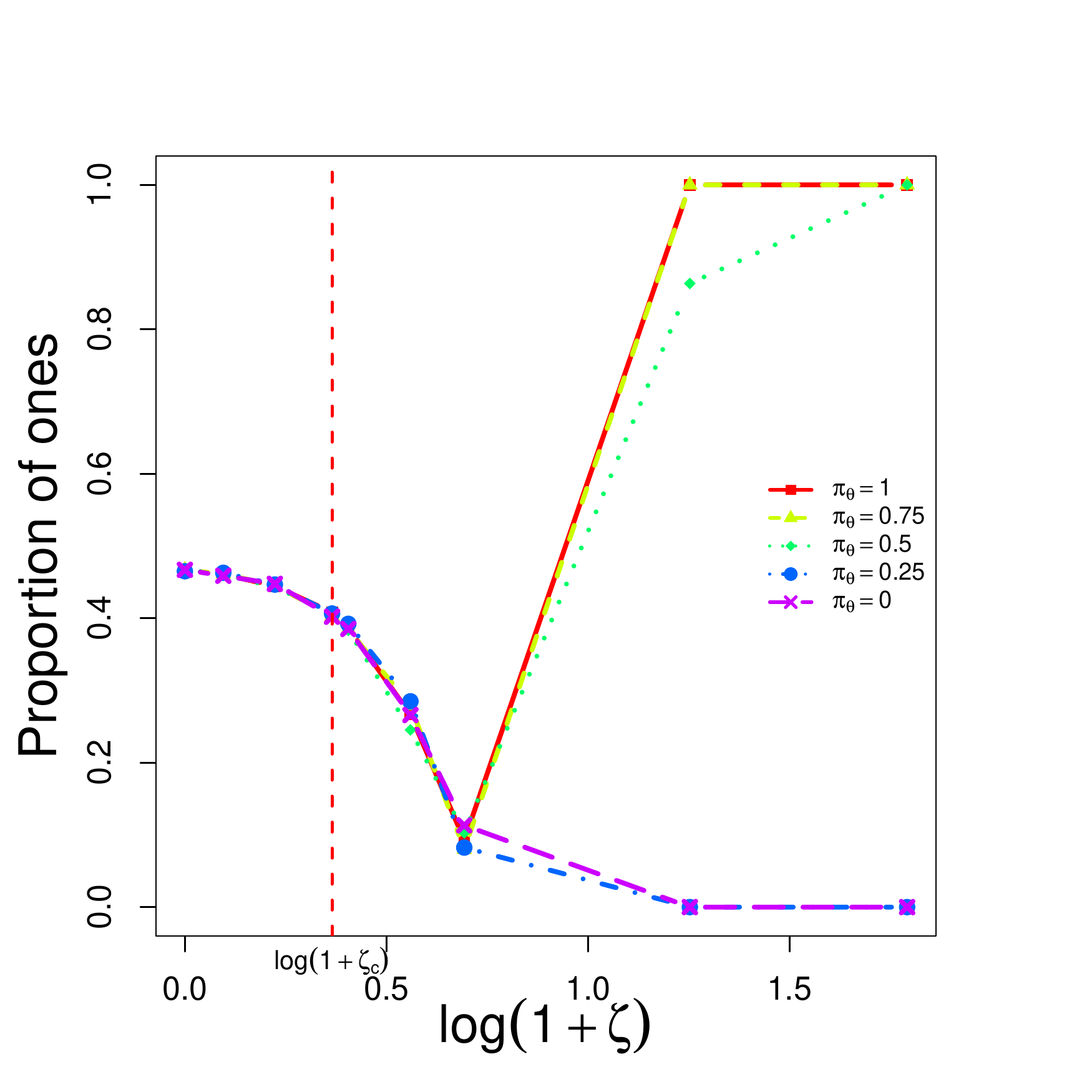}}
        \subfloat[$n = 20$]{\includegraphics[width=0.5\linewidth]{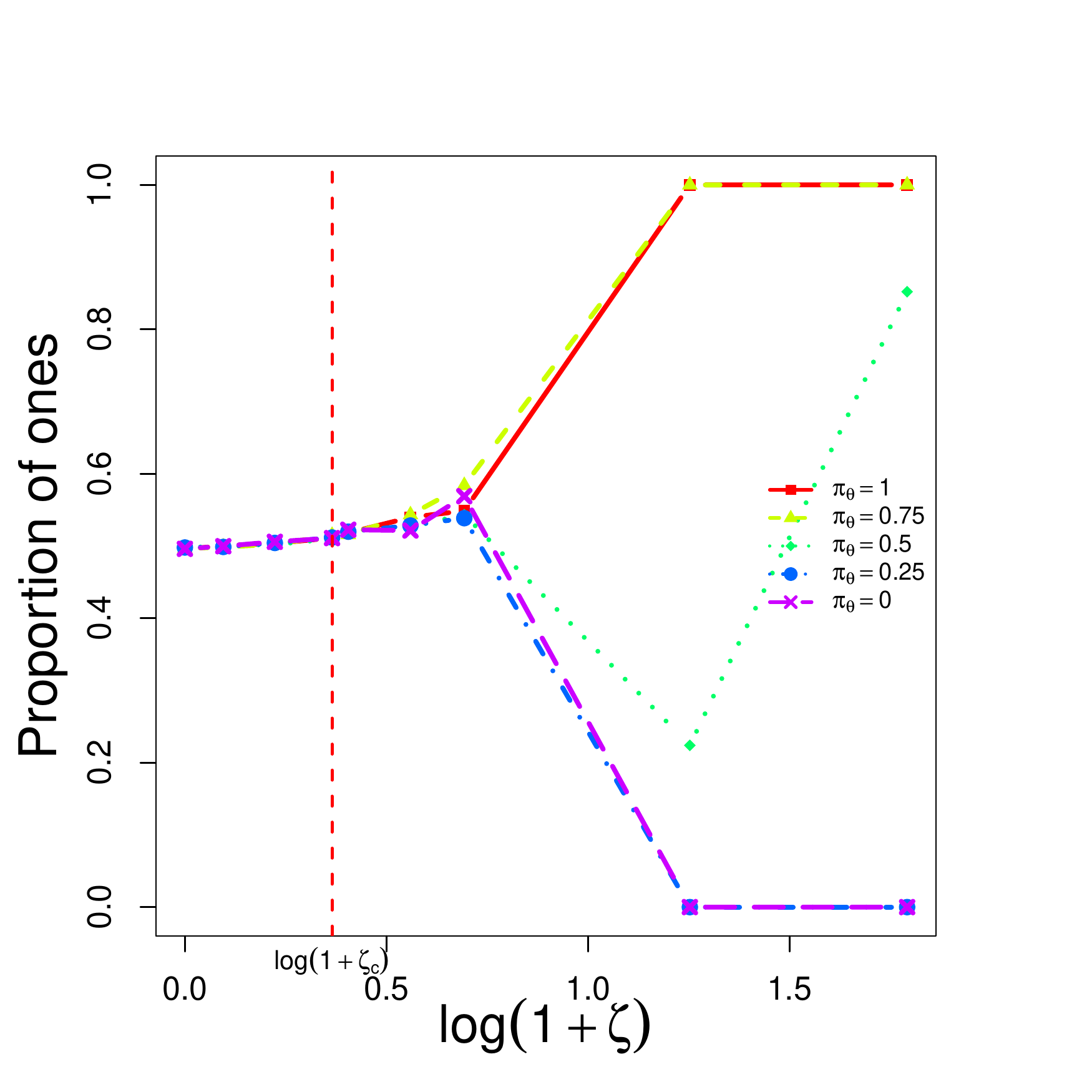}}
	\caption{Ising model ($d = 2$). Posterior mean of the proportion of positive spins in the sample, for different values of $\zeta$ and $h = 0$. Each line corresponds to a different initialisation setting, i.e. $\pi_{\theta}$.}
	\label{fig:Ising_pitheta}
\end{figure}

Another important quantity monitored in the context of the Ising mode is the \textit{magnetisation} of the particle system, defined as the average spin value among the $M$ particles, i.e. $\mathcal{M}_M = \frac{1}{M}\sum_{i = 1}^M \theta_i$. As it can be observe from Figure \ref{fig:Ising_Mn}, the posterior mean of the magnetisation strongly depends on the value of the external field $h$. This is in agreement with the results of Theorem 3.25 by \cite{friedli_velenik_2017}.

\begin{figure}[ht]
	\centering
	\subfloat[$\pi_{\theta} = 0.25$]{\includegraphics[width=0.35\linewidth]{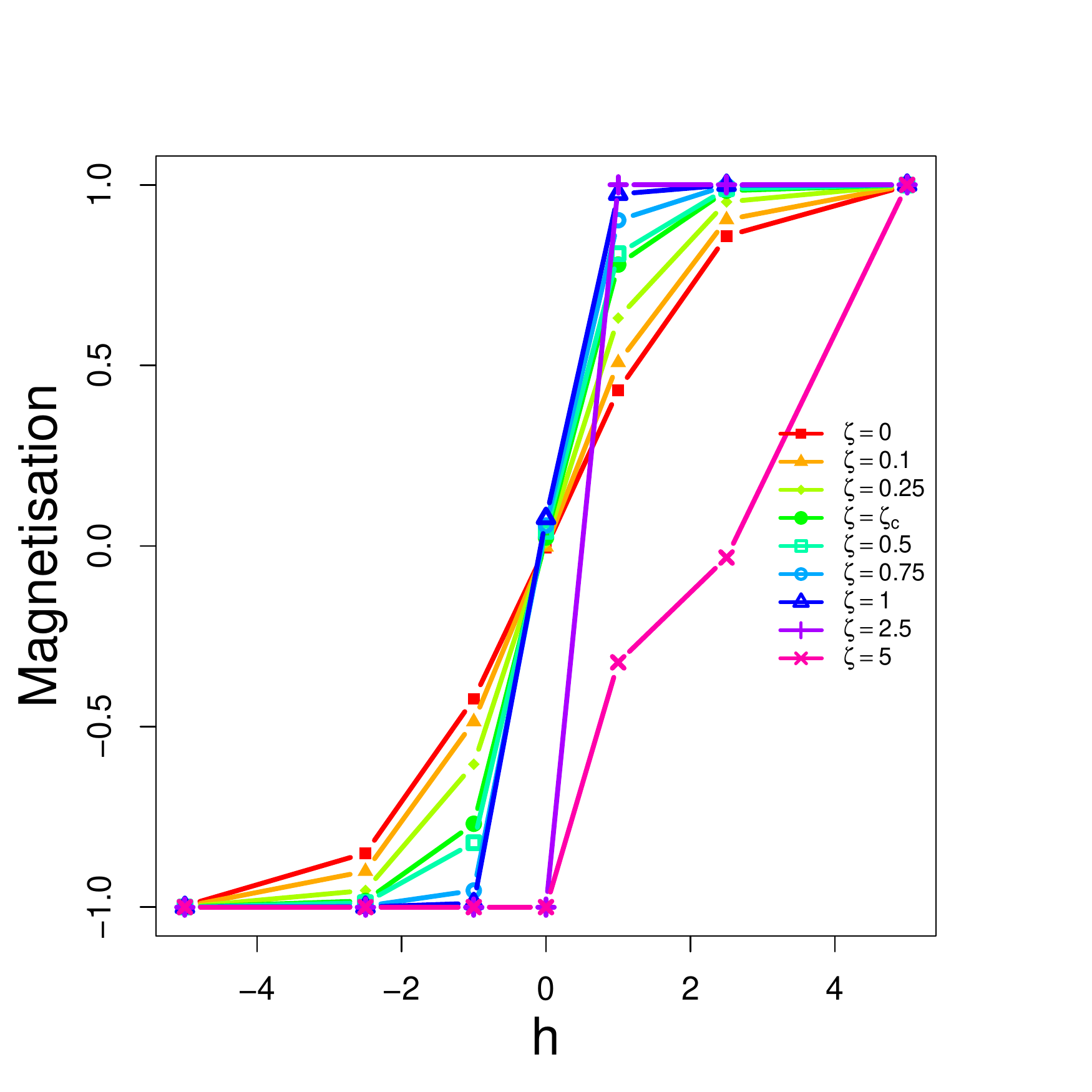}}
	\subfloat[$\pi_{\theta} = 0.5$]{\includegraphics[width=0.35\linewidth]{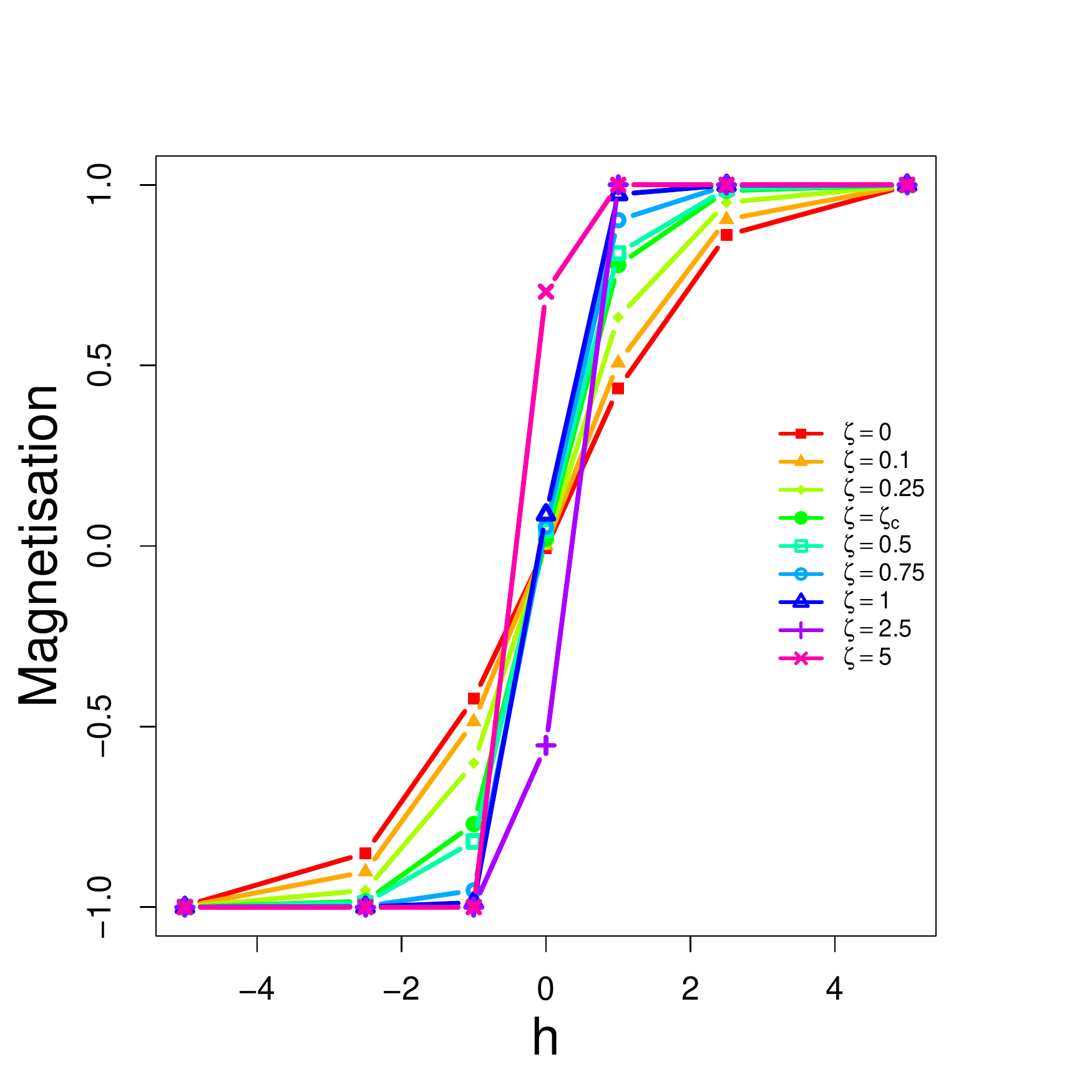}}
        \subfloat[$\pi_{\theta} = 0.75$]{\includegraphics[width=0.35\linewidth]{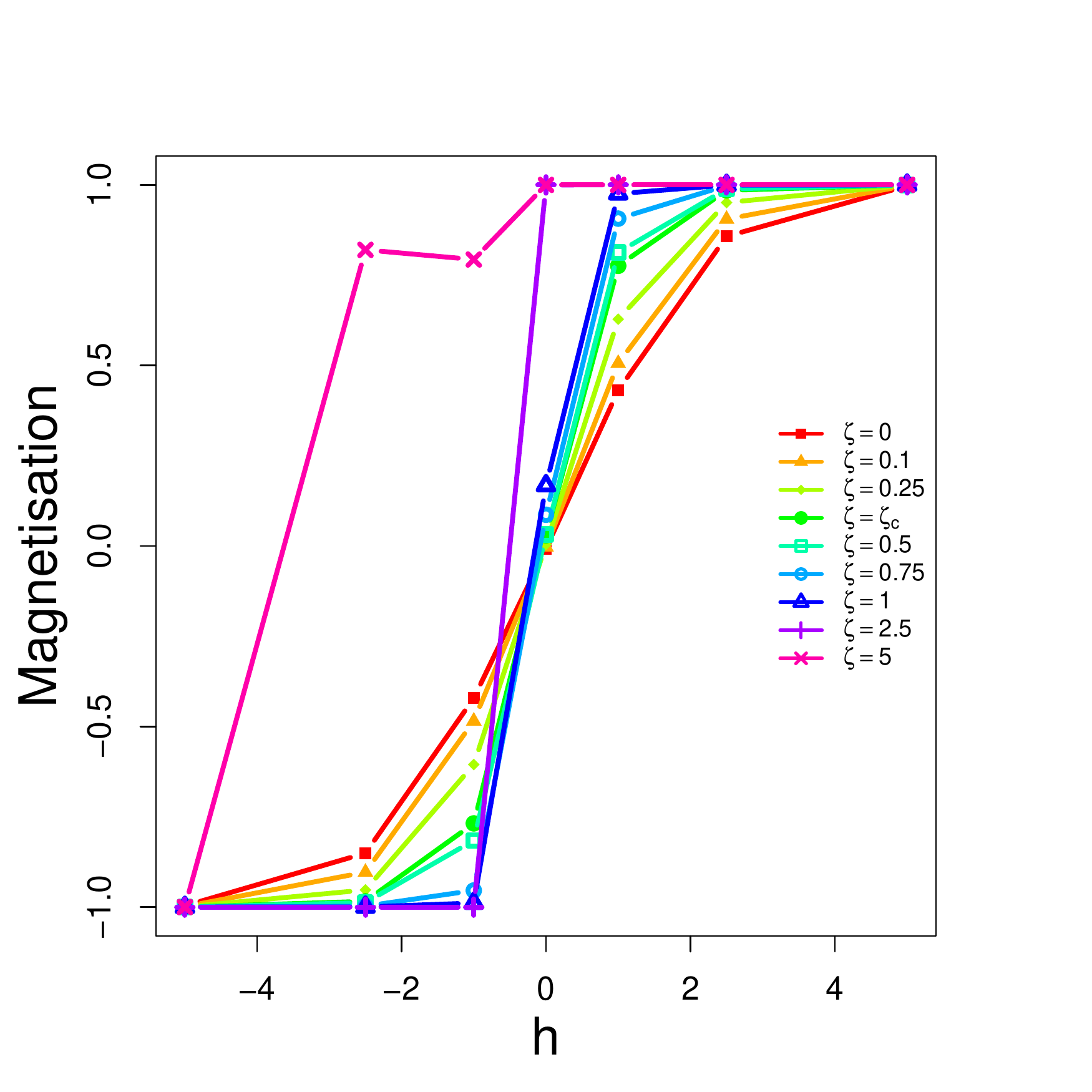}}
	\caption{Ising model ($d = 2$, $n = 20$). Posterior mean of the magnetisation of the system $\mathcal{M}_M$, for different values of $\zeta$ and $h$. Each panel corresponds to a different initialisation setting, i.e. $\pi_{\theta}$.}
	\label{fig:Ising_Mn}
\end{figure}

These results are a consequence of the non-uniqueness of the Gibbs measure. We conclude by noting that the approach of \cite{petralia2012repulsive,quinlan2021class} does not have a corresponding infinite-dimensional measures, as their prior process is not Kolmogorov consistent, and they have not shown there is an infinite  dimensional object with their distributions as conditionals.

\paragraph{Mixture of binomial distributions}

We illustrate the performance of the proposed model on data simulated from a mixture of binomial distributions.
Specifically, $n = 150$ observations are sampled from the following mixture with five components:
\begin{align*}
    &f\left(y_i \mid \bm \theta\right) = \sum_{m = 1}^5 w_m \text{Bin}\left( y_i \mid 15, \pi_m \right) \quad i = 1, \dots, n \\
    & \bm w \sim \text{Dir}_M\left(\bm w \mid 1, \dots, 1\right) \\
    & \bm \pi = \text{logit}^{-1}\left(-5,-2.5,0,2.5,5\right)
\end{align*}
where $\text{Bin}\left( y \mid R, \pi \right)$ indicates the Binomial distribution with $R$ trials and success probability $\pi$, while $\text{Dir}_M\left(\bm w \mid a, \dots, a\right)$ represents the $M$-dimensional Dirichlet distribution with shape parameter vector $\left(a, \dots, a\right)$ and $a>0$.

To investigate the results obtained by fitting misspecified mixture models to the simulated data, we compare several repulsive mixture models in a finite mixture setting, i.e. when $M$ is fixed. In particular, we implement the models proposed by \cite{quinlan2021class}, \cite{petralia2012repulsive} and \cite{xu2016bayesian}, as well as the one proposed in this paper, with suitable prior distributions specified for the parameter vector of interest $\bm \pi$. We opt for a Beta kernel density for the methods proposed by \cite{quinlan2021class} and \cite{petralia2012repulsive}, while use Gaussian kernels for the specification of the DPP in \cite{xu2016bayesian}, and impose an inverse-logit link on  the parameters of interest $\pi_m$, for $m = 1, \dots, M$. To fix the hyperparameters of the competitor models, we follow the authors' guidelines presented in the corresponding works. Finally, the Jacobi-Coulomb prior distribution is used when fitting our model. 

First, we assess posterior inference on the partition estimated by the different models. We show in Figure \ref{fig:BinMix_Kn} the posterior distribution of the number of clusters $K_n$ under different model specifications. We observe right skewness in most scenarios, and more markedly for the competitor models. In particular, the proposed model induces partitions characterised by a lower number of clusters when compared to the model of \cite{petralia2012repulsive} and comparable results to the DDP model of \cite{xu2016bayesian} and the one of \cite{quinlan2021class}. Such difference is less evident when a large number of mixture components $M$ is specified.
This behaviour is confirmed by computing the partition minimising the Binder loss function, for which we report the corresponding number of clusters in Table \ref{tab:BinMix_Binder}. The proposed method yields partitions more robust to increasing values of $M$ and composed of less clusters.

\begin{figure}[ht]
	\centering
 	\includegraphics[width=1\linewidth]{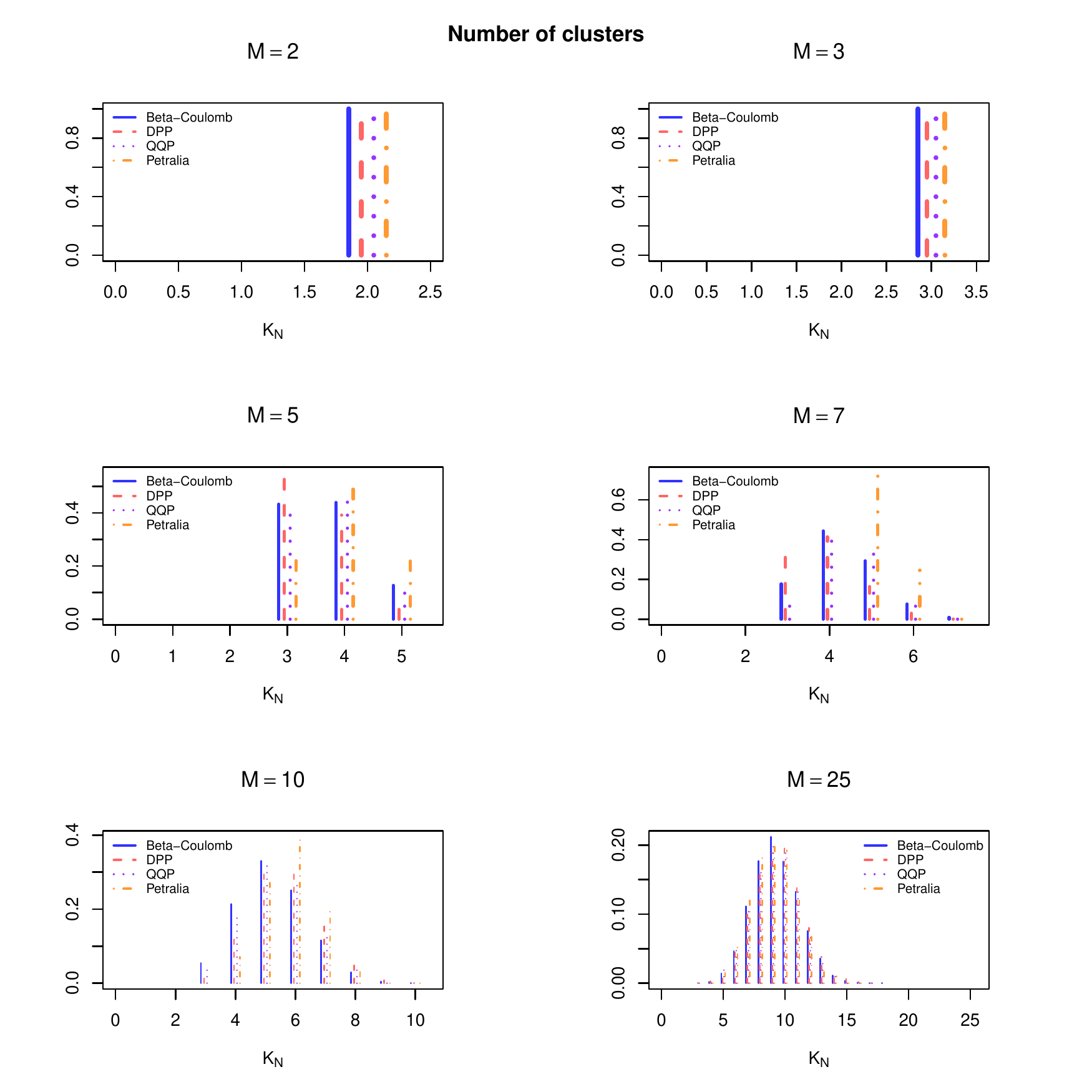}
	\caption{Binomial mixture data. Posterior distribution of the number of clusters $K_n$ for different model specifications. Each panel shows the results for a different value of $M$.}
	\label{fig:BinMix_Kn}
\end{figure}

\begin{table}
\caption{Binomial mixture data. Number of clusters for the partition estimates  obtained by minimising the Binder loss function, for increasing value of $M$ and different repulsive models.}
\label{tab:BinMix_Binder}
\begin{center}
\begin{tabular}{c|cccccc}
  $M$ & 2 & 3 & 5 & 7 & 10 & 25 \\\hline
 Beta-Coulomb & 2 & 3 & 3 & 3 & 3 & 5 \\ 
 DPP & 2 & 3 & 3 & 4 & 3 & 5 \\  
 QQP & 2 & 3 & 3 & 3 & 4 & 9 \\
 Petralia & 2 & 3 & 4 & 5 & 4 & 6
\end{tabular}
\end{center}
\end{table}

\paragraph{Air quality data}

We illustrate the proposed model on the bivariate \textit{Air Quality} dataset \citep{chambers1983graphical}, containing $n = 111$ air quality measurements, indicating the levels of New York's Ozone and Solar radiation in 1973. We fit to the Air Quality data a bivariate mixture of Gaussian distributions, with random number of components $M$ and Gaussian Coulomb prior for the location parameters. Note that each dimension of the mean vector is a-priori independent. The model is specified as follows: 
\begin{align}\label{eq:AirQuality_mixture}
	&\bm y_i \mid z_i, \bm \theta, M \sim \text{N}_2\left(\bm y_i \mid \bm \theta_{z_i}, \bm \Sigma_{z_i}\right) \quad i = 1, \dots, n \nonumber \\
	&\bm \theta_1, \dots, \bm \theta_M \mid M, \zeta \sim p_M\left(\bm \theta\right) \nonumber \\
        &\zeta \sim \text{Gamma}\left(1,1 \right) \nonumber \\
        &\bm \Sigma_1, \dots, \bm \Sigma_M \mid M \iid \text{inv-Wishart}\left(\nu, \bm \Psi \right) \nonumber \\
	&\mathbb{P}\left(z_i = h \mid \bm S, M\right) \propto S_h, \quad h = 1, \dots, M \nonumber \\
	&S_1, \dots, S_M \mid M \iid \text{Gamma}\left(\gamma_S, 1\right) \nonumber \\
	&M \sim \text{Poi}_{1}\left(\Lambda\right) \nonumber 
\end{align}
Note that the covariance matrices $\bm \Sigma_m$, for $m = 1, \dots, M$, are component-specific but modelled independently from the location parameters $\bm \theta$, as an i.i.d. sample from an inverse Wishart distribution with with $\nu = 6$ degree of freedoms and $\bm \Psi$ set equal to the 2-dimensional identity matrix. We also impose a prior distribution on the repulsion parameter $\zeta$. We compare our results with those obtained using the repulsive finite mixture model by \cite{quinlan2021class} with $M = 5$ components, and with the model implemented in the R package \texttt{AntMAN}. The latter corresponds to the the proposed model when the repulsive term is omitted, and we fix the hyperparameters $\gamma_S = \Lambda = 1$ in both models. After an initial burn-in of 100 iterations used to initialise the adaptive steps of the MCMC, all the simulations are run for a total of 7500 iterations. These include 2500 iterations discarded as burn-in and 5000 thinned to obtain a final sample of 2500 iterations, used for posterior inference. 

Figures \ref{fig:AirQuality_data_Mzeta}(a-b) show the posterior distribution of the number of clusters $K_n$ and components $M$ for the three models. The proposed approach is able to identify coarser partitions a-posteriori than the model of \cite{quinlan2021class} and the one without repulsion implemented in the \texttt{AntMAN} package. The posterior distribution of the repulsive parameter $\zeta$ is shown in Figure \ref{fig:AirQuality_data_Mzeta}(c). The partitions estimated by minimising the Binder loss function \citep{binder1978bayesian} are displayed, together with the contour plots of the predictive densities, in Figure \ref{fig:AirQuality_data_contour}. These figures show how the proposed method is able to gather in the same cluster points spread across a wider region, avoiding redundancy.
\begin{figure}[ht]
	\centering
 	\subfloat[{\tiny Number of clusters $K_n$}]{\includegraphics[width=0.35\linewidth]{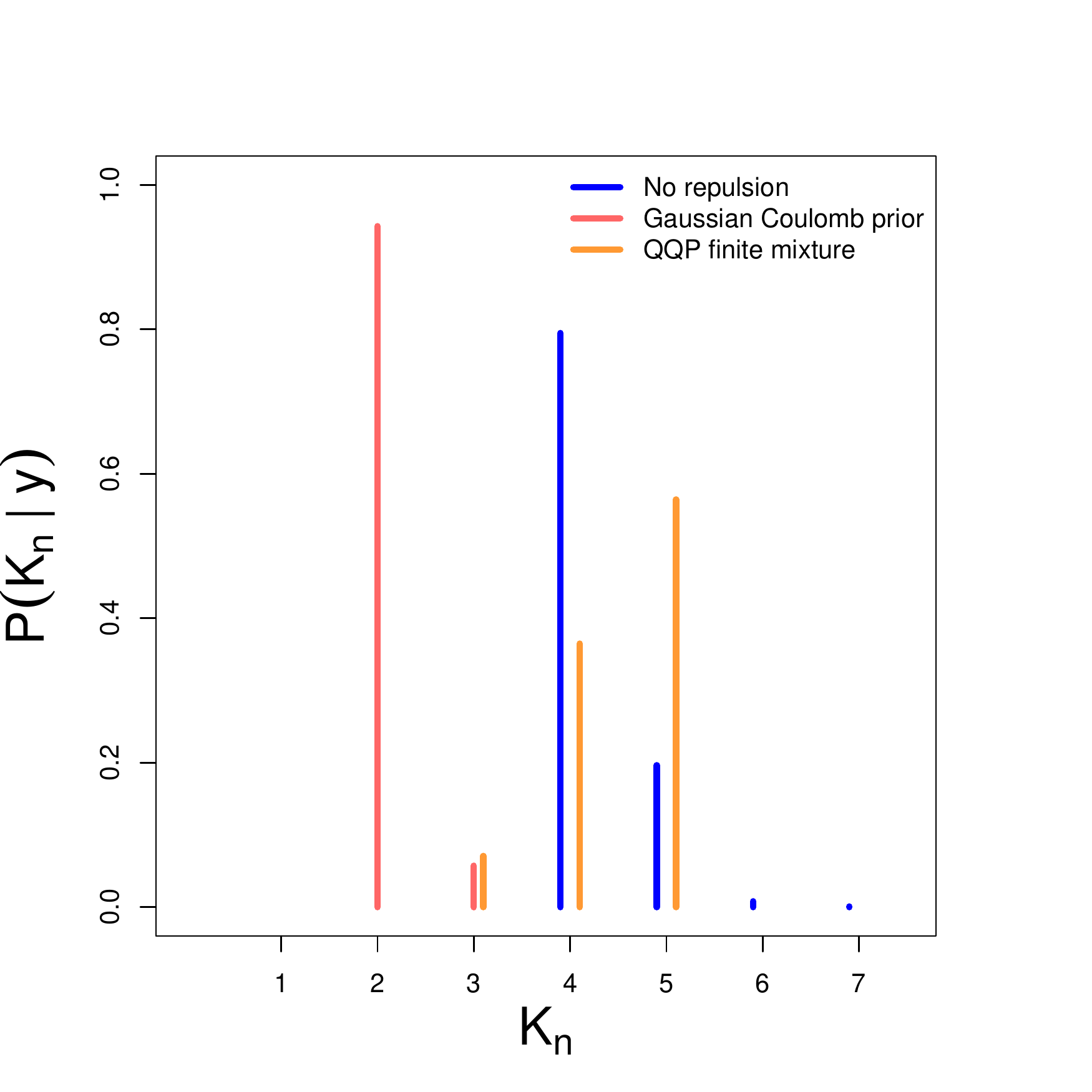}}
        \subfloat[{\tiny Number of components $M$}]{\includegraphics[width=0.35\linewidth]{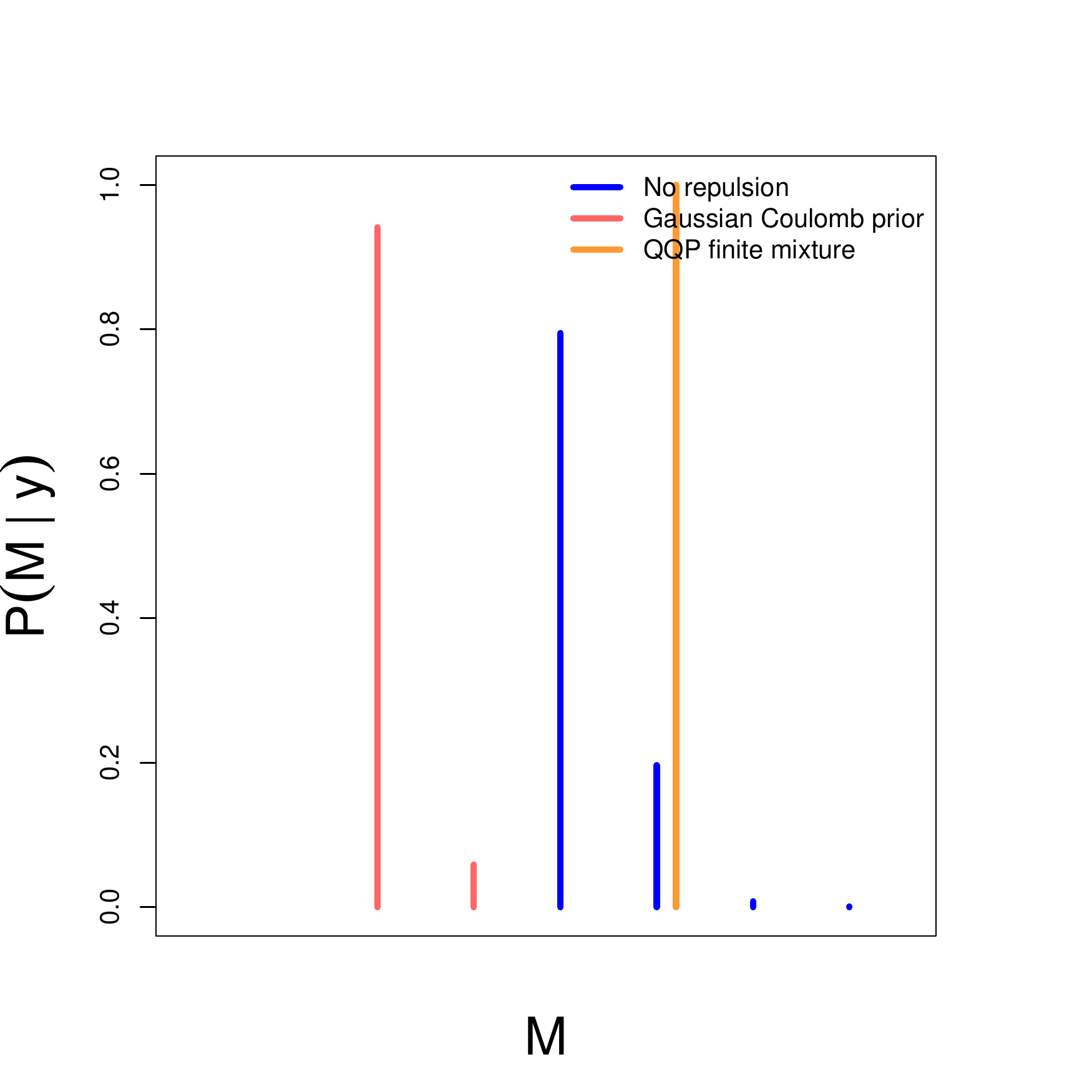}}
	\subfloat[{\tiny Repulsiveness $\zeta$}]{\includegraphics[width=0.35\linewidth]{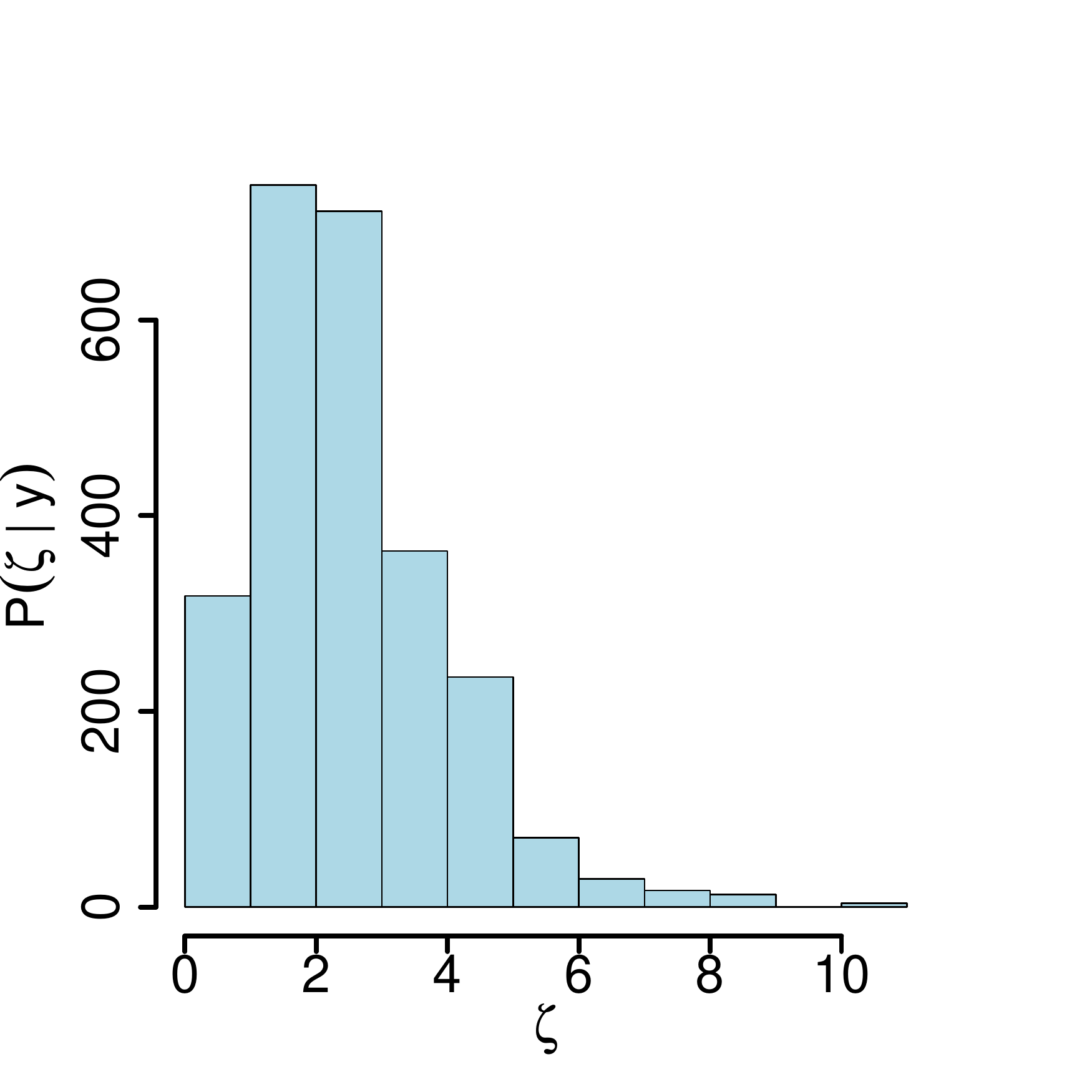}}
	\caption{Air Quality data. Posterior distribution of (a) the number of clusters $K_n$ and of (b) the number of components $M$ obtained under different modelling specifications. Panel (c) shows the posterior distribution of the repulsion parameter $\zeta$ of the Gaussian Coulomb prior.}
	\label{fig:AirQuality_data_Mzeta}
\end{figure}

\begin{figure}[ht]
	\centering
	\subfloat[{\tiny Gaussian Coulomb prior}]{\includegraphics[width=0.35\linewidth]{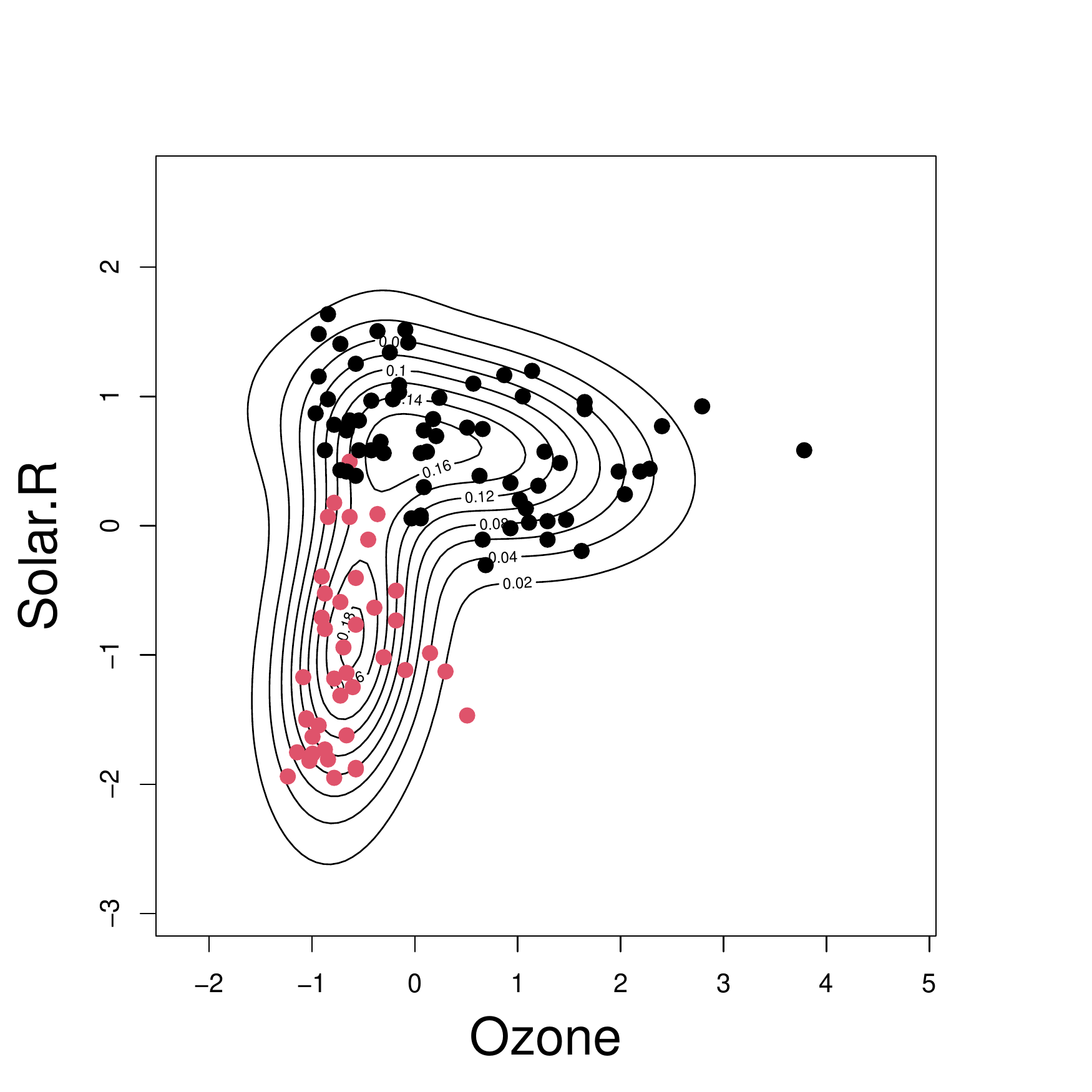}}
	\subfloat[{\tiny QQP}]{\includegraphics[width=0.35\linewidth]{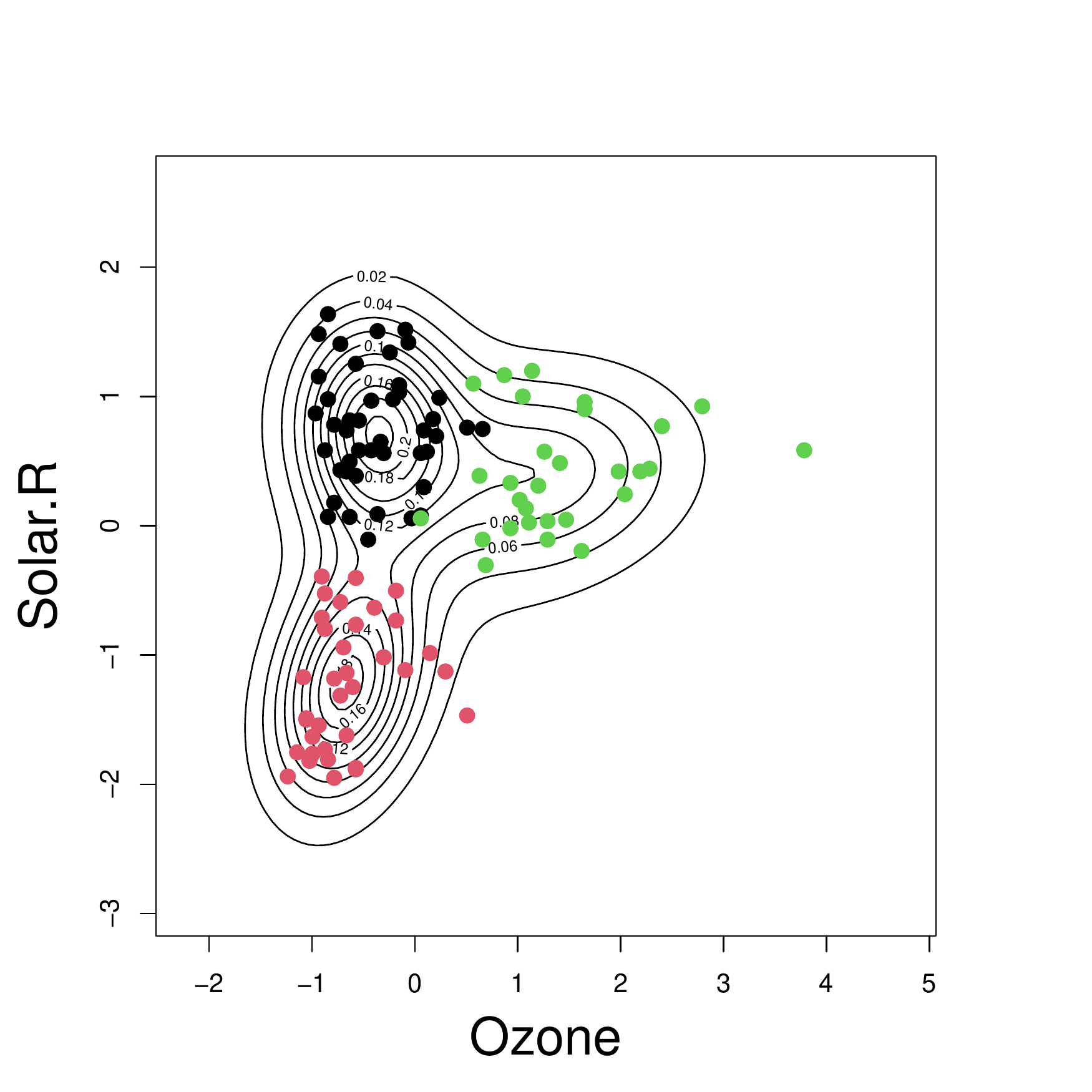}}
	\subfloat[{\tiny AntMAN}]{\includegraphics[width=0.35\linewidth]{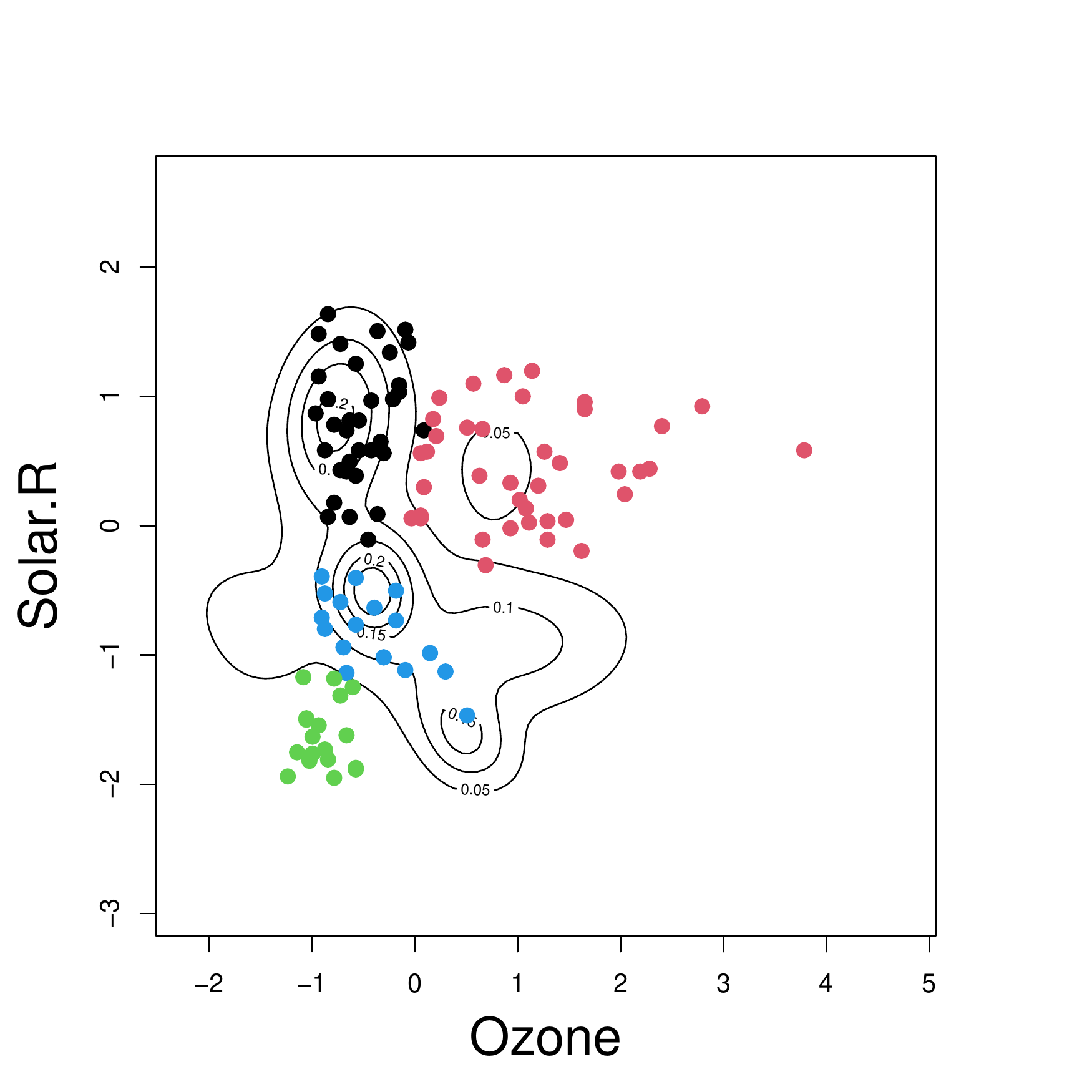}}
        \caption{Air Quality data. Contour plots of the predictive distributions obtained from a mixture model with   (a) the  Gaussian Coulomb prior; (b) the repulsive finite mixture proposed by \cite{quinlan2021class}; (c) non-repulsive prior as implemented in the R package \texttt{AntMAN}. Different colours refer to cluster assignment as estimated by minimising the Binder loss function.}
	\label{fig:AirQuality_data_contour}
\end{figure}

\section{Discussion}\label{sec:concl}

In this work, we develop a wide class of Bayesian repulsive mixture models that encourages well separated
clusters, with the goal of reducing potentially redundant components produced by independent priors for the location parameters. The prior distributions we propose are based on well-known eigenvalue distributions of specific random matrices, whose theoretical properties allow for efficient computation and robust posterior inference as compared to other repulsive priors. We refer to such prior distributions as Coulomb priors for their relationship with the joint Gibbs canonical distributions used to model Coulomb gases.

A key property of Coulomb priors is a soft penalisation of components close together, which leads to sparsity in the number of estimated components. A key advantage of our approach is the availability of the normalising constant in closed form, as well as the existence of a unique infinite dimensional distribution for $M\rightarrow \infty$. 
These properties allow specifying also a prior on the number of components in the mixture (differently from previous approaches), the ability of devising efficient computational schemes and the possibility of performing posterior inference on the repulsion parameters, leading to substantially improved clustering performance in general applications.
We show that compared to the independent prior on the component centres and competitor repulsive approaches, the Coulomb priors induce additional shrinkage effect on the tail probability of the posterior number of components, reducing model complexity. 

There are many extensions to our work. Promising venues of future research are the extension of dependent priors in the context of nested clustering, and the specification of a prior on the weights of the mixture which favours large components (as in \cite{fuquene2019choosing}), while maintaining a Coulomb prior on the locations.

\begin{appendices}

\section{Appendix A: Proofs}\label{appendix1}

%

\subsection{A.1: Proof of Lemma \ref{lemma1}}\label{appendix1:lemma1proof}
\lemmaone*

\begin{proof}
Note that since the unit square is compact, we can immediately conclude that the sequence is tight. Then, there exists a subsequence $\{ \mu_{N_i}\}$ that converges weakly to some $\hat{\mu} \in \mathcal{M}([0,1])$, and
\begin{equation*}
\begin{gathered}
    \lim_{i \to \infty} \int\limits_0^1\int\limits_0^1 F_{N_i}(x,y) d\mu_{N_i}(x) d\mu_{N_i} (y) = \\
    \liminf_{N \to \infty} \int\limits_0^1\int\limits_0^1 F_{N}(x,y) d\mu_{N}(x) d\mu_{N}(y)
\end{gathered}
\end{equation*}
Now,
\begin{align*}
\int\limits_0^1\int\limits_0^1 F(x,y) d\mu_0(x) d\mu_0(y) & \leq \int\limits_0^1\int\limits_0^1 F(x,y) d\hat{\mu}(x) d\hat{\mu}(y) \\ & = \sup_{R > 0} \int\limits_0^1\int\limits_0^1 F_R(x,y) d\hat{\mu}(x) d\hat{\mu}(y) \\ & = \sup_{R > 0} \lim_{N \to \infty} \int\limits_0^1\int\limits_0^1 F_{R,N}(x,y) d\mu_{N_i}(x)d\mu_{N_i}(y) \\ & \leq \lim_{N \to \infty} \int\limits_0^1\int\limits_0^1 F_{R,N}(x,y) d\mu_{N_i}(x)d\mu_{N_i}(y)
\end{align*}
which yields the desired inequality.
\end{proof}

\subsection{A.2: Proof of Lemma \ref{lemma2}}\label{appendix1:lemma2proof}
\lemmatwo*

\begin{proof}
We begin by rewriting the density \eqref{betaLDP} as
\begin{equation}
    \begin{gathered}
        \exp\left\{ -2 \frac{N^2}{M^2}  \sum_{i < j} F_{N} (\theta_i,\theta_j) \right\} \times \\
        \exp\left\{ \sum_{i = 1}^M\frac{\gamma(N_1)}{M}\log (\theta_i) + \frac{\gamma(N_2)}{M}\log(1 - \theta_i)  \right\}
    \end{gathered}
\end{equation}
Now, note that
\[ \sum_{i < j} F_{N} (\theta_i,\theta_j) = \iint\limits_{x \not= y} M^2 F_{N} (x,y) d\mu_{\bm U_M}(x) d\mu_{\bm U_M}(y) \]
Then,
\begin{align*}
\mathcal{B}_{N_1,N_2, \zeta, M} & \leq \left[ \int\limits_0^1 \exp\left( \frac{\gamma(N_1)}{M} \log(x) + \frac{\gamma(N_2)}{M} \log(1-x) \right) dx \right]^M \times \\ 
& \exp \left( - N^2 \iint\limits_{x \not= y} F_{N} (x,y) d\mu_{\bm U_M}(x) d\mu_{\bm U_M}(y) \right)
\end{align*}
Since
\[ \sup_{N \geq 1} \int\limits_0^1 \exp\left( \frac{\gamma(N_1)}{M} \log(x) + \frac{\gamma(N_2)}{M} \log(1-x) \right) dx < \infty \]
we have
\begin{align*} 
\limsup_{N \to \infty} \frac{1}{N^2} \mathcal{B}_{N_1,N_2, \zeta, M} & \leq - \liminf_{N \to \infty} \iint\limits_{x \not= y} F_{N} (x,y) d\mu_{\bm U_M}(x) d\mu_{\bm U_M}(y) \\ & \leq - \int\limits_0^1\int\limits_0^1 F(x,y) d\mu_0(x) d\mu_0(y)
\end{align*}
\end{proof}

\subsection{A.3: Proof of Lemma \ref{lemma3}}\label{appendix1:lemma3proof}
\lemmathree*

\begin{proof}
Let $\mu \in \mathcal{M}([0,1])$ and let $G$ be a neighbourhood of $\mu$. For $t \in [0,1]^M$, define $D_{t,M}$ to be the $M \times M$ diagonal matrix whose entries are given by $t$. Define $\tilde{G} := \{ t \in [0,1]^M : \mu_{D_{t,M}} \in G \}$. Then, letting $\nu$ denote the measure corresponding to the density $p(\bm \theta | \gamma)$, we have
\[ \mu_{\bm U_M}(G) = \nu(\tilde{G}) \]
Now, note that $\mu_{\bm U_M} \otimes \mu_{\bm U_M}\left( \{x=y\} \right) = \frac{1}{M}$, from which we see
\begin{equation*}
    \begin{gathered}
        \iint f_{R,N}(x,y) d\mu_{\bm U_M}(x) d\mu_{\bm U_M}(y) = \\
        \iint\limits_{x \not = y} f_{R,N}(x,y) d\mu_{\bm U_M}(x) d\mu_{\bm U_M}(y) + \frac{R}{M}
    \end{gathered}
\end{equation*}
Then, rewriting the density as before, we obtain
\begin{align*} 
&\mu_{\bm U}(G) = \nu(\tilde{G})  \\ & \leq \mathcal{B}_{N_1,N_2, \zeta, M}^{-1} \left[ \int\limits_0^1 \exp\left( \frac{\gamma(N_1)}{M} \log(x) + \frac{\gamma(N_2)}{M} \log(1-x) \right) dx \right]^M \\  & \times \exp \left( - N^2 \inf_{\sigma \in G} \iint\limits_{x \not= y} F_{R,N} (x,y) d\sigma(x) d\sigma(y)  + NR \right)
\end{align*}
for any $R > 0$. Moreover, we know that
\begin{equation*}
    \begin{gathered}
        \lim_{N \to \infty} \left( \inf_{\sigma \in G} \iint F_{R,N}(x,y) d\sigma(x)d\sigma(y) \right) = \\
        \inf_{\sigma \in G} \iint F_R(x,y) d\sigma(x) d\sigma(y)
    \end{gathered}
\end{equation*}
Hence
\begin{equation*}
    \begin{gathered}
        \limsup_{N \to \infty} \frac{1}{N^2} \log \mu_{\bm U_M}(G) \leq \\
        - \inf_{\sigma \in G} \iint F_R(x,y) d\sigma(x) d\sigma(y) - \liminf_{N \to \infty} \frac{1}{N^2} \mathcal{B}_{N_1,N_2, \zeta, M}
    \end{gathered}
\end{equation*}
Since $F_R(x,y)$ is bounded and continuous, it defines a continuous functional so that
\begin{equation*}
    \begin{gathered}
        \inf_{G} \left( \limsup_{N \to \infty} \frac{1}{N^2} \log (\mu_{\bm U_M}(G)) \right) \leq - \\
        \iint F_R(x,y) d\mu(x) d\mu(y) - \liminf_{N \to \infty} \frac{1}{N^2} \log \mathcal{B}_{N_1,N_2, \zeta, M}
    \end{gathered}
\end{equation*}
Taking the limit as $R \to \infty$ and applying the monotone convergence theorem yields the desired result.
\end{proof}

\subsection{A.4: Proof of Lemma \ref{lemma4}}\label{appendix1:lemma4proof}
\lemmafour*

\begin{proof}
 Without loss of generality, we may assume that $\mu$ has a continuous density $f$ on $[0,1]$. Then, there exists a $\epsilon > 0$ such that $\epsilon \leq f(x) \leq \frac{1}{\epsilon}$ for $x\in[0,1]$. Next, for each $N$, define constants 
\[ 0 = s_{0,N} < r_{1,N} < s_{1,N} < \cdots < r_{M,N} < s_{M,N} = 1 \]
such that
\[ \int\limits_0^{r_{i,N}} f(x) dx = \frac{i - 1/2}{M} \hspace{1pc} \mbox{and} \hspace{1pc} \int\limits_0^{s_{i,N}} f(x) dx = \frac{i}{M} \]
Then we have
\[ \frac{\epsilon}{2M} \leq s_{i,N} - r_{i,M} \leq \frac{1}{2M\epsilon} \]
Now, define
\[ \Delta_N = \left\{ (t_1, \dots, t_M) \in \mathbb{R}^M : r_{i,N} \leq t_i \leq s_{i,N} \right\} \]
For any neighbourhood $G$ of $\mu$ we can choose $N$ large enough so that $\Delta_N \subset \tilde{G}$. Thus,
\begin{align*} 
&\mu_{\bm U_M}(G) = \nu(\tilde{G}) \geq \nu(\Delta_N)  \\
&= \mathcal{B}_{N_1,N_2, \zeta, M}^{-1} \int\cdots\int\limits_{\Delta_N} \prod_{i = 1}^M t_i^{\gamma}(1 - t_i)^{\gamma} \prod_{i < j} (t_i - t_j) dt_1\dots dt_M \\
& \geq \mathcal{B}_{N_1,N_2, \zeta, M}^{-1} \left( \frac{\zeta}{2M} \right)^M \prod_{i = 1}^M r_{i,N}^{\gamma} \prod_{i < j} 	d_{i,j,N}
\end{align*}
where $d_{i,j,N} = \min\{ |x-y| : r_{i,N} \leq x \leq s_{i,N}, r_{j,N} \leq y \leq s_{j,N}\}$. Now, we observe that
\begin{equation*}
    \begin{gathered}
        \lim_{N \to \infty} \sum_{i = 1}^M \frac{\gamma(N_1)}{N^2} \log r_{i,N} + \frac{\gamma(N_2)}{N^2} log(1 - r_{i,N}) = \\
        A \int \log x + \log(1-x) d\mu(x)
    \end{gathered}
\end{equation*}
and
\begin{equation*}
    \lim_{N \to \infty} \frac{1}{N^2} \sum_{i < j} \log(r_{j,N} - s_{i,N}) = a^2\zeta \int \log|x-y| d\mu(x)d\mu(y)
\end{equation*}
Thus,
\begin{equation*}
    \begin{gathered}
        \limsup_{N \to \infty} \frac{1}{N^2} \log \mu_{\bm U_M}(G) \geq \\
        - \iint F(x,y) d\mu(x)d\mu(y) - \liminf_{N \to \infty} \frac{1}{N^2} \log \mathcal{B}_{N_1,N_2, \zeta, M}
    \end{gathered}
\end{equation*}
After taking the infimum over $\mu$, this implies
\[ \liminf_{N \to \infty} \frac{1}{N^2} \log \mathcal{B}_{N_1,N_2, \zeta, M} \geq - \iint F(x,y) d\mu_0(x)d\mu_0(y) \]
Moreover, we have
\begin{equation*}
    \begin{gathered}
        \liminf_{N \to \infty} \log \mu_{\bm U_M}(G) \geq \\
        - \iint F(x,y)d\mu(x)d\mu(y) - \limsup_{N \to infty} \frac{1}{N^2} \log \mathcal{B}_{N_1,N_2, \zeta, M}
    \end{gathered}
\end{equation*}
\end{proof}

\end{appendices}

%
%
%

\clearpage
\bibliographystyle{plainnat}
\bibliography{Biblio}

\end{document}